    \titlespacing\section{0pt}{12pt plus 2pt minus 2pt}{12pt plus 2pt minus 2pt}
    \titlespacing\subsection{0pt}{12pt plus 2pt minus 2pt}{12pt plus 2pt minus 2pt}
    \titlespacing\subsubsection{0pt}{12pt plus 2pt minus 2pt}{12pt plus 2pt minus 2pt}
    \newcommand{\ignore}[1]{}
    \newcommand{\distiid}{\overset{iid}{\sim}}
    \newcommand{\asconv}{\xrightarrow{a.s.}}
    \newcommand\independent{\protect\mathpalette{\protect\independenT}{\perp}}
    \def\independenT#1#2{\mathrel{\rlap{$#1#2$}\mkern2mu{#1#2}}}
    \newtheorem{thm}{Theorem}
    \newtheorem{lem}{Lemma}
    \newtheorem{cor}{Corollary}
    \newtheorem{prop}{Proposition}
    \author{\textbf{Edward McFowland III}\\
       Technology Operations and Management\\
       Harvard Business School\\
       Boston, MA 02163, USA \\\\
        \textbf{Sriram Somanchi} \\
       IT, Analytics, and Operations\\
       University of Notre Dame\\\\
        \textbf{Daniel B. Neill}\\
       Machine Learning for Good Laboratory\\
       New York University
       }
 \date{}   
\title{Efficient Discovery of Heterogeneous Quantile Treatment Effects in Randomized Experiments via Anomalous Pattern Detection}
\begin{document}
\maketitle
\begin{abstract}
  In the recent literature
  on estimating heterogeneous treatment effects, each proposed method makes its own set of
  restrictive assumptions about the intervention's effects and which subpopulations to explicitly estimate.
  Moreover, the majority of the literature provides no mechanism to identify which
  subpopulations are the most affected--beyond manual inspection--and provides
  little guarantee on the correctness of the identified subpopulations. Therefore,
  we propose Treatment Effect Subset Scan (TESS), a new method for discovering
  which subpopulation in a randomized experiment is most significantly affected by
  a treatment. We frame this challenge as a
  pattern detection problem where we efficiently maximize a nonparametric scan statistic (a measure of the conditional quantile treatment effect)
  over subpopulations. Furthermore,
  we identify the subpopulation which experiences the largest distributional
  change as a result of the intervention, while making minimal assumptions about
  the intervention's effects or the underlying data generating process. In addition
  to the algorithm, we demonstrate that under the sharp null hypothesis of no treatment effect, the asymptotic Type I and II error can
  be controlled, and provide sufficient conditions for detection
  consistency--i.e., exact identification of the affected subpopulation. Finally, we
  validate the efficacy of the method by discovering heterogeneous treatment
  effects in simulations and in real-world data from a well-known program
  evaluation study.
\end{abstract}

\vfill

\newpage
\section{Introduction}
\label{sec:Intro}
The randomized experiment is employed across many empirical
disciplines as an
important tool for discovery, by estimating the causal impact of a
particular stimulus, treatment or intervention. Moreover, the increasing popularity of large-scale experiments~\citep{kohavi-online_experiments-2013} has resulted in a
widespread interest in discovering fine-grained truths about experimental
units, most prominently in the form of heterogeneous treatment effects (HTE). Discovering heterogeneity can be challenging because there are
exponentially many subpopulations--with respect to the number of
observable covariates--to consider, potentially resulting in multiple hypothesis testing issues and raising 
questions of unprincipled post-hoc investigation: searching for a fortuitously 
statistically significant result \citep{assmann-subgroup-2000, 
weisberg-subgroup-2015}. Nevertheless, uncovering affected 
subpopulations can lead to important scientific progress. In a ``step toward a new 
frontier of personalized medicine''~\citep{saul-bidil-2005}, the FDA approved the 
first race-specific drug, whose impact on African-American subjects was first 
discovered post-hoc from more general experiments~ \citep{cohn-bidil-1986, 
cohn-bidil-1991}. Conversely, the Perry preschool experiment found significant effects of preschool education on educational and life outcomes
\citep{barnett-perry-1985,schweinhart-perry-1993,angrist-mhe-2008}, while a
re-analysis focused on heterogeneity and multiple hypothesis testing concluded that 
only girls experience these benefits~\citep{anderson-perry-2008}. The original Perry preschool results
were fundamental to the creation of the Head Start preschool program 
\citep{angrist-mhe-2008} a national social program that provides, among other 
services, early childhood education to low-income children. If large-scale medical 
and policy decisions are made as a result of such experiments, then it is clear that 
identifying whether there is heterogeneity in treatment effects should be an integral
component of the analysis.

In this work we propose a novel computationally efficient framework--Treatment Effect Subset Scanning
(TESS)--for \emph{discovering} which subpopulations in a randomized experiment are the most significantly affected by a treatment. The contributions of this work can be summarized as follows:
\begin{itemize}
    \item Our TESS algorithm enables efficient discovery of subpopulations where the individuals
    affected by the treatment have observed outcome distributions that are unexpected given the distributions of their
    corresponding control groups. \ignore{ Unlike the standard approaches, TESS frames
    the challenge of discovering causal effects in subpopulations as one
    of~\emph{anomalous pattern detection}, and provides a computationally
    efficient approach for finding conditionally optimal subpopulations.
    Therefore, TESS is a novel contribution to the burgeoning literature on
    subset scanning, providing conditions under which the linear-time subset scanning property~\citep{neill-ltss-2012}
    can be exploited in the context of high-dimensional tensors, and thus
    extending the applicability of this efficient optimization approach beyond the standard low-dimensional
    context~\citep{neill-ltss-2012, mcfowland-fgss-2013,
    speakman-graphscan-2015}.}
    
    \item We formalize the objective of identifying subpopulations with significant distributional treatment effects by developing a new measure and test statistic for heterogeneous quantile treatment effects.

    \item We provide
    theoretical results on the detection properties of TESS. When the maximum subpopulation score identified by TESS is used
    as a test statistic under the sharp null hypothesis of no treatment effect, we
    demonstrate the conditions under which the Type I (Theorem
    \ref{thm:false_posotive}) and Type II (Theorem \ref{thm:power}) errors can
    jointly be controlled asymptotically. Furthermore, we provide sufficient conditions on
    how ``homogeneous'' (Theorem \ref{thm:subset_homo}) and ``strong'' (Theorem
    \ref{thm:subset_strength}) the treatment effect must be across the affected
    subpopulation, such that the TESS test statistic is maximized at the precisely correct
    subpopulation. Finally, we show that asymptotically these conditions are met (Theorems \ref{thm:assympt_hom_str} and \ref{thm:assympt_TESS}), guaranteeing that, in the large-sample limit, TESS will recover the precisely correct subpopulation.
    
    \item In the process of developing theory for TESS, we prove results
    for the general nonparametric scan statistic (NPSS), which has been used in
    the scan statistics literature~\citep{mcfowland-fgss-2013,
    feng-npss_graph-2014}. We are the first to provide theoretical
    guarantees on the detection behavior of subset scanning
    algorithms. Furthermore, our theory is derived for the higher
    dimensional (tensor) context, with nonparametric score functions, and our
    results directly hold for the lower-dimensional and parametric cases as well.
    
    \item Our empirical results (\S\ref{sec:star-eda}) provide useful insights to practitioners, revealing a
    potentially affected subpopulation in the Tennessee STAR study of class size and educational outcomes, who may have benefited from an
    intervention (the use of a teacher's aide) that was generally considered ineffective.
\end{itemize}

These contributions are enabled by structuring the question of causal inference as one of \textit{anomalous pattern detection} and effect maximization, rather than model fitting and risk minimization. In some contexts, the standard approach of learning an overall
model of the treatment effect response surface is desirable; however, in
many cases, the identification of affected subpopulations is the primary goal and model learning is simply a step toward this
goal. For these cases it seems prudent and efficient to circumvent this
first step and solve the subpopulation identification problem by framing it
as one of pattern or subset discovery. Such a framing has not previously been considered in the literature.

\section{Heterogeneous Quantile Treatment Effects}
\label{sec:hetero_quant_effects}
Most contemporary causal methods are estimators for the (conditional) average treatment effects, or CATE, $\tau_{CATE}(x) = \mathbb{E}\left[Y(1)-Y(0) | X = x \right]$, which in turn limits empirical studies of treatment effects from considering effects beyond mean shifts~\citep{abadie-qte-2002}. However, social scientists argue that effects can greatly vary along the outcome distribution, and distributional impacts beyond the average effect are critical for policy-makers, across a wide range of social programs~\citep{firpo-qte-2007,abadie-qte-2002,chernozhukov-ivqte-2005,schiele-qte-2016}. The primary distributional alternative to ATEs has been Quantile Treatment Effects (QTE) and the subsequent conditional QTEs, or CQTE~\citep{firpo-qte-2007, koenker-QTE-1978, koenker-hqte-2010, chernozhukov-QTE-generalization-2013} at a given quantile $\alpha$:
\begin{equation}
  \label{eq:CQTE}
    \tau_{\text{CQTE}_{\alpha}}(x) = \mathbb{F}_{Y(1)|X=x}^{-1}(\alpha) - \mathbb{F}_{Y(0)|X=x}^{-1}(\alpha),
\end{equation}
where $\mathbb{F}_{Y(1)|X=x}^{-1}$ and $\mathbb{F}_{Y(0)|X=x}^{-1}$ denote the inverse cumulative distribution functions of the outcome $Y$, conditional on covariates $X=x$, under the counterfactual assignments to the treatment group ($W=1$) and control group ($W=0$) respectively. 

In this work, we consider the challenge of~\textit{heterogeneous} quantile treatment effects, i.e.,~\textbf{detecting} the existence of a subpopulation $S$ (characterized by a subset of values for each attribute) for which the CQTE is non-zero, at some quantile, even if there is not a significant effect in the overall population. This motivates the need for a measurement of the heterogeneous treatment effect $\tau_{\text{CQTE}_\alpha}(S)$ and a corresponding test statistic $F_\alpha(S)$ that can be optimized over both subpopulations $S$ and quantiles $\alpha$, capturing unknown heterogeneity in both covariates and the treatment effect distribution, respectively. While a simple extension to \eqref{eq:CQTE},
\begin{equation}
    \label{eq:CQTE(S)-simple}
    \max_S \tau_{\text{CQTE}}(S) = \max_S \max_\alpha \tau_{\text{CQTE}_\alpha}(S) = \max_S \max_\alpha \mathbb{F}_{Y(1)|X\in S}^{-1}(\alpha) - \mathbb{F}_{Y(0)|X\in S}^{-1}(\alpha),
\end{equation}
may appear to be an attractive alternative, this formulation is inadequate for detecting subpopulations with distributional effects. Note that in~\eqref{eq:CQTE(S)-simple} the effect is represented by a difference in scalar summaries of the potential outcome distributions, instead of capturing ~\textit{full} distributional effect~\citep{chernozhukov-QTE-generalization-2013,van-structural-2014}. Moreover, it~\textit{first} aggregates $\mathbb{F}_{Y(W)|X=x}~\forall x\in S$ to construct $\mathbb{F}_{Y(W)|X \in S}$ and then compares these aggregate conditional distributions, instead of~\textit{first} comparing each $\mathbb{F}_{Y(1)|X=x}$ to the corresponding $\mathbb{F}_{Y(0)|X = x} ~\forall x\in S$ and~\textit{then} aggregating.\footnote{When the effects of interest are simply differences in scalar summaries of each potential outcome distribution, the results are equivalent for any order of comparison and aggregation. However, for more general distributional effects, this equivalence does not hold.}  The effect of interest can easily be obfuscated when aggregating before comparing: consider that $Y(W)|X=x$ need not be on the same scale for different $x\in S$.  Additionally,~\eqref{eq:CQTE(S)-simple} tends to be maximized at extreme values $\alpha\approx 0$ and $\alpha\approx 1$ and is thus highly sensitive to outliers, losing power to detect QTEs occurring at non-extreme values of $\alpha$. Finally,~\eqref{eq:CQTE(S)-simple} fails to appropriately calibrate the treatment effect across potential subpopulations $S$ of varying sizes, and therefore in~\eqref{eq:CQTE(S)-simple} the optimal $S$ equates to $\max_{x,\alpha} \tau_{\text{CQTE}_{\alpha}}(x)$, i.e., a singular covariate profile (see proof in Appendix~\ref{sec:scoring_functions}). This last issue also arises when maximizing other popular conditional treatment effect estimands in the literature, such as CATE, over subpopulations. \ignore{$\mathbb{F}_{Y|X \in S} = \mathbb{E}\left[\mathbb{F}_{Y}|X = x, X\in S\right]$}

To avoid these limitations, we first recognize that our primary goals are to \emph{test} the null hypothesis
\begin{equation*}
\mathbb{F}_{Y(0)|X=x}^{-1}(\alpha) = \mathbb{F}_{Y(1)|X=x}^{-1}(\alpha),
\end{equation*}
for $\alpha \in (0,1)$ and $\forall x$, and to \emph{detect} subpopulations $S$ for which the two counterfactual outcome distributions differ significantly. An equivalent test is for
\begin{equation*}
\mathbb{F}_{Y(1)|X=x}(\mathbb{F}_{Y(0)|X=x}^{-1}(\alpha)) = \mathbb{F}_{Y(1)|X=x}(\mathbb{F}_{Y(1)|X=x}^{-1}(\alpha)) = \alpha.
\end{equation*}
Moreover, re-defining $\tau_{\text{CQTE}_{\alpha}}(x) = \mathbb{F}_{Y(1)|X=x}(\mathbb{F}_{Y(0)|X=x}^{-1}(\alpha))$ captures the \textit{full} distributional effect (as demonstrated by~\cite{chernozhukov-QTE-generalization-2013,van-structural-2014}) prior to aggregation, and is constrained to $(0,1)$, allowing coherent aggregation and calibration over $x\in S$.  More precisely, we can define $\tau_{\text{CQTE}_\alpha}(S) = \sum_{x \in S} \tau_{\text{CQTE}_\alpha}(x) P(X=x \mid X \in S)$, and then define a test statistic $F_\alpha(S)$ to measure the significance of the divergence between $\tau_{\text{CQTE}_\alpha}(S)$ and $\alpha$.

Therefore, we can make specific, simple, and testable assumptions about the relationships between each $\mathbb{F}_{Y(0)|X=x}$ and $\mathbb{F}_{Y(1)|X=x}$, under the null and alternative hypotheses, and construct a generalized likelihood ratio test that maximizes detection power for distinguishing these hypotheses:
\begin{flalign}
  \label{eq:test_ceqte}
  &\!\begin{aligned}
  H_0:~ &\tau_{\text{CQTE}_{\alpha}}(x) = \alpha \quad \forall x, \alpha &\\
  H_1\left(S\right):~ &\begin{cases}
  \exists~\beta,\alpha,~\text{with}~ \beta>\alpha,~ s.t.~ &\tau_{\text{CQTE}_{\alpha}}(x) = \beta \quad \forall x \in S,\\ 
  \forall \alpha~ &\tau_{\text{CQTE}_{\alpha}}(x) = \alpha \quad \forall x \not\in S. 
\end{cases}
  \end{aligned}&
\end{flalign}
We define $H_1(S)$ as in~\eqref{eq:test_ceqte} (with constant $\beta_x = \beta$) because 
of our interest in detecting subsets $S$ where $\mathbb{F}_{Y(1)|X=x}$ 
differs~\emph{systematically} from $\mathbb{F}_{Y(0)|X=x}$ for $x \in S$, thus grouping together covariate 
profiles that exhibit similar treatment effects, rather than massively overfitting
to individual covariate profiles.\footnote{This is analogous to tree-based 
methods which assume the same conditional average treatment effect (CATE) for all cells assigned to
a given leaf, but rather than looking for a mean shift, we measure how much of the probability 
density of $Y(1)$ has been ``shifted'' into the $\alpha$-tail of $Y(0)$.} Moreover, if
$\mathbb{F}_{Y(1)|X=x} \ne \mathbb{F}_{Y(0)|X=x}$, we know that there exists some $\alpha$ and some subset $S$ such that $\tau_{\text{CQTE}_\alpha}(S) = \beta \ne \alpha$, while if $\mathbb{F}_{Y(1)|X=x} = \mathbb{F}_{Y(0)|X=x}$, then $\beta = \alpha$ everywhere and there is no shift. Therefore, defining the alternative in this way allows us to prove desirable theoretical results both for detection power and for subset correctness, as described in Theorems \ref{thm:false_posotive}-\ref{thm:power} and \ref{thm:subset_homo}-\ref{thm:assympt_TESS} respectively.

As we show in Appendix~\ref{sec:scoring_functions}, the log-likelihood ratio statistic for~\eqref{eq:test_ceqte}, for a given sample, corresponds to the Berk-Jones nonparametric scan statistic~\citep{mcfowland-fgss-2013,berk-bj-1979}:
\begin{equation}
    \label{eq:BJ-teststat}
  \max_S F(S) = \max_{S,\alpha}F_{\alpha}^{BJ}(S) = \max_{S,\alpha}N(S)KL\left(  \hat{\tau}_{\text{CQTE}_{\alpha}}(S) , \alpha \right),
\end{equation}
where each $\hat{\tau}_{\text{CQTE}_{\alpha}}(x)$ is computed using its potential outcome empirical distribution, $N(S)$ is the number of treatment group units, and $KL(\beta, \alpha)=\beta \log \frac{\beta}{\alpha} + (1-\beta) \log \frac{1-\beta}{1-\alpha}$ is the Kullback-Leibler divergence between Bernoulli distributions with the corresponding parameters. The statistic in~\eqref{eq:BJ-teststat} includes a maximization over subpopulations $S$ and thresholds $\alpha$ to identify the most significantly affected (highest scoring) subpopulation, and its significance can then be determined by a randomization test, appropriately controlling for multiple testing.

\section{Treatment Effect Subset Scanning}
\label{sec:tess}
Treatment Effect Subset Scan (TESS) is a novel framework for identifying
subpopulations in a randomized experiment which experience treatment
effects, built atop the heterogeneous quantile treatment effect test statistic established in~\eqref{eq:BJ-teststat}.
Unlike previous methods, TESS structures the challenge of treatment effect
identification as an anomalous pattern detection problem--where the objective is
to identify patterns of systematic deviations away from expectation--which is
then solved by scanning over subpopulations. TESS therefore searches for subsets
of values of each attribute for which the distributions of outcomes in the
treatment groups are systematically anomalous, i.e., significantly different
from their expectation as derived from the control group. More precisely, we
define a real-valued outcome of interest $Y$ and a set of discrete covariates $X =
(X^{1}, \ldots, X^{d}$), where each $X^{j}$ can take on a vector of values
$V^{j}=\{v^{j}_{m}\}_{m = 1...|V^j|}$. We note that continuous covariates can be discretized into categories, using the observed covariate distribution or domain knowledge.\footnote{An extension could include considering the intervals of the continuous covariate created by each of its unique split points (realized values) in the data; this is similar to how tree-based methods determine discrete splits on continuous variables.} With continuous and discrete covariates, the distribution and quantile functions are well-defined and unique for all levels
$\alpha\in (0, 1)$ when the outcome $Y$ is real-valued. We define the arity of covariate $X^{j}$
as $|V^{j}|$ (i.e., the cardinality of $V^j$) and note that for any covariate profile $x$ (i.e., 
a realization of $X$) it follows that $x \in V^1 \times \ldots \times V^d$. We
then define a dataset as a sample
$\mathcal{N}$ composed of $n$ records (units) $\{R_{1}, \ldots, R_{n}\}$,
drawn independently and identically distributed from population $\mathcal{P}$. Each 3-tuple $R_{i} = (Y^{\text{obs}}_i, X_i, W_i)$ is described by an observed potential outcome $Y^{\text{obs}}_i = Y_i(W_i)$, covariates $X_i$, and an
indicator variable $W_i$, which indicates if the unit was randomly assigned to
the treatment condition; see Table~\ref{table:example_data} for a demonstrative
example. We define the subpopulations $S$ under consideration to be $S = v^1
\times \ldots \times v^d$, where $v^j \subseteq V^j$. Therefore, we consider subsets $S$ representing \emph{subspaces} of the attribute space, i.e., the Cartesian product of a subset of values for each attribute. This is important because the treatment of interest may affect multiple values, e.g., African-Americans \textit{or} Hispanics who live in New York \textit{or} Pennsylvania. Finally, we wish to find the most
anomalous subset
\begin{equation}
  \label{eqn:objective_function}
  S^\ast = v^{1*}\times \ldots \times v^{d*} = \arg \max_{S} F(S),
\end{equation}
where $F(S)$ is commonly referred to in the anomalous pattern detection literature
as a score function, to measure the anomalousness of a subset $S$. In the context of TESS, this function is a test statistic of the treatment effect--i.e., the
divergence between the treatment and control group--in subpopulation $S$, like the one defined in \eqref{eq:BJ-teststat}.

\begin{table}
  \captionsetup{font=scriptsize,skip=0pt}
  \centering
  \resizebox{0.33\columnwidth}{!}{
  \begin{tabular}{@{}ccccc@{}}
    \toprule
      Record & $Y$  & $X^{\text{gender}}$ & $X^{\text{race}}$ & $W$ \\ \midrule
      1      & 2.35 & Female              & Black             & 1   \\
      2      & 2.06 & Female              & White             & 1   \\
      3      & 2.92 & Male                & Black             & 1   \\
      4      & 2.27 & Male                & White             & 1   \\
      5      & 1.73 & Female              & Black             & 0   \\
      6      & 1.84 & Female              & White             & 0   \\
      7      & 1.7  & Male                & Black             & 0   \\
      8      & 1.59 & Male                & White             & 0   \\ \bottomrule
  \end{tabular}
}
  \caption{This table is a demonstrative dataset of $n=8$ records, with a $d=2$
  sized vector of covariates, $X = (X^{\text{gender}},X^{\text{race}})$. The
  first, $X^{\text{gender}}$, can take values in
  $V^{\text{gender}}=\{\text{Female, Male}\}$, and the second
  $X^{\text{race}}$ can take values in $V^{\text{race}}=\{\text{Black,
  White}\}$. A covariate profile $x$, and realization of $X$, is an element in
  the set of all covariate profiles $V^{\text{race}} \times
  V^{\text{gender}} = \{(\text{Female, Black}), (\text{Female,
  White}), (\text{Male, Black}), (\text{Male, White})\}$.}
  \label{table:example_data}
\end{table}

We accomplish this by first partitioning the experimental dataset into control
and treatment groups, and passing the groups to the TESS algorithm.
For each unique covariate profile $x$ in the treatment group, TESS uses the
control group to compute a conditional outcome distribution $\hat{\mathbb{F}}_{Y^C \mid X=x}$,
providing an estimate of the conditional outcome distribution under the
null hypothesis $H_{0}$ that the treatment has no effect on units with this
profile. Then for each record $R_{i}$ in the treatment group, TESS computes an empirical $p$-value $\hat{p}_{i}$, which
serves as a measure of how uncommon it is to see an outcome as extreme as
$Y^{\text{obs}}_i$ given $X=x_{i}$ under $H_0$. The ultimate goal of TESS is to discover
subpopulations $S$ with a large amount of evidence against $H_0$, i.e., the outcomes
of units in $S$ are consistently extreme given $H_0$. Thus,
TESS searches for subpopulations which contain an unexpectedly large number of
low (significant) empirical $p$-values, as such a subpopulation is more
likely to have been affected by the treatment.

\subsection{Estimating Reference Distributions}
\label{sec:ref-dist}
After partitioning the data into treatment and control groups, the TESS
framework obtains an estimate of the reference distribution for each unique covariate profile in the treatment
group. To obtain the estimates of $\mathbb{F}_{Y(0) \mid X}~\forall X$, TESS relies on two assumptions: randomization
and a sharp null hypothesis of no treatment effect. First, randomization implies that the potential outcomes $ Y_i(0), Y_i(1) \independent W_i~\forall R_i$: selection into treatment and control groups is completely random.
Secondly, the sharp null hypothesis that no
subpopulation is affected by the treatment implies that $\mathbb{F}_{Y(0) \mid X}
= \mathbb{F}_{Y(1) \mid X}$. With these two assumptions in hand, the TESS framework includes two options for estimating the necessary reference distributions. The first is more flexible but may encounter estimation challenges in extremely sparse, high-dimensional settings; the second is useful for higher-dimensional settings, but requires additional structural assumptions on the data generating process. Finally, given a chosen procedure for estimating reference distributions, TESS uses the distributions to convert each observed outcome $Y^{\text{obs}}_i$ (for data records $R_i$ in the treatment group) to an empirical $p$-value range, capturing how ``anomalous''
that outcome is given its reference distribution.

\subsubsection{Empirical Distribution Estimation}
\label{sec:sec:ref-dist-emp}
The first option we present for deriving reference distributions involves estimating the empirical conditional probability function as follows:
\begin{equation}
    \label{eq:control-edf}
    \hat{\mathbb{F}}_{Y^C|X}(y|x) = \frac{\sum_{i=1}^n{\zeta_i(x)\mathbbm{1}_{\{Y^{\text{obs}}_i \le y\}}}}{\sum_{i=1}^n \zeta_i(x)},
\end{equation}
representing a weighted average across data units, with a weight function defined as
\begin{equation}
    \label{eq:control-edf-weights}
    \zeta_i(x) = \mathbbm{1}_{\{W_i = 0, X_i = x\}}.
\end{equation}
For any record $R_j$ in the treatment group (i.e., with $W_j = 1$) we can use $\hat{\mathbb{F}}_{Y^C|X}(\cdot|x_j)$ as an estimate of its distribution function. When we use \eqref{eq:control-edf-weights} as the weight definition, then \eqref{eq:control-edf} amounts to the empirical density function derived from the control units that share covariate profile $X = x_j$.
Moreover, it follows directly from TESS's assumption of randomization and the Glivenko-Cantelli Theorem~\citep{gaenssler-glivenko_cantelli-2004} that $\hat{\mathbb{F}}_{Y^C \mid X} \xrightarrow{a.s.} \mathbb{F}_{Y(0) \mid X}$. Therefore, TESS can use $\hat{\mathbb{F}}_{Y^C \mid X}$ as
an unbiased and strongly-consistent estimator of the unknown $\mathbb{F}_{Y(1) \mid X}$ under
$H_{0}$. Intuitively, under this sharp null, the
outcomes of the treatment and control groups are drawn from the same
distribution, allowing $\hat{\mathbb{F}}_{Y^C \mid X = x}$ to serve as an outcome
reference distribution for treatment units with covariate
profile $X = x$. 

\subsubsection{Model-based Estimation}
\label{sec:model-based estimation}
Although we define and estimate \eqref{eq:control-edf} individually for
each unique covariate profile $X=x$ using the empirical distribution function, we note that
TESS only requires some means of computing the conditional probability of observing each treatment unit outcome. The empirical distribution allows TESS to
accommodate arbitrary differences in conditional outcome distributions
across covariate profiles, enabling general applicability without
a priori contextual knowledge. However, it is also possible to combine data across profiles to
estimate
the conditional probability distributions. This aggregation of information can help alleviate challenges that arise when there is data sparsity, i.e., when there are covariate profiles present in the treatment group that have few or no corresponding control data records. Intuitively, ``neighboring'' covariate profiles in the control group can be pooled and leveraged to improve local estimation. However, this improved estimation comes by imposing additional structure or assumptions on the underlying data generating process.

Statistical learning offers many options for distribution (or density) estimation, any of which can be utilized in TESS. We identify the Random Forest estimator that underpins the Quantile Regression Forests algorithm~\citep{meinshausen-quantileforests-2006} as an attractive alternative to the purely empirical estimator described above. Random Forest can be cast as an adaptive locally weighted estimator, where the forest places more weight on observations with more similar covariates. Therefore, TESS can learn a Random Forest on the control data, still using \eqref{eq:control-edf} as its reference distribution, but redefining its weights as:
\begin{equation}
    \label{eq:control-forest-weights}
    \zeta_i(x_j) = \frac{1}{B}\sum_{b=1}^B \mathbbm{1}_{\{W_i = 0, X_i \in L_{b}(x_j)\}},
\end{equation}
where $B$ corresponds to the number of trees in the forest, and $L_{b}(x_j)$ captures the leaf node--i.e., a subset of the covariate space--of tree $b$ that $x_j$ falls into. Therefore, \eqref{eq:control-forest-weights} can be seen as a relaxation of \eqref{eq:control-edf-weights}, where a control unit can have non-zero weight even if its profile does not match $x_j$ precisely. Also the weights are adaptive and more smoothly increase with how similar a control unit is to the treatment unit. Intuitively, this adaptive similarity ``kernel'' is particularly helpful in sparse and/or high-dimensional settings, where the curse of dimensionality makes estimation challenging, because it allows for local estimation within covariate subspaces of similar units. Importantly, this similarity is measured along the subset of dimensions that are discovered as relevant (via the random forest learning procedure), which manifests as how often the two points would appear in the same leaf node of the learned trees. Moreover, it has been shown that such a random forest based weighting scheme for estimation can alleviate the curse of dimensionality~\citep{athey-generalforest-2019}. It has also been shown that with weights as in \eqref{eq:control-forest-weights}, \eqref{eq:control-edf} is weakly-consistent, given a set of regularity conditions and the assumption that the true distribution function is Lipschitz continuous~\citep{meinshausen-quantileforests-2006}. Therefore, it follows directly from this property of consistency and TESS's assumption of randomization that $\hat{\mathbb{F}}_{Y^C \mid X} \xrightarrow{p} \mathbb{F}_{Y(0) \mid X}$. Therefore, TESS is able to use a random forest estimator of $\hat{\mathbb{F}}_{Y^C \mid X}$ as
a weakly-consistent estimator of the unknown $\mathbb{F}_{Y(1) \mid X}$ under the null hypothesis
$H_{0}$.

\subsection{Computing Empirical P-value Ranges}
\label{sec:p-values}
Given a mechanism for estimating the conditional probabilities of outcomes, TESS calculates an empirical $p$-value range~\citep{mcfowland-fgss-2013} for each
treatment unit to obtain a measure of how ``anomalous'' or unusual a particular
unit's outcome is given its reference distribution. For each unit $R_i$ in the
treatment group ($W_i=1$), using \eqref{eq:control-edf} and an appropriate weighting scheme,\ignore{the empirical $p$-value $p(y_i,x_i)$(or $p_{i}$ for notational convenience) is derived from its corresponding $p$-value range. Observe that} the standard
empirical $p$-value would be
\begin{equation}
  \label{eq:p-value}
  \hat{p}(y; x) = \hat{\mathbb{F}}_{Y^C|X}(y|x) =
  \frac{\sum_{i=1}^n{\zeta_i(x)\mathbbm{1}_{\{Y^{\text{obs}}_i \le y\}}}}{\sum_{i=1}^n \zeta_i(x)}.
\end{equation}
The empirical $p$-value \emph{range} is an extension of this
traditional empirical $p$-value, defined as 
\begin{equation}
  \label{eqn:p-value-range}
  \begin{split}
      \hat{p}(y; x) & = \left[ \hat{p}_{\text{min}}\left(y; x\right) , \hat{p}_{\text{max}}\left(y; x\right) \right] \\
            & = \left[
                    \frac{\sum_{i=1}^n{\zeta_i(x)\mathbbm{1}_{\{Y^{\text{obs}}_i < y\}}}}{1 + \sum_{i=1}^n \zeta_i(x)},
                    \frac{1 + \sum_{i=1}^n{\zeta_i(x)\mathbbm{1}_{\{Y^{\text{obs}}_i \le y\}}}}{1 + \sum_{i=1}^n \zeta_i(x)}
                \right],
  \end{split}
\end{equation}
where the sums are taken over all control observations. The numerator of 
 $\hat{p}_{\text{max}}$, but not 
$\hat{p}_{\text{min}}$, includes ``tied'' observations (i.e., $Y^{\text{obs}}_i = y$). The treatment observation $y$ is also considered part of its own reference distribution, following from the assumption of exchangeability of control and treatment outcomes under $H_0$, and thus adding one to the denominators of $\hat{p}_{\text{min}}$ and $\hat{p}_{\text{max}}$ as well as the numerator of $\hat{p}_{\text{max}}$. Following~\cite{mcfowland-fgss-2013}, we use empirical
$p$-value ranges because they improve upon traditional empirical $p$-values. The $p$-value ranges
are better equipped for sparsity in high-dimensional data, as the range naturally adapts to the amount of reference data used for estimation: a treatment unit's $p$-value range shrinks as more control units are used to estimate its reference distribution. Additionally, 
if we represented $R_i$ with an empirical $p$-value $\hat{p}_i$ that is drawn uniformly at random from its empirical $p$-value range
$\hat{p}(y_i; x_i)$, then under $H_0$, $\hat{p}_i \sim \text{Uniform}(0,1)$.\footnote{From the exchangeability under $H_0$ of $Y^{\text{obs}}_i \sim \mathbb{F}_{Y(1) \mid X}$ and $Y^{\text{obs}}_j \sim \mathbb{F}_{Y(0) \mid X}$, and the probability integral transform.} Standard empirical $p$-values are only \textit{asymptotically} distributed as Uniform$[0,1]$ and exhibit finite sample bias, while the ranges are unbiased in finite samples, ensuring that $\mathbb{E}\left[ \hat{\mathbb{F}}_{Y (1)| X=x}(\hat{\mathbb{F}}^{-1}_{Y (0)|X=x} (\alpha)) \right] = \alpha$ under $H_0$.

The left-tailed $p$-value ranges defined in 
\eqref{eqn:p-value-range} identify outcomes in the extremes of the lower-tail of the reference distribution. The $p$-value range in relation to only the right-tail
of the reference distribution or both tails can be derived from the $p$-value
range specified for the left-tail. The right-tail
range is
\begin{equation*}
  \hat{p}(y; x) = \left[ 1-\hat{p}_{\text{max}}(y; x) , 1-\hat{p}_{\text{min}}(y; x)  \right];
\end{equation*}
while the two-tailed range is
\begin{equation*}
  \hat{p}(y; x) =
  \begin{cases}
    \left[ 2 \hat{p}_{\text{min}}(y; x), 2 \hat{p}_{\text{max}}(y; x) \right] & \text{if }
      \hat{p}_{\text{max}}(y; x) < 0.5 \\
    \left[ 2\left(1-\hat{p}_{\text{max}}(y; x)\right),
      2\left(1-\hat{p}_{\text{min}}(y; x)\right) \right] & \text{if }
      \hat{p}_{\text{min}}(y; x) \ge 0.5 \\
    \left[ 2\min\{ \hat{p}_{\text{min}}(y; x) , 1-\hat{p}_{\text{max}}(y; x)\}, 1 \right] &
      \text{otherwise}.
  \end{cases}
\end{equation*}
Finally, the significance of a
$p$-value range, for a significance level $\alpha$, is defined as
\begin{equation*}
 n_\alpha(\hat{p}(y; x)) =
  \begin{cases}
   1 & \text{if } \hat{p}_{\text{max}}(y; x) < \alpha \\
   0 & \text{if } \hat{p}_{\text{min}}(y; x) > \alpha \\
   \frac {\alpha-\hat{p}_{\text{min}}(y; x)} {\hat{p}_{\text{max}}(y; x)-\hat{p}_{\text{min}}(y; x)} & \text{otherwise}.
  \end{cases}
\end{equation*}
Intuitively, $n_\alpha(\hat{p}(y; x))$ measures the proportion of the range that is
significant at level $\alpha$, or equivalently, the probability that a $p$-value
drawn uniformly from $\left[\hat{p}_{\text{min}}(y; x), \hat{p}_{\text{max}}(y; x)
\right]$ is less than $\alpha$.

\subsection{Subpopulations}
\label{sec:subpopulations}
Given $p$-values as a measure of the anomalousness of individual
treatment units, we now consider how TESS combines these measures to form
subpopulations. For intuition, we propose representing the data as a tensor,
where each covariate is represented by a mode of the tensor, $X = (X^{1},
\ldots, X^{d})$, resulting in a $d$-order tensor. $|V^j|$, the arity of the
$j^{th}$ covariate, is the size of the $j^{th}$ mode. Therefore, each
covariate profile $x$ maps to a unique cell in the tensor, which contains the
$p$-values of the treatment units that share $x$ as their covariate profile.
As stated above, a subpopulation is $S = v^1 \times \ldots \times v^d$,
where $v^j \subseteq V^j$; therefore, an individual cell (i.e.,  covariate profile $x$) is itself a
subpopulation: $S = \{x^1\} \times \ldots \times \{x^d\}$, where $x^j \in V^j$. For a
demonstrative example see Table \ref{table:example_tensor}. For a given
subpopulation $S$, we define the quantities
\begin{equation}
 \label{eqn:C(S),N_alpha,N}
    N_\alpha(S) =  \sum_{x \in U_{X}(S)}\sum_{y \in Y^{Tr}(x)}{ n_\alpha\left(\hat{p}\left(y; x\right)\right)};\quad N(S) =   \sum_{x \in U_{X}(S)}\sum_{y \in Y^{Tr}(x)}{1}
\end{equation}
\begin{table}
  \captionsetup{font=scriptsize,skip=0pt}
  \centering
  \resizebox{0.30\columnwidth}{!}{
  \begin{tabular}{llll}
                                              & \multicolumn{1}{l|}{}      & \multicolumn{2}{c|}{Gender}                                       \\
                                              & \multicolumn{1}{l|}{}      & \multicolumn{1}{l|}{Male}       & \multicolumn{1}{l|}{Female}     \\ \hline
    \multicolumn{1}{c}{\multirow{2}{*}{Race}} & \multicolumn{1}{l|}{Black} & \multicolumn{1}{l|}{$\{1.7\}$}  & \multicolumn{1}{l|}{$\{1.73\}$} \\ \cmidrule{2-4}
    \multicolumn{1}{c}{}                      & \multicolumn{1}{l|}{White} & \multicolumn{1}{l|}{$\{1.59\}$} & \multicolumn{1}{l|}{$\{1.84\}$} \\ \hline
    \multicolumn{4}{l}{}                                                                                                                       \\
    \multicolumn{4}{c}{Control Group}
  \end{tabular}
  }
  \quad\quad
  \resizebox{0.30\columnwidth}{!}{
  \begin{tabular}{llll}
 
                                              & \multicolumn{1}{l|}{}      & \multicolumn{2}{c|}{Gender}                                       \\
                                              & \multicolumn{1}{l|}{}      & \multicolumn{1}{l|}{Male}       & \multicolumn{1}{l|}{Female}     \\ \hline
    \multicolumn{1}{c}{\multirow{2}{*}{Race}} & \multicolumn{1}{l|}{Black} & \multicolumn{1}{l|}{$\{2.92\}$}  & \multicolumn{1}{l|}{$\{2.35\}$} \\ \cmidrule{2-4}
    \multicolumn{1}{c}{}                      & \multicolumn{1}{l|}{White} & \multicolumn{1}{l|}{$\{2.21\}$} & \multicolumn{1}{l|}{$\{2.06\}$} \\ \hline
    \multicolumn{4}{l}{}                                                                                                                       \\
    \multicolumn{4}{c}{Treatment Group}
  \end{tabular}
  }
  \caption{A demonstrative tensor--representing the example dataset
in Table \ref{table:example_data}--containing a $d=2$-order tensor for both the
control and treatment group. The top-left cell of each tensor represents the
subpopulation of black males in the data, $S = \{\text{Black}\} \times
\{\text{Male}\}$. There are also the subpopulation of all males, $S =
\{\text{Black, White}\} \times \{\text{Male}\}$, all black subjects, $S =
\{\text{Black}\} \times \{\text{Male, Female}\}$, or the entire population,
$S = \{\text{Black, White}\} \times \{\text{Male, Female}\}$; there are a
total of nine subpopulations in this simple example. We note that the example
dataset has only one unit with each unique covariate profile, therefore the set
of values in each tensor cell is of size one.}
  \label{table:example_tensor}
\end{table}
where $U_{X}(S)$ is
the set of non-empty covariate profiles in $S$, $Y^{Tr}(x)= \{Y_i^{obs} | X_i = x, W_i = 1\}$ is the collection of treatment units' outcomes with covariate
profile $x$, $N(S)$ represents the total number of empirical $p$-values contained in $S$, and
$N_{\alpha}(S)$ is the number of $p$-values in $S$ that are less than $\alpha$.\footnote{For $p$-value ranges, as in \cite{mcfowland-fgss-2013}, $N_{\alpha}(S)$ is more precisely the total probability mass less than $\alpha$ over the $p$-value ranges in $S$.} Given that the distribution of each $p$-value is
Uniform(0,1) under the null hypothesis that the treatment has no effect, for a
subpopulation $S$ consisting of $N(S)$ empirical $p$-values,
$\mathbb{E}\left[N_{\alpha}(S)\right] = \alpha N(S)$. Under the alternative hypothesis,
we expect the outcomes of the
affected units to be more concentrated in the tails of their reference
distributions; thus, the $p$-values for these affected units
will be lower. Therefore, subpopulations composed of covariate profiles that are
systematically affected by the treatment should express higher values of
$N_\alpha(S)$ for some $\alpha$. Consequently, a subpopulation $S$ where
$N_\alpha(S) > \alpha N(S)$ (i.e., with a higher than expected
number of low, significant $p$-values) is potentially affected by the
treatment.

\subsection{Nonparametric Scan Statistic}
\label{sec:npss}
TESS utilizes the nonparametric scan statistic
\citep{mcfowland-fgss-2013,feng-npss_graph-2014} to evaluate the statistical
anomalousness of a subpopulation $S$ by comparing the observed and expected
number of significantly low $p$-values it contains.
The general form of the nonparametric scan
statistic is
\begin{equation*}
  \ignore{\label{eqn:F(S)-npss}}
  F(S)=\max_{\alpha}F_{\alpha}(S)=\max_{\alpha}\Delta(\alpha,N_{\alpha}(S),N(S)),
\end{equation*}
where $N_{\alpha}(S)$ and $N(S)$ are defined as in \eqref{eqn:C(S),N_alpha,N}, and $\Delta$ is a measure of divergence measuring the anomalousness of the $p$-values in $S$. See Appendix~\ref{sec:scoring_functions}for a collection of goodness-of-fit scoring functions written in the general form of the nonparametric scan statistic. As described in~\eqref{eq:BJ-teststat}, in this work we utilize the Berk-Jones scan statistic: $\max_{\alpha} \Delta_{BJ}\left(\alpha,N_{\alpha}(S),N(S)\right) = \max_{\alpha}N(S)KL\left(  \frac{N_{\alpha}(S)}{N(S)} , \alpha \right)$, a log-likelihood ratio test statistic of the distributional treatment
effect in subpopulation $S$. Maximizing
$F(S)$ over a range of $\alpha$, rather than a single
arbitrarily-chosen $\alpha$ value, enables TESS
to detect a small number of highly anomalous $p$-values, a larger
subpopulation with subtly anomalous $p$-values, or anything in between.  We consider ``significance levels'' $\alpha \in [\alpha_{\text{min}},\alpha_{\text{max}}]$, for constants $0 < \alpha_{\text{min}} < \alpha_{\text{max}} < 1$.  The range of $\alpha$ to consider can be specified based on the quantile values of interest.  The choice of $\alpha_{\text{max}}$ describes how extreme a value must be, as compared to the reference distribution, in order to be considered significant.  We often choose $\alpha_{\text{min}} \approx 0$, but larger values can be used to avoid returning subsets with a small number of extremely significant $p$-values.

\subsubsection{Efficient Scanning}
\label{sec:eff_scan}
The next step in the TESS framework is to detect the subpopulation most
affected by the treatment, i.e., to identify the most anomalous subset of values
for each of the $d$ modes of the tensor, or equivalently for each covariate $X^1
\ldots X^d$. More specifically, the goal is to identify the set of subsets
$\{v^{1},\ldots, v^{d}\}$ where each element corresponds to values in a
tensor-mode (covariate), such that $F(v^{1} \times \ldots \times v^{d})$ is
jointly maximized. The computational complexity of solving this optimization
naively is $O^{(2^{\sum_j |V^j|})}$, where $|V^j|$ is the size of mode $j$ (the
arity of $X^j$), and is computationally infeasible for even moderately sized
datasets.

We therefore employ the linear-time subset scanning property (LTSS) \citep{neill-ltss-2012}, which allows for efficient and exact maximization of any function satisfying LTSS over all subsets of the data. We formally define the LTSS property below, but intuitively it guarantees that the optimization over all subsets $S$ can be done by ranking data elements (according to a specific ``priority function'') and then only considering the top-$t$ subsets as candidates.\\[1ex]
\underline{\emph{LTSS Property Definition:}} Given a set of data elements $R=\{R_1,
\ldots, R_n \}$, a score function $F(S)$ mapping $S\subseteq R$ to a real
number, and a priority function $G(R_i)$ mapping a single data element $R_i \in
R$ to a real number. If $F(S)$ satisfies the \emph{LTSS property} with priority function $G(R_i)$,
then the only subsets with the
potential to be optimal are those consisting of the top-$t$ highest priority
records, $S \in \{\{R_{(1)}, \ldots, R_{(t)}\} \}_{t\in\{1,2,\ldots,n\}}$. In other words, there exists some $t \in \{1,2,\ldots,n\}$ such that $\arg\max_S F(S) = \{R_{(1)}, \ldots, R_{(t)}\}$.  We also
formally restate the original LTSS theorem:
\begin{thm}[\cite{neill-ltss-2012}]
  \label{thm:LTSS}
    Let $F(S) = F(X, Y)$ be a function of two additive
  sufficient statistics of subset $S$, $X(S) = \sum_{R_i \in S} x_i$ and
  $Y(S) = \sum_{R_i \in S} y_i$, where $x_i$ and $y_i$ depend only on
  element $R_i$. Assume that $F(S)$ is monotonically increasing with $X(S)$,
  that all $y_i$ values are positive, and that $F(X, Y)$ is convex. Then
  $F(S)$ satisfies the LTSS property with priority function $G(R_i) =
  \frac{x_i}{y_i}$.
\end{thm}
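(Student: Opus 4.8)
The plan is to prove the LTSS property by a swap/exchange argument. Fix an ordering $R_{(1)}, \ldots, R_{(n)}$ of the records in nonincreasing order of the priority $G(R_i) = x_i/y_i$ (breaking ties by any fixed rule so that ``prefix'' is unambiguous). The claim to establish is that any subset $S$ that is not one of the prefixes $\{R_{(1)}, \ldots, R_{(t)}\}$, $t \in [1,n]$, can be transformed into such a prefix by a finite sequence of single-element moves, none of which decreases $F$; it then follows that the global maximum of $F$ over all $2^n$ subsets is attained on one of these $n$ prefixes.

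The whole argument reduces to one \emph{local lemma}: if $R_i \notin S$, $R_j \in S$, and $G(R_i) \ge G(R_j)$, then $F(S \cup \{R_i\}) \ge F(S)$ or $F(S \setminus \{R_j\}) \ge F(S)$. Given the lemma, whenever $S$ is not a prefix I take $R_j$ to be the lowest-priority element present in $S$ and $R_i$ the highest-priority element absent from $S$; because $S$ fails to be a prefix, $G(R_i) \ge G(R_j)$, so the lemma supplies a non-decreasing move. Each such move strictly decreases the number of ``inversions'' (pairs $(R_a, R_b)$ with $R_a \notin S$, $R_b \in S$, $G(R_a) > G(R_b)$): adding the top absent element, or removing the bottom present element, kills at least the inversion $(R_i, R_j)$ and creates none. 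This integer potential forces termination at a prefix.

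To prove the local lemma, write $X = X(S)$, $Y = Y(S)$, and let $A = (X + x_i, Y + y_i)$ and $B = (X - x_j, Y - y_j)$ be the sufficient-statistic pairs of $S \cup \{R_i\}$ and $S \setminus \{R_j\}$. The key step is the choice $\lambda = \frac{y_j}{y_i + y_j}$, which lies in $(0,1)$ precisely because all $y_i > 0$: the convex combination $P = \lambda A + (1-\lambda) B$ has second coordinate exactly $Y$ and first coordinate $X + \frac{x_i y_j - x_j y_i}{y_i + y_j} \ge X$, where the inequality is exactly the hypothesis $G(R_i) \ge G(R_j)$ (here $y_i > 0$ is used again, to clear denominators in the priority comparison). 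Monotonicity of $F$ in $X$ then gives $F(P) \ge F(X, Y) = F(S)$, while convexity gives $F(P) \le \lambda F(A) + (1-\lambda) F(B)$. If both candidate moves strictly decreased the score, the right-hand side would be strictly below $F(S)$, contradicting $F(P) \ge F(S)$; hence at least one of $F(A), F(B)$ is $\ge F(S)$, proving the lemma.

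I expect the analytic core to be short once the combination $\lambda = y_j/(y_i+y_j)$ is in hand, since it is exactly engineered so that monotonicity (a lower bound on $F(P)$) and convexity (an upper bound) point in opposite directions. The main obstacle is really the bookkeeping that lifts the local lemma to the global claim: verifying that the exchange process terminates (handled by the inversion potential), treating equal priorities so that the family of prefixes is well-defined, and keeping track of the fact that strict positivity of every $y_i$ is genuinely needed in two places and cannot be relaxed.
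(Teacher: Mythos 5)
The paper does not actually prove this statement: Theorem~\ref{thm:LTSS} is restated from \citet{neill-ltss-2012} and used as a black box, so there is no in-paper proof to compare against. Measured against the known argument, your analytic core is exactly right and is essentially the standard one: the convex combination with $\lambda = y_j/(y_i+y_j)$ is engineered so that $P = \lambda A + (1-\lambda)B$ shares the second coordinate $Y$ with $(X,Y)$ and has first coordinate $X + (x_i y_j - x_j y_i)/(y_i+y_j) \ge X$ precisely when $G(R_i) \ge G(R_j)$; monotonicity then bounds $F(P)$ below by $F(S)$ while convexity bounds it above by $\lambda F(A) + (1-\lambda)F(B)$, forcing at least one of the two single-element moves to be non-decreasing. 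This is the correct and complete content of the local lemma, and correctly uses positivity of the $y_i$ in both places you identify.

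There is one small gap in the lifting step. Your termination potential counts pairs with \emph{strict} priority inequality $G(R_a) > G(R_b)$, but the local lemma is invoked whenever $G(R_i) \ge G(R_j)$; when $G(R_i) = G(R_j)$ the pair $(R_i,R_j)$ is not an inversion under your definition, so the move is not guaranteed to kill any inversion and the potential need not strictly decrease --- termination is then unproven in the presence of ties. The fix is to count inversions with respect to the fixed tie-broken total order you already introduced (pairs with $\mathrm{pos}(R_a) < \mathrm{pos}(R_b)$, $R_a \notin S$, $R_b \in S$): adding the positionally first absent element creates no such pair and destroys at least one, and likewise for removing the positionally last present element, so that potential does strictly decrease. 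Equivalently, one can avoid iteration entirely by choosing, among all maximizers of $F$, one minimizing this order-based inversion count and deriving a contradiction from the local lemma. A last pedantic point: when $S$ is a singleton, the ``remove $R_j$'' branch lands on the empty set, so you should either define $F$ at $(0,0)$ or note the usual convention that the empty subset is excluded by fiat; this does not affect the substance of the argument.
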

In this work, we use Theorem~\ref{thm:LTSS} to optimize $F_\alpha(S) = \Delta(\alpha,N_\alpha(S),N(S))$ with a fixed value of $\alpha$; therefore, $X(S) = N_\alpha(S)$ and $Y(S) = N(S)$ are ``additive sufficient statistics'', i.e., both $N_\alpha(S)$ and $N(S)$ are additive statistics of $S$, from \eqref{eqn:C(S),N_alpha,N}, and $F_\alpha(S)$ can be written as $F_\alpha(N_{\alpha}(S), N(S))$. Moreover, for $F_\alpha(S)$ (with $\alpha$ fixed) to satisfy LTSS, we also assume: \emph{(A1)} $\Delta$ is monotonically \emph{\textbf{increasing}} w.r.t.
$N_{\alpha}$, \emph{(A2)} $\Delta$ is monotonically \emph{\textbf{decreasing}} w.r.t. $N$, and \emph{(A3)} $\Delta$ is \emph{\textbf{convex}} w.r.t. $N_{\alpha}$ and $N$.
These properties are intuitive because the ratio of observed to expected number
of significant $p$-values $\frac{N_\alpha}{\alpha N}$ increases with the
numerator (A1) and decreases with the denominator (A2). Also, a fixed ratio of
observed to expected is more significant when the observed and expected
counts are large (A3). In Appendix \ref{sec:scoring_functions}, we show that nonparametric scan statistics using a large class of goodness of fit functions, including the Berk-Jones scan statistic utilized in this work, exhibit these properties.

We now extend Theorem \ref{thm:LTSS} to the (potentially high-dimensional) tensor context using Corollary \ref{cor:LTSS-NPSS} below. Essentially, the corollary demonstrates that the nonparametric scan statistic
satisfies LTSS in the context of TESS, and therefore a single mode of a tensor can be efficiently optimized over subsets, conditioned on the subsets of values for the other modes. Let $U_{\alpha}(S)$ be
the set of unique $p$-values between $\alpha_{\text{min}}$ and $\alpha_{\text{max}}$ contained in subpopulation $S$. Then the quantity $\max_S F(S) = \max_{\alpha \in U_{\alpha}(S)} \max_S  F_\alpha(S)$
can be efficiently and exactly computed over all subsets $S = v^j \times
v^{-j}$, where $v^j \subseteq V^j$, for a given subset of values for each of the
other modes $v^{-j}$.\footnote{Note that for convenience of notation we define $S = v^j \times v^{-j}$; however, the elements of the set $v^j$ still appear at the $j^{\text{th}}$ position of the covariate profiles in $S$.} To do so, consider the set of distinct $\alpha$ values, $U =
U_{\alpha}(V^j \times v^{-j})$.  For each $\alpha \in U$ we employ
the logic described in Corollary \eqref{cor:LTSS-NPSS} to optimize $F_{\alpha}(S)$: we compute the priority $G_\alpha(v^{j}_m)$ for each value
($v^{j}_m \in V^j$), sort the values based on priority function $G_{\alpha}(v^{j}_m)$, and evaluate subsets of the form $S=\{v^{j}_{(1)}, \ldots, v^{j}_{(t)}\}
\times v^{-j}$ consisting of the top-$t$ highest priority values, for $t=1,
\ldots,|V^j|$.

\begin{cor}
  \label{cor:LTSS-NPSS} Consider the nonparametric scan
  statistics $F(S) = \max_\alpha F_\alpha(S)$, where the significance level
  $\alpha \in [\alpha_{\text{min}},\alpha_{\text{max}}]$, for constants $0 < \alpha_{\text{min}} < \alpha_{\text{max}} < 1$.
  For a given value of $\alpha$ and $v^{-j} = v^1\times \ldots\times v^{j-1}\times v^{j+1}\times
  \ldots\times v^{d}$ under consideration, $F_{\alpha}(S)$ can
  be efficiently maximized over all subpopulations $S = v^j \times v^{-j}$, for
  $v^j \subseteq V^j$. 
 \end{cor}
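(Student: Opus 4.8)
The plan is to reduce the conditional optimization over all $v^j \subseteq V^j$, with $\alpha$ and $v^{-j} = \{v^1,\ldots,v^{j-1},v^{j+1},\ldots,v^d\}$ held fixed, to a one-dimensional subset-scan over the $|V^j|$ ``slices'' of the tensor, and then invoke Theorem~\ref{thm:LTSS} directly. First I would fix $\alpha$ and $v^{-j}$ and, for each $m \in \{1,\ldots,|V^j|\}$, define the slice $T_m = \{v^j_m\}\times v^{-j}$ together with the pair of counts $x_m \coloneqq N_\alpha(T_m)$ and $y_m \coloneqq N(T_m)$ (the number of significant and total treatment $p$-values falling in that slice, using the definitions in \eqref{eqn:C(S),N_alpha,N}). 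The key observation is that every candidate subpopulation $S = v^j \times v^{-j}$ is in bijection with the index set $M = \{m : v^j_m \in v^j\}$, and because the tensor cells partition $C(S)$, the sufficient statistics are additive over $M$: $N_\alpha(S) = \sum_{m \in M} x_m$ and $N(S) = \sum_{m \in M} y_m$. Slices with $y_m = 0$ contribute nothing to either statistic, so they may be discarded, leaving $y_m > 0$ for all remaining $m$.

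With this reduction in hand, $F_\alpha(S) = \phi(\alpha, N_\alpha(S), N(S))$ is exactly a function of the two additive sufficient statistics $X(M) = \sum_{m\in M} x_m$ and $Y(M) = \sum_{m\in M} y_m$ over the ground set $\{1,\ldots,|V^j|\}$, so the hypotheses of Theorem~\ref{thm:LTSS} are met: $\phi$ is monotonically increasing in $N_\alpha$ (property \emph{A1}), each $y_m > 0$ after discarding empty slices, and $\phi$ is jointly convex in $(N_\alpha, N)$ (property \emph{A3}); note that Theorem~\ref{thm:LTSS} imposes no monotonicity in $Y$, so the fact that $\phi$ is \emph{decreasing} in $N$ (property \emph{A2}) causes no difficulty. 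If one wishes, \emph{A1}--\emph{A3} can be verified in closed form for the Berk-Jones and normal-approximation statistics from the expressions $N(S)\,Div_{KL}$ and $N(S)\,Div_{\frac12\chi^2}$ in \S\ref{sec:div-stat}. Applying Theorem~\ref{thm:LTSS} yields the priority function $G(m) = x_m/y_m = N_\alpha(T_m)/N(T_m)$, the proportion of significant $p$-values in slice $m$, and guarantees that the conditionally optimal $v^j$ is among the $|V^j|$ nested candidates obtained by including the top-$t$ slices ranked by $G$, for $t = 1,\ldots,|V^j|$.

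Efficiency then follows: sorting the $|V^j|$ slices by $G$ costs $O(|V^j|\log|V^j|)$, and each of the $|V^j|$ nested candidates is scored by accumulating $x_m$ and $y_m$ and evaluating $\phi$ with $O(1)$ additional work, so the conditional maximum of $F_\alpha$ over all $v^j \subseteq V^j$ is found in $O(|V^j|\log|V^j|)$ time rather than the $O(2^{|V^j|})$ of brute force (and since $G$ depends on $\alpha$ only through the $x_m$, which take finitely many values as $\alpha$ ranges over the distinct observed $p$-values, re-sorting per $\alpha$ remains cheap). I expect the main obstacle to be conceptual rather than computational: recognizing that conditioning on $v^{-j}$ collapses the exponentially large family of subsets $v^j \subseteq V^j$ onto linearly many aggregated ``super-records'' whose additive statistics inherit exactly the structure Theorem~\ref{thm:LTSS} demands. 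The one point requiring genuine care is checking that properties \emph{A1}--\emph{A3}, asserted for $F_\alpha$ in \S\ref{sec:npss}, align precisely with the hypotheses of Theorem~\ref{thm:LTSS}---in particular, that the mismatch between ``$\phi$ decreasing in $N$'' and the theorem's positivity-of-$y_i$ requirement is harmless.
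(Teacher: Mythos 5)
Your proposal is correct and follows essentially the same route as the paper's proof: fix $\alpha$ and $v^{-j}$, observe that $N_\alpha$ and $N$ are additive over the slices $\{v^j_m\}\times v^{-j}$ with positive counts (the paper likewise restricts to values expressed by some treatment unit), and invoke Theorem~\ref{thm:LTSS} to obtain the priority function $G_\alpha(v^j_m) = N_\alpha(v^j_m \times v^{-j})/N(v^j_m \times v^{-j})$. Your explicit note that the theorem's hypotheses require only monotonicity in $N_\alpha$, positivity of the $y_m$, and convexity---so that property \emph{A2} plays no role in the reduction---is a small but welcome clarification the paper glosses over.
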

\begin{proof}
 ~We have $F_{\alpha}(S) = \Delta(\alpha,N_{\alpha}(v^j), N(v^j))$, with the additive sufficient statistics $N_{\alpha}(v^j) = \sum_{x \in U_{X}(v^j \times v^{-j})} \sum_{y \in Y^{Tr}(x)} n_{\alpha}(\hat{p}(y; x))$ and $N(v^j)=\sum_{x \in U_{X}(v^j \times v^{-j})} \sum_{y \in Y^{Tr}(x)} 1$, noting that the number of $p$-values in every
  $v^j$ is positive, as we only consider the values of a covariate that are expressed by some treatment
  unit. Since the nonparametric scan statistic is defined to be monotonically
  increasing with $N_{\alpha}$ (A1), monotonically decreasing with $N$ (A2), and
  convex (A3), we know that $F_{\alpha}(S)$ satisfies the LTSS property with
  priority function, over the values of mode (covariate) $j$,  $G_{\alpha}(v^{j}_m) = \frac{\sum_{x \in U_{X}(v^{j}_m \times v^{-j})} \sum_{y \in Y^{Tr}(x)} n_{\alpha}(\hat{p}(y; x))} {\sum_{x \in U_{X}(v^{j}_m \times v^{-j})} \sum_{y \in Y^{Tr}(x)} 1}$ for $v^j_m \in
  V^j$.
  Therefore the LTSS property holds for each value of $\alpha$, enabling each
  $F_{\alpha}(S)$ to be efficiently maximized over subsets of values for the
  $j^{th}$ mode of the tensor, given values for the other $d-1$ modes.
\end{proof}

TESS iterates over
modes of the tensor, using the efficient optimization steps described above to
optimize each mode: $v^j = {\arg\max}_{v^j \subseteq V^j} F(v^j \times v^{-j}),  j = 1 \ldots d$. The cycle of optimizing each mode continues until convergence, at
which point TESS has
reached a conditional maximum of the score function, i.e., $v^j$ is
conditionally optimal given $v^{-j}$ for all $j = 1 \ldots d$. This ordinal ascent
approach is not guaranteed to converge to the joint optimum, but with multiple random restarts the combination of subset scanning and ordinal ascent has been shown to locate near
globally optimal subsets with high probability~\citep{neill-mvltss-2013, mcfowland-fgss-2013}. We further provide asymptotic guarantees for TESS to  recover the precisely correct subpopulation that is also shown to be the globally optimal subset (Theorems \ref{thm:assympt_hom_str} and \ref{thm:assympt_TESS}). Moreover, if $\sum_{j=1}^d |V^j|$ is large,
this iterative procedure makes the ability to detect anomalous subpopulations
computationally feasible, without excluding potentially optimal subpopulations from the search space
(as a greedy top-down approach may). A single iteration 
(optimization of mode $j$ of the tensor) has a complexity of
$O\left(|U| \left(n_t + |V^j| \log |V^j|\right)\right)$, where the $n_t$ term---the number of treatment units---results from collecting the $p$-values for all units in
$V^j \times v^{-j}$ over our sparse tensor; $U = U_{\alpha}\left(V^j
\times v^{-j}\right)$, with $|U| \le n_t$~\citep{mcfowland-fgss-2013}; and $O\left(|V^j| \log
|V^j|\right)$ is required to sort, based on the priority, the values of tensor mode $j$. Therefore a
step in the procedure (a sequence of $d$ iterations over all modes of the
tensor) has complexity $O\left(\bar{U} d \left(n_t + \bar{V} \log \bar{V}\right)\right)$,
where $\bar{U}$ and $\bar{V}$ are the average numbers of $\alpha$ thresholds considered
and covariate arity, respectively. Thus
the TESS search procedure has a total complexity of $O\left(I \bar{Z} \bar{U} d \left(n_t +
\bar{V} \log \bar{V}\right)\right)$, where $I$ is the number of random restarts and $\bar{Z}$
is the average number of iterations required for convergence.  We note that $\bar{Z}$ is typically very small; $\bar{Z} \le 5$ across all simulations discussed in \S\ref{sec:results}.

\subsection{TESS Algorithm}
\label{sec:algo}

Inputs: randomized experiment dataset, $\alpha_{\text{min}}$, $\alpha_{\text{max}}$, number of iterations $I$.\vspace{-1mm}
\begin{enumerate}

  \item For each unique covariate profile $x$ in the treatment group:
    \begin{enumerate}
      \item Estimate $\hat{\mathbb{F}}_{Y^C \mid X = x}$ from the outcomes of the units in the control group.
      \item Compute the $p$-value (range) $\hat{p}_i = \hat{p}(y_i; x_i)$ for each treatment unit $i$ with profile
            $x$ from $\hat{\mathbb{F}}_{Y^C \mid X = x}$.
    \end{enumerate}

  \item Iterate the following steps $I$ times. Record the maximum value
  $\hat F^\ast$ of $F(S)$, and the corresponding subset of values $v^{j\ast}$ for each of the $d$ modes, over all such iterations:

  \begin{enumerate}

    \item For each of the $d$ modes, initialize $v^j$ to a random subset of values $V^j$.

    \item Repeat until convergence:
    \begin{enumerate}
      \item For each of the $d$ modes:
      \begin{enumerate}
        \item Maximize $F(S) = \max_{\alpha \in [\alpha_{\text{min}}, \alpha_{\text{max}}]}
        F_\alpha(v^j \times v^{-j})$ over subsets of values for $j^{th}$ mode
        $v^j \subseteq V^j$, for the current subset of values of the other $d-1$
        modes $v^{-j}$, and set $v^j \leftarrow \arg\max_{v^j \subseteq V^j}
        F(v^j \times v^{-j})$.
      \end{enumerate}
    \end{enumerate}

  \end{enumerate}

  \item Output $\hat S^\ast = v^{1 \ast}\times \ldots\times v^{d \ast}$ and the corresponding score $\hat F^\ast=F(\hat S^\ast)$.
\end{enumerate}

\subsection{Estimator Properties}
\label{sec:estimator-theory}
In the above sections we outline a procedure to efficiently estimate $\max_{S \in Rect} F(S)$,
where $Rect$ represents the space of all rectangular subsets of $D$. In this section we
treat $\max_{S \in Rect} F(S)$ as a statistic of the data, and aim to show that it
has desirable statistical properties. It is known that for data $X_1,\ldots,X_n \distiid \mathbb{F}$ and the corresponding empirical distribution function $\mathbb{F}_n$, $\|\mathbb{F}_n - \mathbb{F}\|_\infty \xrightarrow{a.s.}\ 0$.  Many goodness-of-fit statistics $GoF(\mathbb{F}_n, \mathbb{F})$ are equivalent to an empirical process over centered and scaled empirical measures; and empirical process theory provides tools to control Type I and II error \citep{gaenssler-glivenko_cantelli-2004,dvoretzky-dkw-1956,shorack-emp_process-1986}. However, 
in a general sense our goal is to control the behavior of $\max_{S \subseteq \{X_1,\ldots,X_n\}}GoF(\mathbb{F}_S, \mathbb{F})$, where $\mathbb{F}_S$ is the empirical distribution given by the subset $S$. It is not obvious whether the desirable properties present for $\mathbb{F}_n$ will persist when considering the empirical distribution of $\mathbb{F}_S$, a non-random subset of the data chosen by our optimization procedure. Given that this context of optimization over subsets is not considered in the current goodness-of-fit literature, we provide various theoretical results in support of our
subset scanning algorithm. In the remainder of the section we present the key statements necessary to show our desired properties below, while additional results and all proofs can be found in Appendix~\ref{sec:proofs}. We begin with
\ignore{:
\begin{restatable*}{lem}{bjtona}
  \label{lem:BJ_to_NA}
  $F^{BJ}(S) \asymp F^{NA}(S)$ as $N(S)\longrightarrow \infty$.
\end{restatable*}
\noindent This result indicates that, as the number of subjects in a
given subpopulation grows, its score under $F^{BJ}$ is well approximated by
$F^{NA}$, where both functions are described in \S\ref{sec:div-stat}. Given the fact that a large class of other
goodness-of-fit statistics in the literature are monotonic transformations of
$F^{NA}$ (see Appendix~\ref{sec:scoring_functions}), this result allows us to focus the remainder of our results on $F^{NA}$. Our}
the fact that our score function can be considered a test statistic for a hypothesis test analogous to that described in~\eqref{eq:test_ceqte}:
\begin{flalign}
  \label{eq:test}
  &\!\begin{aligned}
  H_0:~ &Y_i(1)|X_i~ \sim \mathbb{F}_{Y_i(0) | X_i}~ \forall X_i \in U_X(D)&\\
  H_1\left(S\right):~& \begin{cases}
      Y_i(1)|X_i~\not\sim \mathbb{F}_{Y_i(0) | X_i} ~ \forall X_i \in
  U_X(S)\text{, }S \in Rect\\
  Y_i(1)|X_i~\sim \mathbb{F}_{Y_i(0) | X_i}~ \forall X_i \not\in
  U_X(S)\text{, }S \in Rect&
  \end{cases}
  \end{aligned}&
\end{flalign}
where $D$ is our dataset (or tensor) of treatment units and $Rect$ is the set of all rectangular subsets of $D$.\footnote{The null hypotheses defined in~\eqref{eq:test_ceqte} and~\eqref{eq:test} are analogous because if $\tau_{\text{CQTE}_{\alpha}}(x) = \alpha~ ~ \forall x\in S, \forall \alpha$, then the distributions for $Y(1) | X$ and $Y(0) |X$ are the same. Similarly, the alternative hypotheses are analogous because if $\exists \beta > \alpha$ such that $\tau_{\text{CQTE}_{\alpha}}(x) = \beta~ \forall x \in S$, then $Y(1) | X$ is different from $Y(0) | X$, and if $\tau_{\text{CQTE}_{\alpha}}(x) = \alpha~~ \forall x \notin S,~ \forall \alpha$, then again the distributions for $Y(1) | X$ and $Y(0) |X$ are the same.} The null hypothesis
is that all of the observed outcomes of treatment units are drawn from the same
conditional outcome distribution (given the observed covariates) as their
control group.
\ignore{The hypothesis tests which serve as the foundation for the score functions described in \S\ref{sec:div-stat} are special cases of this more general hypothesis test in~\eqref{eq:test}.}
Recall that $U_{X}(D)$ is
the set of unique covariate profiles (non-empty tensor cells) in our data, with cardinality $|U_{X}(D)| = M$; while $S^{\ast}=\arg \max_{S \in
Rect} F(S)$ and $S^{\ast}_{u} = \arg \max_S F(S)$ represent the most anomalous
rectangularly constrained subset and the most anomalous unconstrained subset respectively. For mathematical convenience, we
assume $N(x) = n$ for all $x \in U_{X}(D)$, i.e., $n$ units
belong to each unique
covariate profile (non-empty cell) in the data and treatment condition.\footnote{We can redefine $n = \min_{x} N(x)~\forall x \in U_{X}(D)$ and our results can be extended.}  We consider the case where $n \longrightarrow \infty$, maximizing $F(S)= \max_{\alpha\in[\alpha_{\min},\alpha_{\max}]} F_\alpha(S)$ for $0 < \alpha_{\min} < \alpha_{\max} < 1$.
We can therefore
demonstrate: 
\begin{restatable*}{lem}{nullconverg}
  \label{lem:null_converg}
  Under $H_{0}$ defined in \eqref{eq:test}, let $N(x) = n~ \forall x \in U_{X}(D)$, then as $n\rightarrow \infty$,
\ignore{
  \begin{align*}
      \sqrt{F(S^{\ast}_{u})} &\stackrel{d}{\longrightarrow}
 \max_Z \left( \sqrt{\frac{M\phi(Z)^2}{2(1-\Phi(Z))}} +  \mathbb{W}(\epsilon) \sqrt{\frac{V(Z)}{2}} \right)\\ 
      &\le C \sqrt{M} + \frac{ \mathbb{W}(\epsilon)}{\sqrt{2}},
  \end{align*}}
   \begin{align*}
      \sqrt{F(S^{\ast}_{u})} \stackrel{d}{\longrightarrow}~&G \left( \mathbb{W}\left(\alpha_{\min},
      \alpha_{\max}\right), M\right)\\ 
      \le~&C \sqrt{M} + \frac{ \mathbb{W}(\alpha_{\min},\alpha_{\max})}{\sqrt{2}},
  \end{align*}
where the function $G$ and constant $C<1$ are known; $\mathbb{W}(\alpha_{\min},\alpha_{\max}) = \sup_{\alpha \in [\alpha_{\min},\alpha_{\max}]} \frac{|B(\alpha)|}{\sqrt{\alpha(1-\alpha)}}$, and $B(\alpha)$ represents a Brownian bridge on $[0,1]$.
\end{restatable*}
\noindent Thus, when the null hypothesis is true, the most anomalous unconstrained subset's score distribution can be upper bounded. Our ability to understand the limiting behavior of the $F\left(S^{\ast}_{u}\right)$ exploits  its structure, which we get from LTSS theory: the optimal unconstrained subset will be $S^{\ast}_{u} \in \{\{x_{(1)}, \ldots, x_{(t)}\}\}_{t\in\left\{1,2,\ldots,M\right\}}$, where $x_{(t)}$ has the $t^{th}$ largest value of the random variable $\frac{N_{\alpha}(x)}{N(x)}~\forall x \in U_{X}(D)$. Next we note that the score maximized over the space of unconstrained subsets upper bounds the score maximized over the subspace of rectangular subsets, i.e.,  $F\left(S^{\ast}\right) \le F\left(S^{\ast}_{u}\right)$. We use this fact to obtain the following result:
\begin{restatable*}{thm}{falseposotive}
  \label{thm:false_posotive}
   Under $H_{0}$ defined in~\eqref{eq:test}, let $N(x) = n~ \forall x \in U_{X}(D)$ and fix Type-I error rate $\delta > 0$, then there exists a critical value $h(\delta)$ such that
  \begin{equation*}
      \lim_{n \rightarrow \infty}P_{H_0}\left(\max_{S \in Rect} F(S) > h\left(\delta\right)\right) \le \delta.
  \end{equation*}
\end{restatable*}
\noindent Theorem~\ref{thm:false_posotive} indicates that $\max_{S \in Rect} F(S)$ provides a statistic to quantify the evidence to reject $H_{0}$, enabling an (asymptotically) valid $\delta$-level hypothesis test such that $P_{H_0}(\text{Reject }H_0) \le \delta$, for any fixed Type I error rate $\delta > 0$. From the proof of Theorem~\ref{thm:false_posotive}, in Appendix~\ref{sec:proofs}, we derive that $h(\delta)= \left(0.45\sqrt{M} + \frac{w(\delta)}{\sqrt{2}}\right)^2$, where $w(\delta)$ returns $w$ such that $P\left(\mathbb{W}(\alpha_{\min},\alpha_{\max}) > w\right) = \delta$. For intuition, $w(\delta)$ is typically small, e.g., $w(\delta) \approx 5.81$ for $\delta = 10^{-6}$ and $(\alpha_{\min}, \alpha_{\max}) = (.01, .99)$. We note that because we are maximizing both over subsets $S$ and thresholds $\alpha$, these results are distinct from the straightforward application of known results from empirical process theory or Dvoretzky-Kiefer-Wolfowitz bounds, which would give us $\max_\alpha | \frac{N_\alpha(S)}{N(S)} - \alpha| \longrightarrow 0$ for a given $S$.

Next, we turn our attention to the alternative hypothesis, where $S^T \in Rect$ represents the truly affected subset, $k=\frac{|U_{X}(S^T)|}{|U_{X}(D)|}$ is the proportion of covariate profiles included in $S^T$, and $H_{1}\left(S^T\right)$ implies that there exist constants $\alpha$ and $\beta(\alpha) > \alpha$ such that $\beta(\alpha) = \mathbb{F}_{Y(1) \mid X=x}\left(\mathbb{F}_{Y(0) \mid X=x}^{-1}(\alpha)\right)$ for all $x \in U_{X}(S^T)$.
We then have the following results:

\begin{restatable*}{lem}{altconverg}
\label{lem:alt_converg}
Under $H_{1}\left(S^T\right)$ defined in~\eqref{eq:test}, let $N(x) = n~ \forall x \in U_{X}(D)$, and consider $F_{\alpha^{\ast}}(S^T)$ for $\alpha^{\ast} = \arg\max_{\alpha \in [\alpha_{\min},\alpha_{\max}]} \frac{(\beta(\alpha)-\alpha)^2}{2 \alpha(1-\alpha)}$ and $\beta^{\ast} = \beta(\alpha^{\ast})$. Then as $n\longrightarrow \infty$,
\ignore{
  Under $H_{1}\left(S^T\right)$, $F\left(S^T\right) \asconv
  \max_{\alpha} (\beta\left(\alpha\right) -
  \alpha)^2\frac{kMn}{2\alpha(1-\alpha)}$.
}
\begin{equation*}
    \sqrt{F_{\alpha^{\ast}}(S^T)}-O\left(\sqrt{kMn}\right) \stackrel{d}{\longrightarrow} \mbox{Gaussian}\left(0,\sigma^2_{\alpha^{\ast}\beta^{\ast}}\right),
\end{equation*}
where $\sigma^2_{\alpha^{\ast}\beta^{\ast}} > 0$ does not depend on $k$, $M$, or $n$.
\end{restatable*}

\noindent Thus, when the null hypothesis is false, the expected value of the true subset's score at the $\alpha^{\ast}$ quantile, $F_{\alpha^{\ast}}(S^T)$, is increasing with $n$. This result, and the fact that $F_{\alpha}(S) \le F(S)~\forall S,\alpha$, are used to obtain the following result:
\begin{restatable*}{thm}{power}
  \label{thm:power}
  Under $H_{1}\left(S^T\right)$ defined in \eqref{eq:test}, let $N(x) = n~ \forall x \in U_{X}(D)$ and critical value $h(\delta)$ be set for the same fixed Type-I error rate $\delta > 0$ as in Theorem \ref{thm:false_posotive}, then
  \begin{equation*}
      \lim_{n \rightarrow \infty}P_{H_1}\left(\max_{S \in Rect} F(S) > h\left(\delta\right)\right) = 1.
  \end{equation*}
\end{restatable*}

\noindent As a consequence of Theorem~\ref{thm:power}, the $\delta$-level hypothesis test based on $\max_{S \in Rect} F(S)$ has full asymptotic power $P_{H_1}(\text{Reject }H_0) \longrightarrow 1$. Note that in this context we consider a fixed alternative $\beta(\alpha)$, as opposed to a local alternative where $\beta_{n}(\alpha) \longrightarrow \alpha$ as $n \longrightarrow \infty$.\ignore{Additionally, because the critical value $h\left(\delta\right)$ is the same in Theorems \ref{thm:false_posotive} and \ref{thm:power}, we are therefore showing that $P(Type~I ~error) + P(Type~II~error) \rightarrow 0$.} 

In practice, when we consider an experiment with finite $M$ and $n$, permutation testing can be used to control the Type I error rate of our scanning procedure, and conditions have been shown where permutation calibrations achieve the Type II error rates of an oracle scan test~\citep{arias-castro-calibration-2017}. These theoretical and practical results intuitively capture our statistic's ability to conclude that the
null hypothesis is false--i.e., that there exists some subset that follows $H_1$, and
therefore invalidates $H_{0}$. However, this does not necessarily provide a
guarantee that the statistic will exactly capture the true subset. Therefore, next
we derive finite sample sufficient conditions under which our framework achieves subset correctness: $S^{\ast} = S^T$. We then show asymptotic convergence of $P(S^{\ast} = S^T ) \xrightarrow[]{} 1$ as $n\xrightarrow[]{}\infty$.

We introduce additional notation for this discussion: \ignore{$\beta_x(\alpha) = \mathbb{F}_{Y(1) \mid X=x}\left(\mathbb{F}_{Y(0) \mid X=x}^{-1}(\alpha)\right)$,} $r_{\text{mle}}(x) = \frac{N_{\alpha}(x)}{N(x)} - \alpha$, $r^{\text{aff}}_{\text{mle}-h} = \max_{x \in U_{X}(S^T)} r_{\text{mle}}(x)$, $r^{\text{aff}}_{\text{mle}-l} = \min_{x\in U_{X}(S^T)} r_{\text{mle}}(x)$, $r^{\text{unaff}}_{\text{mle}-h} = \max_{x \not\in U_{X}(S^T)} r_{\text{mle}}(x)$, and 
  $\eta =  \left( \frac{\sum_{x \in U_{X}(S^T)}{N\left(x\right)} }{ \sum_{x \in U_{X}(D)}{N\left(x\right)} } \right)$. We also introduce the concepts of $\nu-homogeneous$, which means that $\frac{r^{\text{aff}}_{\text{mle}-h}}{r^{\text{aff}}_{\text{mle}-l}} < \nu$, and $\delta-strong$, which means that $ \frac{r^{\text{aff}}_{\text{mle}-l}}{r^{\text{unaff}}_{\text{mle}-h}} > \delta$. Intuitively, the concept of \emph{homogeneity} measures how similarly the
treatment affects each $\mathbb{F}_{Y| X=x}$ for $x \in U_{X}(S^T)$, while
\emph{strength} measures how large of an effect the treatment exhibits across
all $\mathbb{F}_{Y | X=x}$ for $x \in U_{X}(S^T)$.  More specifically, these concepts respectively imply that for any pair of the affected covariate profiles $\left(x_i, x_j \in U_{X}(S^T)\right)$, the anomalous signal (i.e., treatment effect)
observed in $x_i$ is less than $\nu$ times that which is observed in $x_j$, and the treatment effect observed in every affected covariate profile is more than $\delta$ times that of the unaffected profiles. Using these concepts we have the following results:

\begin{restatable*}{thm}{homo}
  \label{thm:subset_homo}
  Under $H_1(S^T)$ defined in \eqref{eq:test}, where $|U_{X}(S^T)|=t > 0$, $\exists~\nu > 1$ such
  that if the observed effect across the $t$ covariate profiles in $S^T$ is $\nu-homogeneous$, and at least $1$-strong, then the highest scoring subset $S^{\ast} \supseteq S^{T}$.
\end{restatable*}

\begin{restatable*}{thm}{strength}
  \label{thm:subset_strength}
   Under $H_1(S^T)$ defined in \eqref{eq:test}, where $|U_{X}(S^T)|=t > 0$, $\exists~\delta > 1$ such
  that if the observed effect across the $t$ covariate profiles in $S^T$ is $\frac{\delta}{\eta}-strong$, then the highest scoring
  subset $S^{\ast} \subseteq S^{T}$.
\end{restatable*}

\begin{restatable*}{thm}{assymptseteq}
  \label{thm:assympt_hom_str}
   Under $H_1(S^T)$ defined in, where $|U_{X}(S^T)|=t>0$, let $N(x) = n~ \forall x \in U_{X}(D)$. If $|U_{X}(D)| = M$ is fixed  then as $n \xrightarrow[]{} \infty$, $P(S^* = S^T) \xrightarrow[]{} 1$.
\end{restatable*}

While Theorem~\ref{thm:assympt_hom_str} shows that $S^\ast = S^T$ with high probability as $n\rightarrow\infty$---i..e, the most anomalous (rectangular) subset is the true subset---it does not guarantee that the TESS algorithm presented in Section~\ref{sec:algo} will converge to the true subset, because iterative ascent algorithms converge to a local maximum. For example, if Step 2(a) of the
algorithm chooses an initial subset $S_0$ that is disjoint from $S^T$, then it is possible that no maximization over subsets of values for any single mode in Step 2(b).i.A. will improve the score, and TESS will fail to identify $S^T$ \textit{on that iteration}.  However, we can show the following:

\begin{restatable*}{thm}{assymptTESS}
  \label{thm:assympt_TESS}
   Under $H_1(S^T)$ defined in \eqref{eq:test}, where $|U_{X}(S^T)|=t>0$ and $S^T \in Rect$, let $N(x) = n~ \forall x \in U_{X}(D)$. Assume $|U_{X}(D)| = M$ is fixed.  Let $\hat S^\ast$ denote the subset returned by a given iteration of the TESS algorithm, which was initialized to some subset $S_0 \in Rect$, such that $S^T \cap S_0 \ne \emptyset$.  Then as $n \xrightarrow[]{} \infty$, $P(\hat S^* = S^T) \xrightarrow[]{} 1$. 
\end{restatable*}

Thus as $n\rightarrow\infty$ for fixed $M$, TESS will identify the correct rectangular subset $S^T$ w.h.p., as long as it is initialized to some rectangular subset $S_0$ that overlaps $S^T$, for at least one of the its iterations. For example, $S_0$ could be the entire dataset $D$, thus guaranteeing $S_0 \cap S^T \ne \emptyset$.

Together, these results demonstrate that the test statistic $F^\ast = \max_{S \in Rect} F(S)$ and corresponding subset $S^\ast = \arg\max_{S \in Rect} F(S)$ possess desirable statistical properties. Theorems \ref{thm:false_posotive} and \ref{thm:power} imply that the asymptotic Type I and II errors of our procedure can be controlled, with implications for maximization over subsets of empirical processes more generally. Theorems \ref{thm:subset_homo} and \ref{thm:subset_strength} indicate that for a score function there exist constants $\nu$ and $\delta$\ignore{, both of which equal 2 for $F^{NA}$,}  that define how similar and strong the treatment effect must be in the affected subpopulation, to ensure that the highest-scoring subset corresponds exactly to the true affected subset $S^{\ast} = S^T$. Finally, Theorem \ref{thm:assympt_hom_str} shows asymptotic convergence for $P(S^{\ast} = S^T) \xrightarrow[]{} 1$ as $n\xrightarrow[]{}\infty$, and Theorem \ref{thm:assympt_TESS} shows that the TESS algorithm will identify $S^T$ w.h.p. as $n\rightarrow\infty$. To our knowledge, this is the first work on heterogeneous treatment effects that provides conditions on the exactness of subpopulation discovery.

\section{Related Work}
\label{sec:related}
\ignore{Inference on Heterogeneous Quantile Treatment Effects via Rank-Score Balancing, does HTE Quantile Effects}
There has been a growing literature using statistical learning methods to provide
data-driven approaches for estimating heterogeneous treatment effects in
randomized experiments. Recent work has adapted regularized regression for treatment effect heterogeneity~\citep{imai-hte_lasso-2013,tian-hte_lasso-2014,weisberg-subgroup-2015}. These regularized regression approaches, however, require the researcher to select which covariate and treatment
interactions to include in the model specification, compromising their ability to
\emph{discover} unexpected treatment patterns in subpopulations.

Regression tree based methods~\citep{su-subgroup-2009,athey-hte-2016} select subpopulations and estimate treatment
effects by recursively partitioning the
data into homogeneous subpopulations that share a subset of covariate profile
values and have similar outcomes. The effectiveness of tree methods can be severely compromised in many settings as a result of their greedy partitioning.\footnote{Though there are non-greedy tree based learning methods, these methods are used for optimal treatment assignments for individual units in observational data \citep{zhou-offline-policy-2018}.} Tree models
can be unstable; they can provide extremely discontinuous approximations of an
underlying smooth function, limiting overall accuracy; and they can struggle to
estimate functions which exhibit specific properties, including when a small
proportion of the covariates constitute the influential interactions
\citep{friedman-mars-1991}.

Other treatment
effect estimation approaches use ensemble methods, including the use of Bayesian Additive Regression Trees~\citep{hill-hte_bart-2011, green-hte_bart-2012},
Random Forests~\citep{foster-hte_randomforest-2011, wager-causalforest-2018}, and ensembles of strong learners~\citep{grimmer-hte_ensemble-2017}. Ensembles provide more stable and smooth function estimates
\citep{wager-causalforest-2018}; however, they lose the interpretability of
natural groupings (e.g., specific combinations of covariates or clearly defined
leaves) which is important for \textit{identifying} affected subpopulations.

Finally,~\cite{chernozhukov-generic-2018} propose approaches to test if there is detectable heterogeneity in treatment effects, finding the quantiles exhibiting heterogeneous treatment effects induced by the machine learning proxy predictors, and then identifying the covariates that appear associated with the heterogeneity. Therefore this approach is a post-hoc analysis of existing machine learning predictors (e.g., regression tree or ensemble methods) which are optimized for overall risk minimization and not the discovery of subpopulations with significant treatment effects. Furthermore, there are orthogonal research streams focused on (heterogeneous) treatment assignment that include policy learning~\citep{zhou-offline-policy-2018, athey-policy-learning-2020} and welfare maximization~\citep{toru-welfare-max-2018, mbakop-welfare-max-2020} in observational and experimental data. These methods assign a personalized treatment for each individual unit. Instead, TESS takes as input a set of units which have already been randomly assigned a treatment.

Although this literature contains a growing set of novel statistical learning methods for causal inference, at the core of the majority of these approaches are objective functions designed for flexible estimation (and risk minimization) instead of subpopulation discovery (and significant effect maximization). TESS is therefore unique as it is optimized to discover interpretable subpopulations that exhibit significant evidence of treatment effects. When necessary, TESS can use flexible (risk minimizing) statistical learning models for a purpose that is aligned with their objective: as an accurate estimator of the conditional outcome distribution for the control group as in Section~\ref{sec:model-based estimation}.

\section{Empirical Analysis}
\label{sec:results}
In this section we empirically demonstrate the utility of the TESS framework as
a tool to identify subpopulations with significant treatment effects. We use
data from the Tennessee Student/Teacher Achievement Ratio (STAR)
randomized experiment~\citep{word-star-1990} in order to provide representative performance in
real-world policy analysis. We review the original STAR data (\S\ref{sec:star-data}), and describe our procedure for simulating affected
subpopulations (\S\ref{sec:simulations}).

Through the simulation results described in \S\ref{sec:sim-results}, we
compare the ability of TESS to detect significant subpopulations to three
recently proposed statistical learning approaches: Causal
Tree \citep{athey-hte-2016}, Interaction
Tree \citep{su-subgroup-2009,athey-hte-2016}, and Causal Forest
\citep{wager-causalforest-2018}. Specifically, we evaluate each method on two
general metrics: detection power and subpopulation accuracy. Detection power
measures $P_{H_1}(Reject~H_0)$, or how well a method can detect the
existence--not necessarily the location--of treatment effect heterogeneity in
the experiment. Subpopulation accuracy, on the other hand, is specifically designed to measure
how well a method can precisely and completely capture the subpopulation(s) with
significant treatment effects.

Finally, we conduct an exploratory analysis of the STAR dataset, and in
\S\ref{sec:star-eda} discuss the subpopulations identified by TESS as
affected by treatments. In some cases, the identified subpopulation
is consistent with the literature on the STAR experiment; in other cases,
TESS uncovers previously unreported, but intuitive and believable,
subpopulations. These empirical results demonstrate TESS's potential to
generate potentially useful and non-obvious hypotheses for further exploration and testing.

\subsection{Tennessee STAR Experiment}
\label{sec:star-data}

The Tennessee Student/Teacher Achievement Ratio (STAR) experiment is a
large-scale, four-year, longitudinal randomized experiment started in 1985 by the
Tennessee legislature to measure the effect of class size on student
educational outcomes, as measured by standardized test scores.
The experiment started monitoring students in kindergarten (during the
1985-1986 school year) and followed students until third grade. Students and
teachers were randomly assigned into conditions during the first school year,
with the intention for students to continue in their class-size condition
for the entirety of the experiment. The three potential experiment conditions were
not based solely on class size, but also the presence of a full-time teaching
aide: small classrooms (13-17 pupils), regular-size classrooms (22-25 pupils),
and regular-size classrooms with aide (still 22-25 pupils). Therefore, the
difference between the former two conditions is classroom size, and the difference
between the latter two conditions is the inclusion of a full-time teacher's aide
in the classroom. The experiment included approximately 350 classrooms from 80 schools,
each of which had at least one classroom of each type.
Each year more than 6,000 students participated in this experiment, with the
final sample including approximately 11,600 unique students.

The Tennessee STAR dataset has been well studied and analyzed, both by the project's
internal research team~\citep{word-star-1990, folger-star-1989}
and by external researchers~\citep{krueger-star-1999,nye-star-2000}. As
indicated by \cite{krueger-star-1999}, the investigations have primarily focused
on comparing means and computing average treatment effects. \cite{krueger-star-1999}
presents a detailed econometric analysis and draws similar
conclusions to the previous research: students in small classrooms perform
better than those in regular classrooms,
while there is no significant effect of a full-time teacher's aide, or moderation from teacher
characteristics. Moreover, the effect accumulates each year a student spends in a
small classroom \citep{krueger-star-1999}. Additionally, these conclusions are robust in the presence of potentially compromising experimental design challenges: imbalanced
attrition, subsequent changes in original treatment assignment, and fluctuating
class sizes~\citep{krueger-star-1999}.

\subsection{Experimental Simulation Setup}
\label{sec:simulations}

The goal of our experimental simulation is to replicate conditions under
which a researcher would want to use an algorithm to discover subpopulations
with significant treatment effects, and to observe how capable various
algorithms are at identifying the correct subpopulation(s). In order to replicate
realistic conditions, we use the STAR experiment as our base dataset, and inject
into it subpopulations (of a given size) with a treatment effect (of a given
magnitude). More specifically, we treat each student-year as a unique record and
for each record capture ten covariates: student gender, student ethnicity,
grade, STAR treatment condition, free-lunch indicator, school development
environment, teacher degree, teacher ladder, teacher experience, and teacher
ethnicity. We note that each of these variables, other than teacher experience,
is discrete; we discretize experience into five-year intervals:
$[0,5), [5,10), \ldots, [30, \infty)$. The number of values a covariate can take
ranges from two to eight. By preserving the overall data
structure of the STAR experiment--number of covariates, covariate value
correlations, subpopulations, sample sizes, etc.--our simulations are more able to
replicate the structure (and challenges) faced by experimenters.

The process we follow to generate a simulated treatment effect begins with
selecting a subpopulation $S_{\text{affected}}$ to affect. Recall that the dataset
contains a set of discrete covariates $X = (X^{1}, \ldots, X^{d}$), where each
$X^{j}$ can take on a vector of values $V^{j}=\{v^{j}_{m}\}_{m = 1...|V^j|}$ and
$|V^{j}|$ is the arity of covariate $X^{j}$. Therefore, we define a
subpopulation as $S = v^1 \times \ldots \times v^d$, where $v^j \subseteq
V^j$. The affected subpopulation is generated at random based on two parameters:
$num\_covs$, or the number of covariates to select, and $value\_prob$, or
probability a covariate value is selected. We select $num\_covs$ covariates at
random, and for each of these covariates we select each of their values with
probability equal to $value\_prob$, ensuring that at least one value for each of
these covariates is selected. The final affected subpopulation is then
$S_{\text{affected}} = v^1 \times \ldots \times v^d$, where $v^j$ is the
selected values if $X^{j}$ is one of the $num\_covs$ covariates, and otherwise
$v_j = V^j$. In other words, for a random subset of covariates,
$S_{\text{affected}}$ only includes a random subset of their values, and for all
other covariates $S_{\text{affected}}$ includes all of their values. This treatment
effect simulation scheme allows for variation in the size of the subpopulation
that is affected: instances of $S_{\text{affected}}$ can constitute a
small subpopulation (a challenging detection task), a large subpopulation (a
relatively easier detection task), or something in between. Therefore a
set of simulations, with varying parameter values, captures the spectrum of
conditions a researcher may face when analyzing an experiment to identify subpopulations
with significant treatment effects.

The next step in the process involves partitioning the dataset into treatment and control groups, and generating outcomes for each
record. Outcomes are drawn randomly from one of two distributions: the null
distribution ($f_0$) or the alternative distribution ($f_1$). Any record in the
treatment group that has a covariate profile $x \in U_{X}\left(S_{\text{affected}}\right)$ has
outcomes generated by $f_1$; all other records have outcomes drawn from
$f_0$. Therefore only $S_{\text{affected}}$ has a treatment effect, whose
effect magnitude is the distributional difference
between $f_0$ and $f_1$, represented by the parameter $\delta$.

Each of the methods we consider in these experiments has a unique approach to
identifying potential subpopulations with differential treatment effects.
Furthermore, as mentioned in \S\ref{sec:related}, most methods in the
literature do not provide a process for identifying extreme treatment effects.
Therefore, we devise intuitive post-processing steps in an attempt to represent how
researchers would use each method to identify potential subpopulations
that have significant treatment effects. Each method returns identified
subpopulations and corresponding scores (measures) of the treatment effect. For
the single tree-based methods \citep{athey-hte-2016,su-subgroup-2009} we follow
the suggestion of \cite{athey-hte-2016} to perform inference (via a two-sample
Welch T-Test) in each leaf of the tree, and we then sort the leaves based on
their statistical significance. The final subpopulation returned by the tree is
the leaf with the most statistically significant treatment effect, and the final
treatment effect measure is this leaf's statistical significance ($p$-value). 
For a method that provides an individual level treatment effect (and
estimate of variance) \citep{wager-causalforest-2018}, we propose to perform
inference for each unique covariate profile, and return those that are
statistically significant. The final treatment effect measure is the smallest
$p$-value of the covariate profiles. We empirically compare these prominent methods from the literature to our TESS algorithm, selecting the Berk-Jones nonparametric scan statistic to score a subset (\S\ref{sec:hetero_quant_effects}) and the most general empirical estimation approach to model the reference distribution (\S\ref{sec:sec:ref-dist-emp}). The TESS algorithm, by design, provides the
subpopulation it determines to have a statistically significant distributional
change (treatment effect) and a measure of this change, so no
post-processing is necessary.

\subsubsection{Detection Power}
For any given combination of
simulation parameter values ($\delta, num\_covs, value\_prob$), detection power
measures $P(Reject~H_0 \mid H_1(S_{\text{affected}}))$, or how
well a method is able to identify the presence of $S_{\text{affected}}$. This is
accomplished by comparing the treatment effect measure (score of the detected subset) found
under $H_1(S_{\text{affected}})$ to the distribution of the treatment effect
measure under $H_0$. More specifically, for a given set of parameter values, we
generate a random dataset which only exhibits a treatment effect in the randomly
selected subpopulation $S_{\text{affected}}$; each method attempts to detect
this subpopulation. As described in \S\ref{sec:simulations}, each method returns
a final treatment effect measure for the subpopulation it detects in this
affected dataset. For the same dataset, we then conduct randomization
testing to determine how significant this treatment effect measure is under
$H_0$. We make many copies of the dataset (1000 in our
experiments) and in each copy, we generate new outcomes (drawn from $f_0$) such
that no subpopulation has a treatment effect. Each method then generates a
detected subpopulation and corresponding treatment effect measure for each of
these null datasets. These treatment effect measures from the null datasets
together provide an empirical estimate of the distribution of the treatment
effect measure under $H_0$ for that method. Subsequently, a $p$-value is computed for the
treatment effect measure captured under $H_1(S_{\text{affected}})$. This process
is repeated many times (300 in our experiments), where each time we
1) generate a random $S_{\text{affected}}$, 2) generate a random dataset under $H_1(S_{\text{affected}})$ and
compute each method's treatment effect measure, and 3) generate 1000
copies of the dataset with no treatment effect to compute each
method's treatment effect measure distribution under $H_0$. This process creates 300 $p$-values
for each method which describe how extreme each of
the $S_{\text{affected}}$ appear under $H_0$. A method rejects $H_0$ for a given
$p$-value if it is less than or equal to some test-level $\gamma$, corresponding to the
$1-\gamma$ quantile of the null distribution ($\gamma = 0.05$ in our experiments). Therefore, the detection power $P\left(Reject~H_0 \mid
H_1(S_{\text{affected}})\right)$ is captured as the proportion of $p$-values
that are sufficiently extreme that they lead to the rejection of $H_0$ at level $\gamma$.

\subsubsection{Detection Accuracy}
While
detection power measures how well a method identifies the presence of a
subpopulation with a treatment effect $S_{\text{affected}}$, as compared to datasets with no
treatment effect, detection accuracy measures how well a method can precisely and completely identify the affected subpopulation
$S_{\text{affected}}$. Accurately identifying in which subpopulation(s) a
treatment effect exists can be crucial, particularly when there is no prior
theory to guide which subpopulations to inspect, or when the goal itself is to
develop intuition for new theory. As described in \S\ref{sec:simulations}, each
of the methods we consider is able to return the subpopulation that it
determines as having the most statistically significant treatment effect
$S_{\text{detected}}$. Each method will pick out a set of covariate profiles, which
could have coherent structure (as with TESS, Causal Tree, and
Interaction Tree), or be an unstructured collection of individually significant covariate profiles
(as with Causal Forest). To
accommodate both types of subpopulations, we therefore define detection accuracy
as
\begin{equation}
  \label{eqn:accuracy}
  \begin{split}
    \text{accuracy} = \frac{|S_{\text{detected}}\:\cap\: S_{\text{affected}}|}
                            {|S_{\text{detected}}\:\cup\: S_{\text{affected}}|}
                    = \frac{\sum_{R_i} \mathbbm{1}_{\{R_i \in S_{\text{detected}} \cap S_{\text{affected}}\}}}
                            {\sum_{R_i} \mathbbm{1}_{\{R_i \in S_{\text{detected}} \cup S_{\text{affected}}\}}}.
  \end{split}
\end{equation}
where $R_i$ are records in the treatment group. This definition
of accuracy, commonly known as the Jaccard coefficient, is intended to balance
precision (i.e., what proportion of the detected subjects truly have a treatment
effect) and recall (i.e., what proportion of the subjects with a treatment effect are correctly
detected). We note that $0 \le \text{accuracy} \le 1$; high accuracy values
correspond to a detected subset $S_{\text{detected}}$ that captures many of the
subjects with treatment effects and few or no subjects without treatment effects.

\subsection{Simulation Results}
\label{sec:sim-results}
Our first set of results involve a treatment effect that is a mean shift in a
normal distribution: the null distribution $f_0 = N(0,1)$ and the alternative
$f_1= N(\delta,1)$, where $\delta$ captures the magnitude of the signal
(treatment effect). Recall from \S\ref{sec:simulations} that there are three
parameters that we can vary to change the size and magnitude of the signal. For
our simulation, we specifically consider $\delta \in \{0.25, 0.5, \ldots, 3.0\}$,
$num\_covs \in \{1,2, \ldots, 10\}$, and $value\_prob \in \{0.1, 0.2, \ldots,
0.9\}$; the former controls magnitude of the treatment effect, while the latter
two control the concentration of the treatment effect (i.e., the expected size
of the affected subpopulation). Instead of considering every combination, we
select the middle value of each parameter interval as a reference point
($\delta= 1.5, num\_covs = 5, value\_prob = 0.5$) and measure performance
changes for one parameter, while keeping the others fixed.

Figure \ref{fig:parametric-detection-power} shows the changes in each method's
detection power performance as we vary each of the three parameters that
contribute to the strength of the treatment effect. From each of the three
graphs we observe that TESS consistently exhibits more power than (or equivalent
to) the other methods. More importantly, TESS exhibits statistically significant
improvements in power for the most challenging ranges of parameter values (i.e., more subtle signals). The top plot varies effect size (or $\delta$),
which is positively associated with signal strength and negatively
associated with detection difficulty; for values $2.0$ and below TESS has
significantly higher detection power than the competing methods. The middle plot varies the
number of covariates selected to have only a subset of values be affected
($num\_covs$). This parameter is negatively associated with signal strength
and positively associated with detection difficulty; for values $5$
and above, TESS has significantly higher detection power. The
bottom plot varies the expected proportion of values, for the selected
covariates, which will be affected ($value\_prob$). This parameter is
positively associated with signal strength and negatively associated
with detection difficulty; for values $0.5$ and below TESS exhibits
significantly higher detection power. We see that, for sufficiently strong signals
(based on both signal magnitude and concentration), all methods are able to
distinguish between experiments with and without a subpopulation exhibiting a
treatment effect, while TESS provides significant advantages in detection power for weaker signals.
\begin{figure}
  \captionsetup{font=scriptsize,skip=0pt}
  \begin{subfigure}{.5\textwidth}
    \centering
    \includegraphics[width=.99\linewidth]{./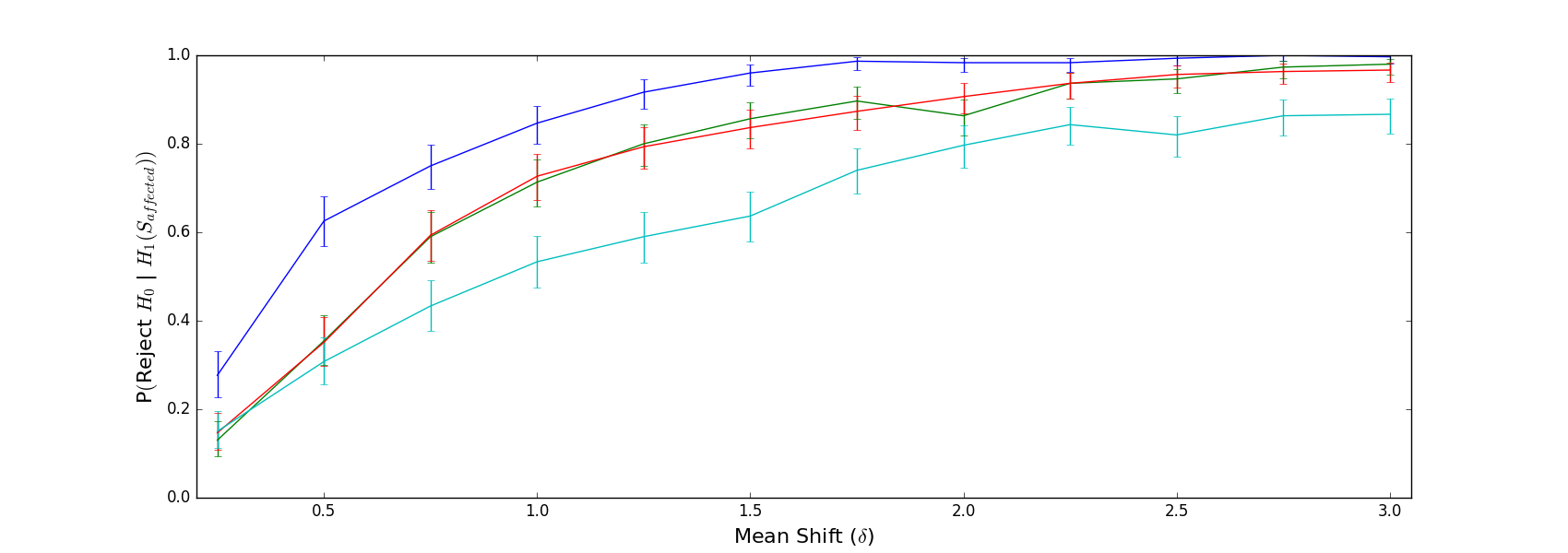}

    \includegraphics[width=.99\linewidth]{./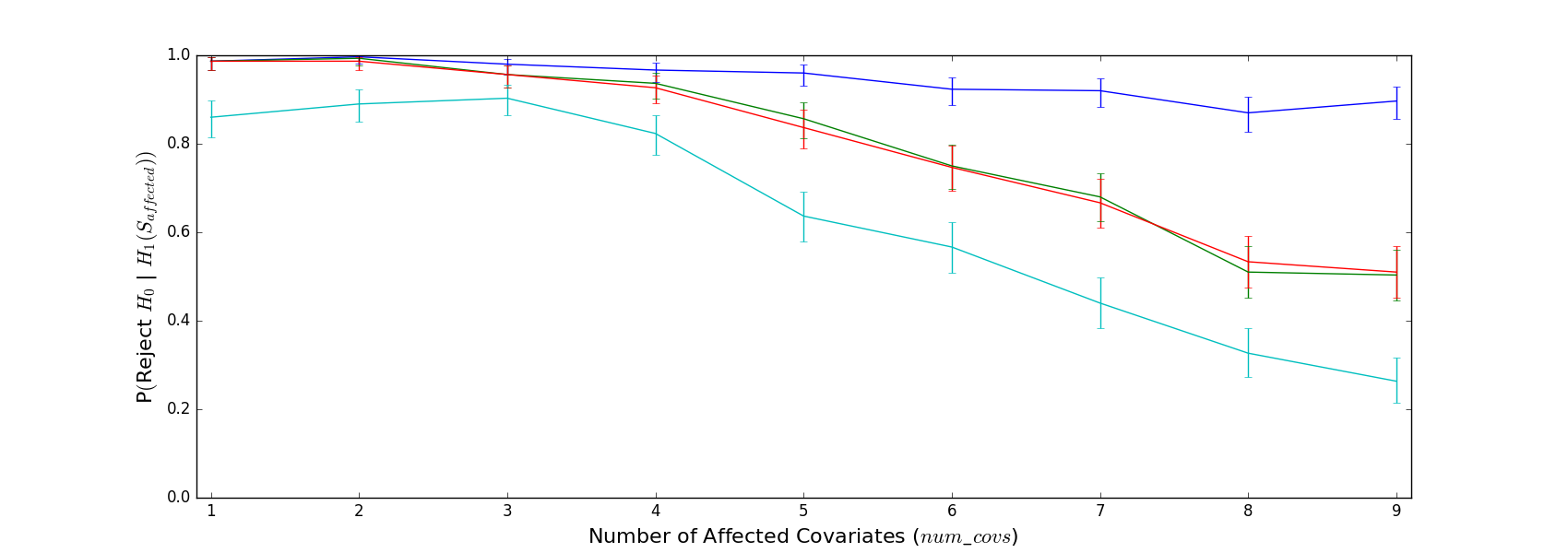}

    \includegraphics[width=.99\linewidth]{./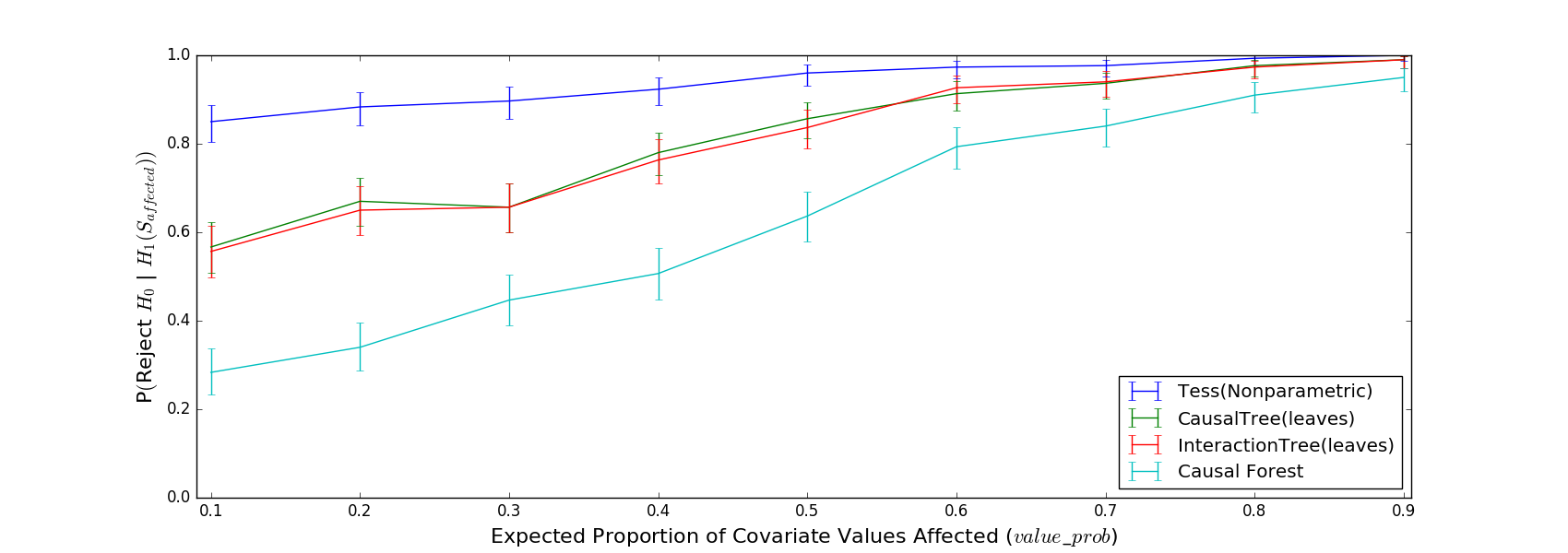}
    \caption{Detection Power}
    \label{fig:parametric-detection-power}
  \end{subfigure}%
  \begin{subfigure}{.5\textwidth}
    \centering
    \includegraphics[width=.99\linewidth]{./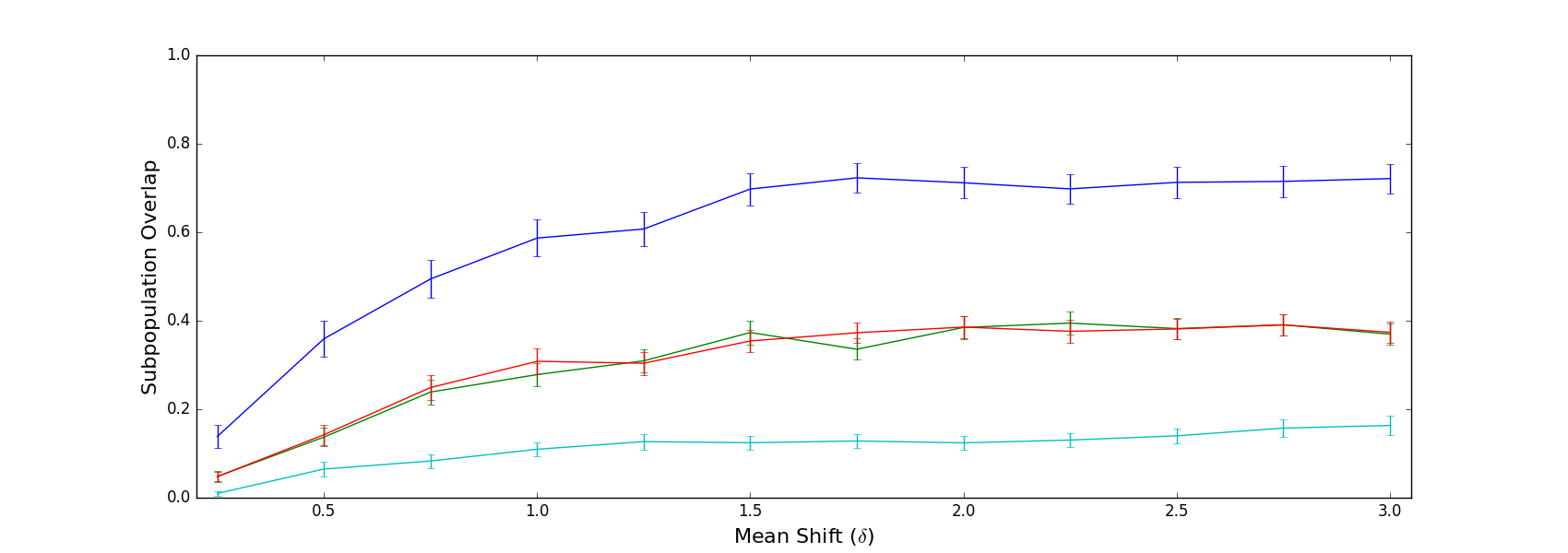}

    \includegraphics[width=.99\linewidth]{./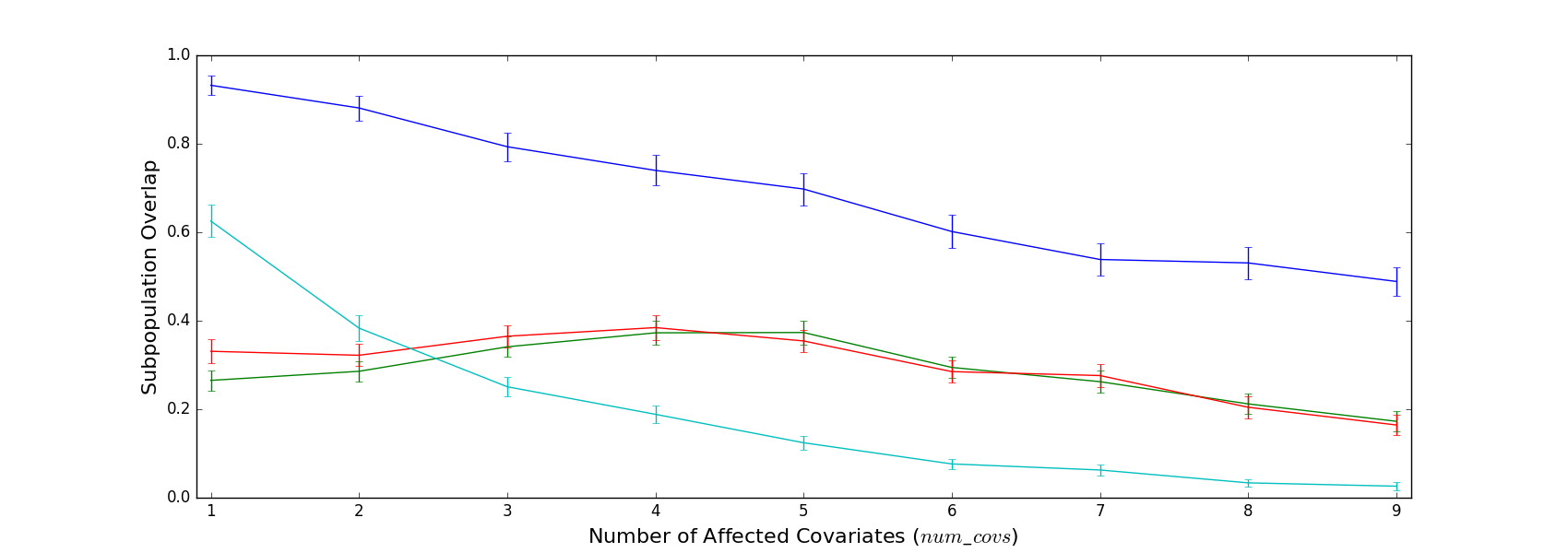}

    \includegraphics[width=.99\linewidth]{./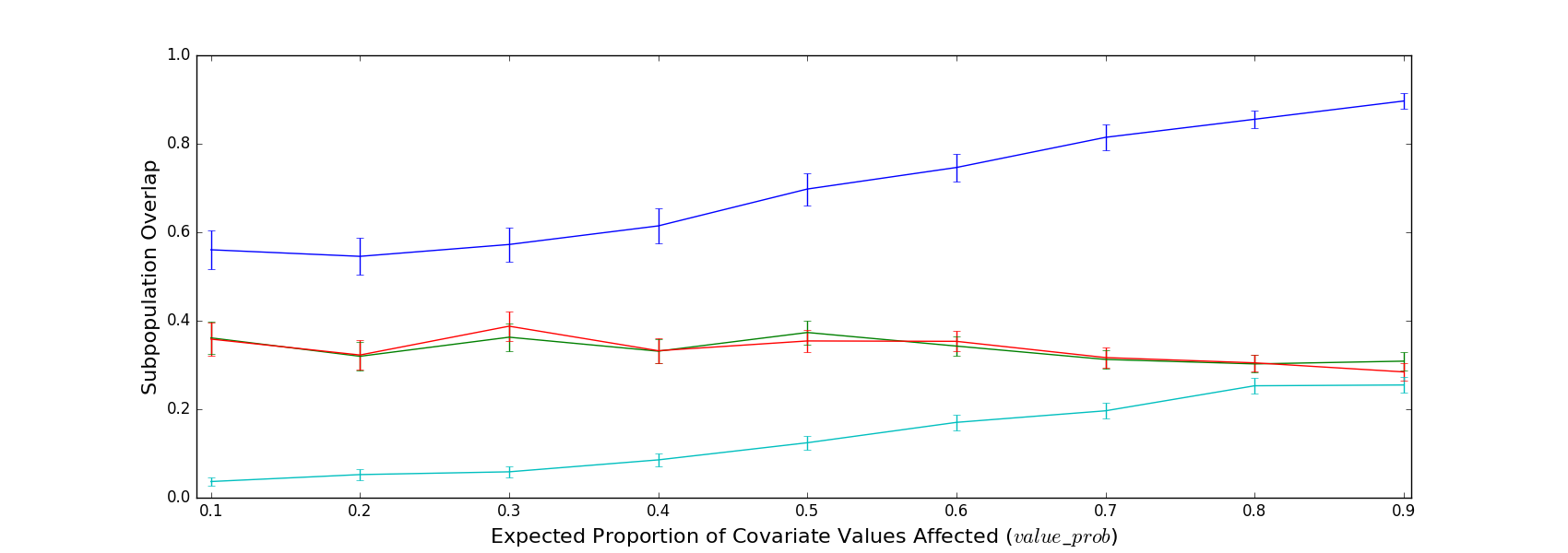}
    \caption{Detection Accuracy}
    \label{fig:parametric-detection-accuracy}
  \end{subfigure}
  \captionsetup{font=footnotesize}
  \caption{Ability of each method to identify subpopulations with mean shift
  treatment effects. The three parameters start as fixed ($\delta=1.5,
  num\_covs=5, value\_prob=0.5$) and then are varied individually to see how
  detection ability varies.}
  \label{fig:parametric-detection-all}
  \begin{subfigure}{.5\textwidth}
    \centering
    \includegraphics[width=.99\linewidth]{./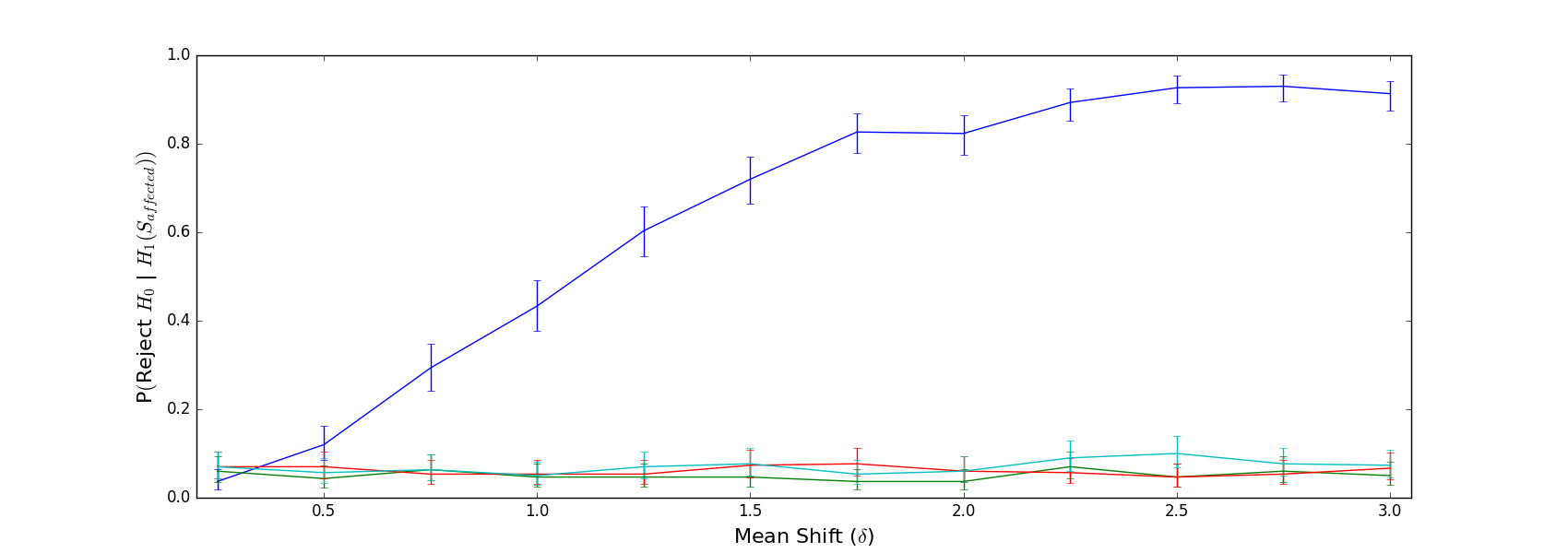}

    \includegraphics[width=.99\linewidth]{./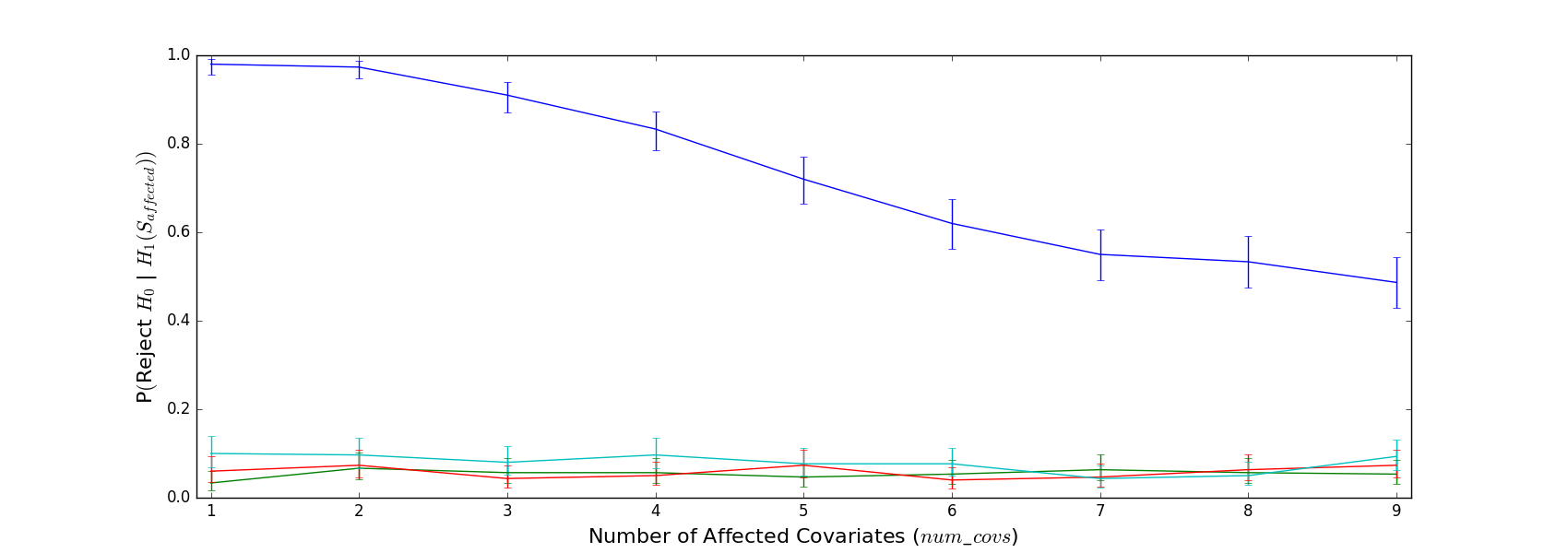}

    \includegraphics[width=.99\linewidth]{./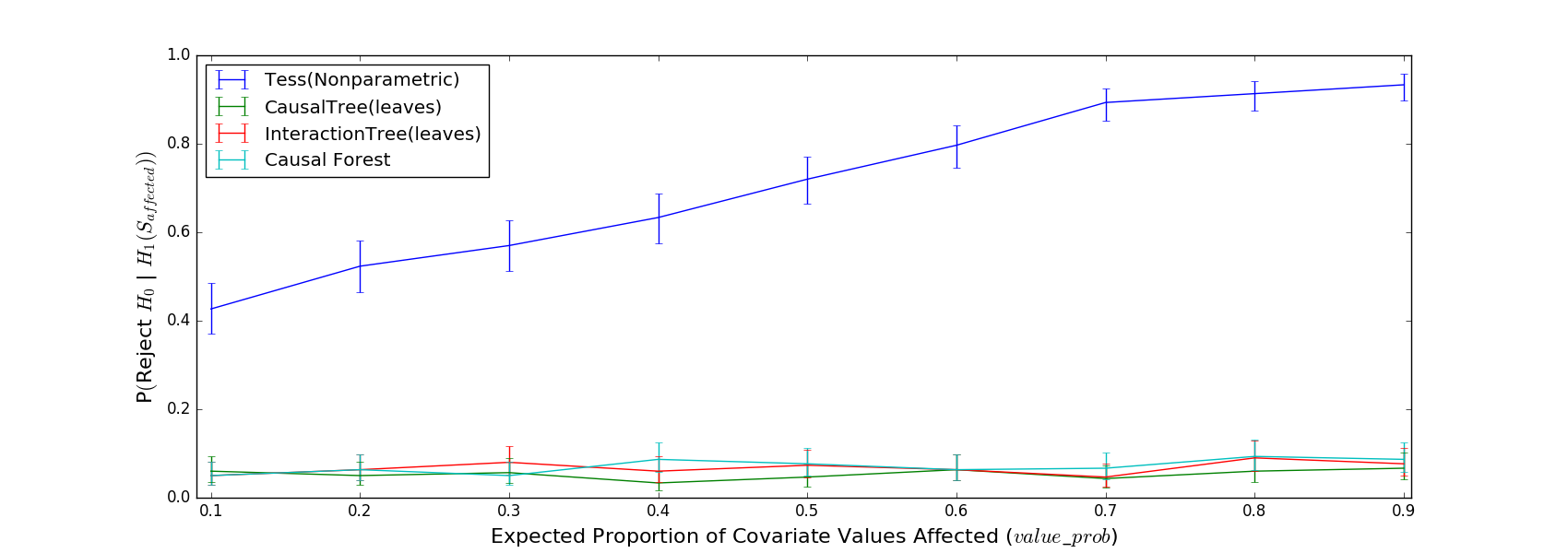}
    \caption{Detection Power}
    \label{fig:nonparametric-detection-power}
  \end{subfigure}%
  \begin{subfigure}{.5\textwidth}
    \centering
    \includegraphics[width=.99\linewidth]{./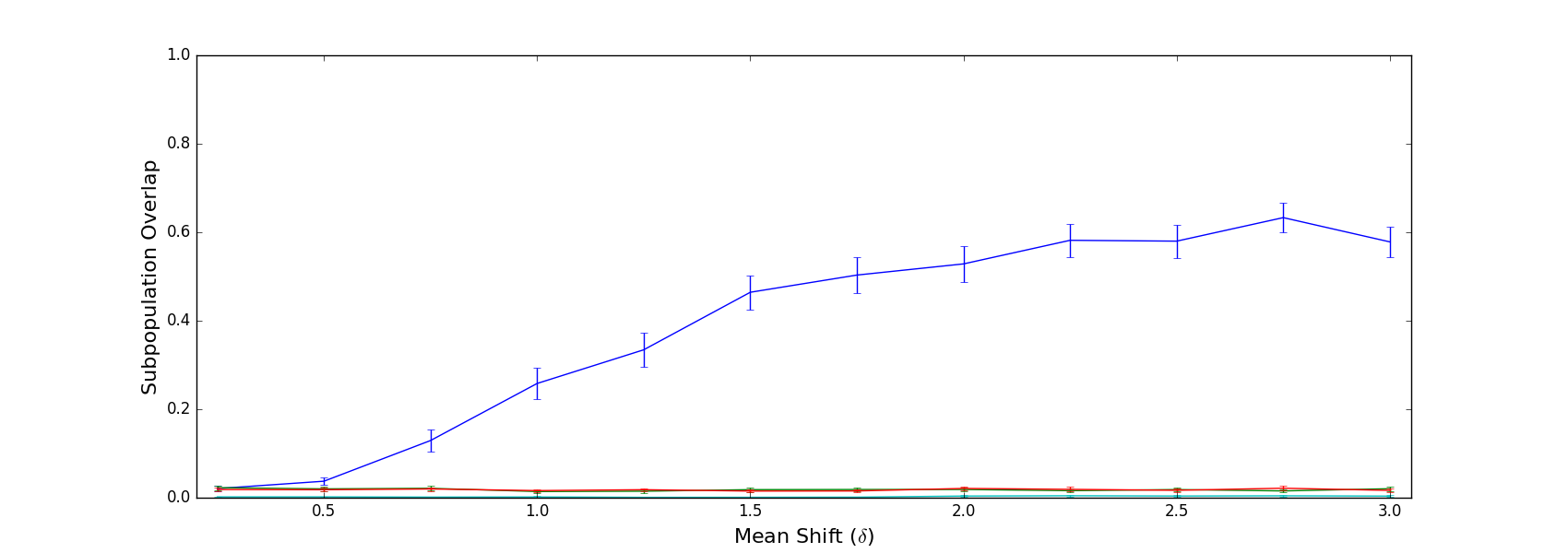}

    \includegraphics[width=.99\linewidth]{./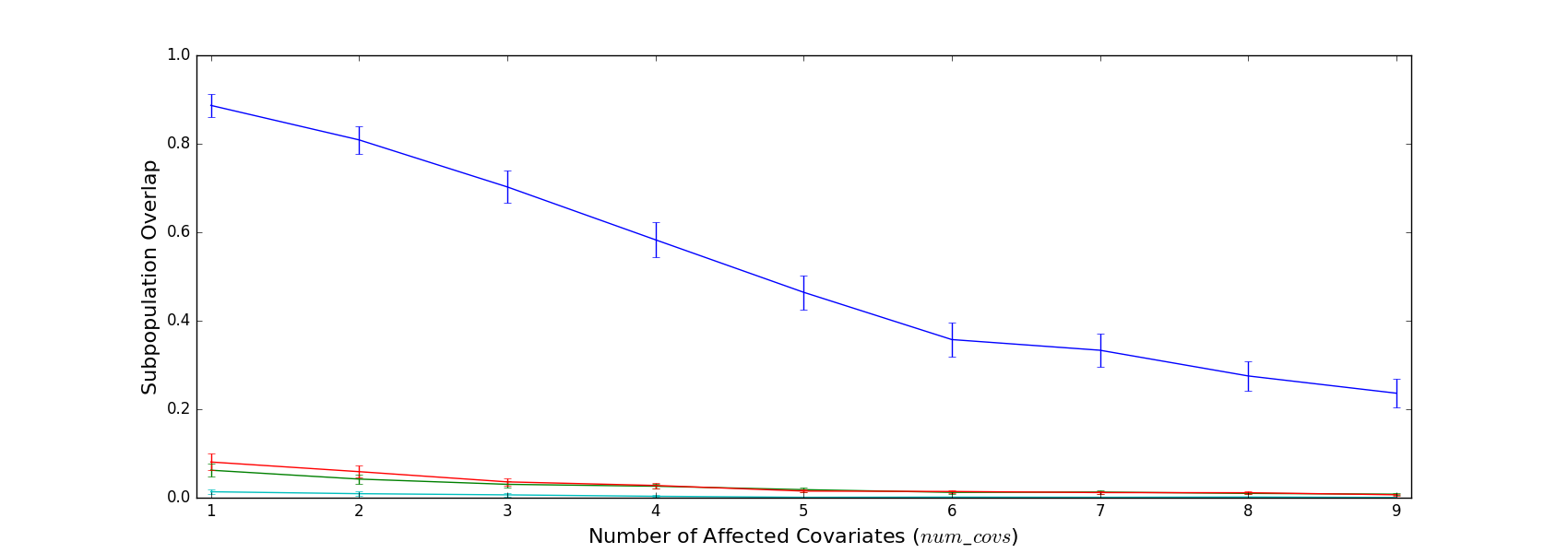}

    \includegraphics[width=.99\linewidth]{./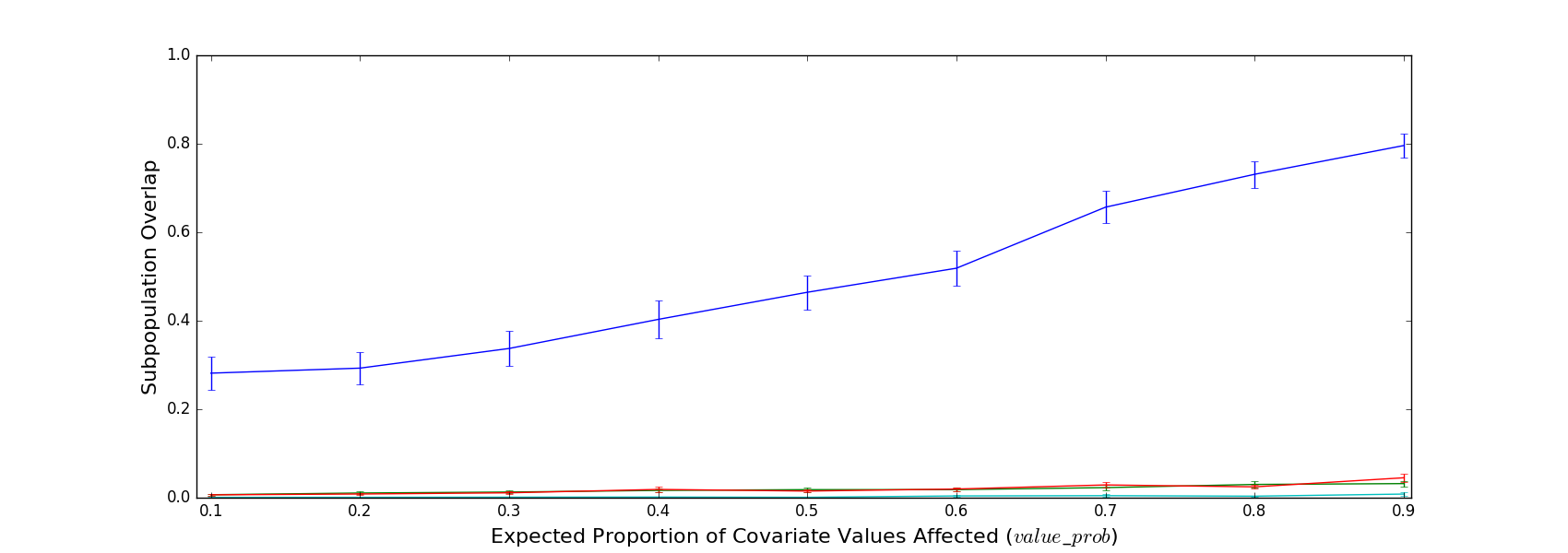}
    \caption{Detection Accuracy}
    \label{fig:nonparametric-detection-accuracy}
  \end{subfigure}
  \caption{Ability of each method to identify subpopulations with an unaffected
  mean, but distributional treatment effect. The three parameters start as
  fixed ($\delta=1.5, num\_covs=5, value\_prob=0.5$) and then are varied
  individually to see how detection ability varies.}
  \label{fig:nonparametric-detection-all}
\end{figure}

Figure \ref{fig:parametric-detection-accuracy} shows the changes in each
method's detection accuracy as we vary each of the three parameters
that contribute to the strength of the treatment effect. From each of the three
graphs we observe that TESS consistently exhibits significantly
higher accuracy than any other method. Recall that we measure subpopulation
accuracy as in \eqref{eqn:accuracy}, which captures both precision and recall of
the subpopulation returned by a method. The single tree methods tend to have
high precision but low recall, resulting in compromised overall accuracy.
Intuitively, these results indicate that the truly affected subpopulation is
being spread over multiple leaves of the tree, despite its goal of
partitioning the data into subpopulations with similar outcomes. This phenomenon may be
caused by the greedy search aspect of tree learning: if the tree splits the affected subpopulation between two branches of the tree, the recall of
any leaf will be compromised, especially when this split occurs close to the
root of the tree. The
Causal Forest ensemble method, on the other hand, exhibits relatively higher recall than precision.
These results indicate that it is difficult for Causal Forest to
distinguish between the covariate profiles that do and do not make up the truly
affected subpopulation, as profiles from both sets appear to have statistically
significant treatment effects. This inability stems from the fact that ensemble
methods are designed to provide individual level predictions, therefore their
conclusions regarding the statistical significance of a covariate profile are made
in isolation from the other covariate profiles that also make up the affected
subpopulation. Unlike single-tree methods, ensemble methods do not provide
coherent and natural groupings of subpopulations. TESS, however, does provide a
coherent subpopulation, which seems to balance precision and recall, maintaining
a significantly higher subpopulation accuracy.

It is also important to note that the data generating process for these simulations
(a treatment effect that occurs as a mean shift between treatment and control distributions) corresponds to the modeling assumptions of the current methods in the literature,
which specifically attempt to detect mean shifts, while TESS is designed to detect more general distributional changes. TESS's improved performance, as compared to the competing methods, in these adverse conditions may be due to its subset-scanning based approach, which combines information across groups of data in an attempt to find exactly and only the affected subset of data. Even if each individual covariate profile that is truly affected exhibits small evidence of a treatment effect, TESS can leverage the group
structure and signal of all the affected covariate profiles, and correctly conclude that collectively the subpopulation exhibits significant evidence of a
treatment effect. Additionally, the fact that TESS executes its optimization iteratively, unlike the greedy search of tree-based methods, enables it to rectify
initial choices of subset that are later determined to be inferior.

Our second set of results considers treatment effects that do not
align with the mean shift assumption that pervades the literature. Therefore,
the null distribution is still $f_0 = N(0,1)$; however, the alternative is a
mixture distribution $f_1= \frac{1}{2}N(-\delta,1)+\frac{1}{2}N(\delta,1)$.
Here $\delta$ still captures the magnitude of the signal (treatment effect),
and the remainder of the simulation process remains unchanged. This mixture
distribution alternative, however, changes the detection task
dramatically: while the average treatment effect is zero, there is still a
clear difference in the outcome distribution between treated and control individuals.

Figure \ref{fig:nonparametric-detection-all} shows how each method's detection
power and accuracy change as we vary each of three parameters that contribute to the strength
of the treatment effect. If we compare these simulations to those above with a
mean shift, TESS exhibits a consistent pattern of high performance, while the
performance of the competing methods is dramatically lower. The
detection power results indicate that, for the competing methods, it is hard to
distinguish even strong distributional changes from random chance, while the accuracy
results indicate that their pinpointing of the affected subpopulation is little better
than random guessing. Given that there is no observable mean shift in these simulations, these results
are consistent with what we expect: TESS is designed to identify more general distributional changes, while
the other methods are unable to identify distributional changes without corresponding mean shifts.

\subsection{A Case Study on Identifying Subpopulations: Tennessee STAR}
\label{sec:star-eda}

There appears to be a consensus in the literature that the presence of a teaching aide in a
regular-size classroom has an insignificant effect on test
scores~\citep{word-star-1990,krueger-star-1999,folger-star-1989,stock_watson-econ-2nd}.
(One significant effect was observed in first grade, but this effect was
largely considered to be a false positive.) Therefore, we want to use TESS to
compare regular classrooms with an aide to regular classrooms without an aide, to determine if
there appears to be a subpopulation that was significantly and positively affected by the
treatment. To do so, we replicated the analysis of the internal
STAR team, using TESS to extend the results, with the goal of demonstrating what
the STAR team could have surmised with present-day tools for uncovering
heterogeneity. We replicate the original STAR analysis from \citep{word-star-1990,stock_watson-econ-2nd}
which includes the sum of the Stanford math and reading scores as the outcome of
interest. For the data provided to TESS for detection, we combine the panel
data across years and include student's grade level as a covariate.
\begin{figure}
  \centering
  \captionsetup{font=scriptsize,skip=0pt}
  \begin{subfigure}{.5\textwidth}
    \centering
    \includegraphics[width=.75\linewidth]{./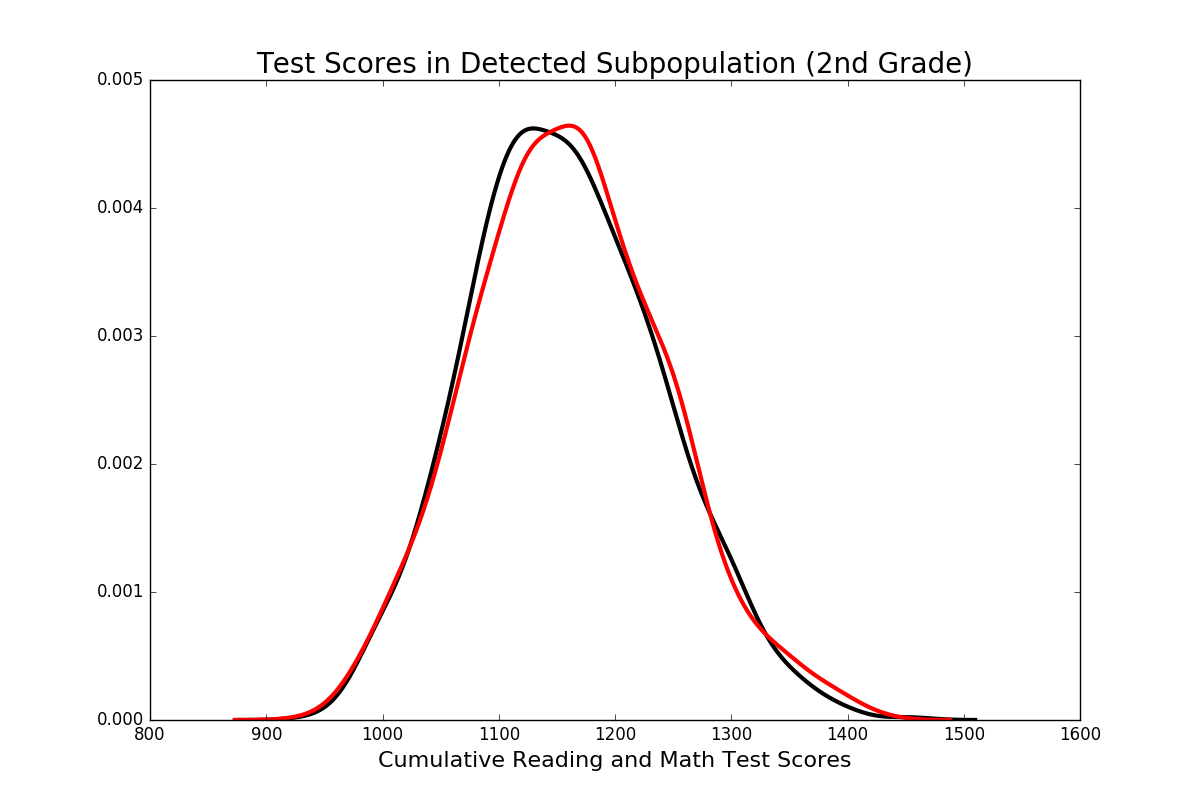}
    \caption{All students in 2nd grade}
    \label{fig:density-2nd-all}
  \end{subfigure}%
  \begin{subfigure}{.5\textwidth}
    \centering
    \includegraphics[width=.75\linewidth]{./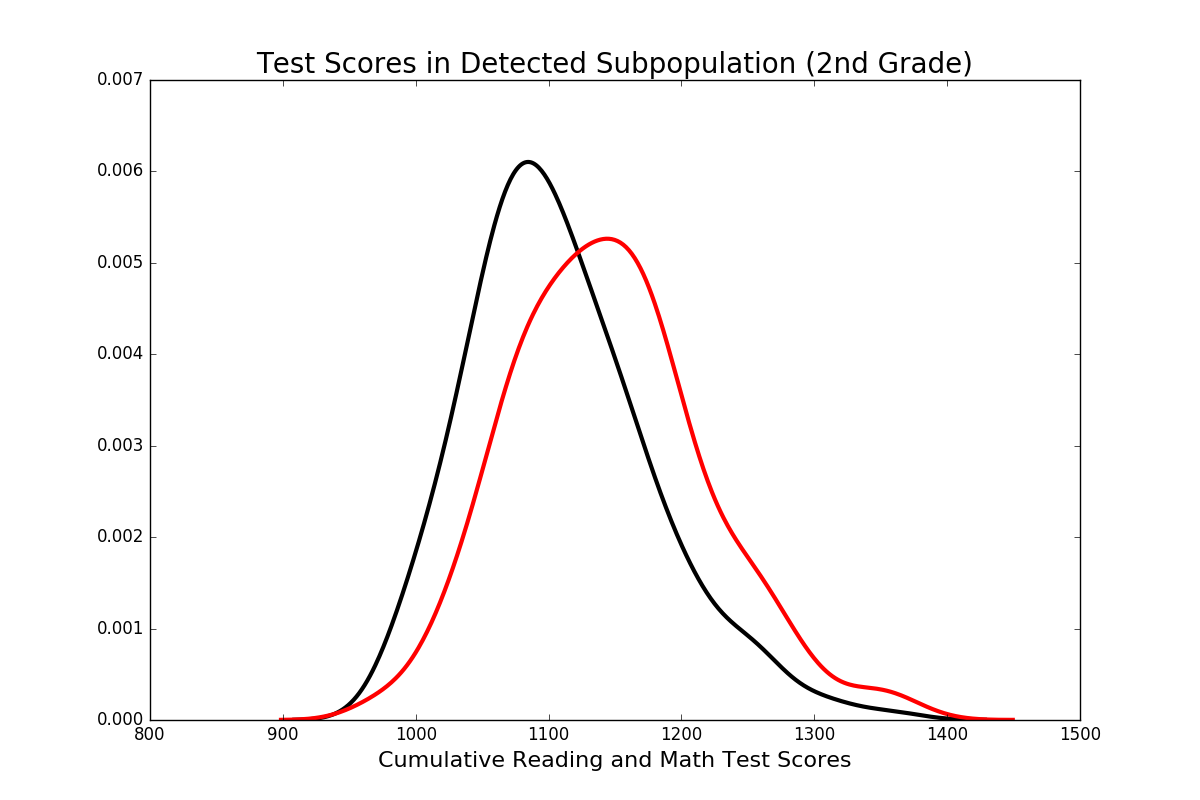}
    \caption{Detected Subpopulation in 2nd grade}
    \label{fig:density-2nd-subpop}
  \end{subfigure}
  \caption{Kernel density plots of 2nd grade test scores for treatment students
  (red) who were in a regular classroom with a teacher's aide and control
  students (black) who did not have a teacher's aide.}
  \label{fig:density-2nd}

  \begin{subfigure}{.5\textwidth}
    \centering
    \includegraphics[width=.75\linewidth]{./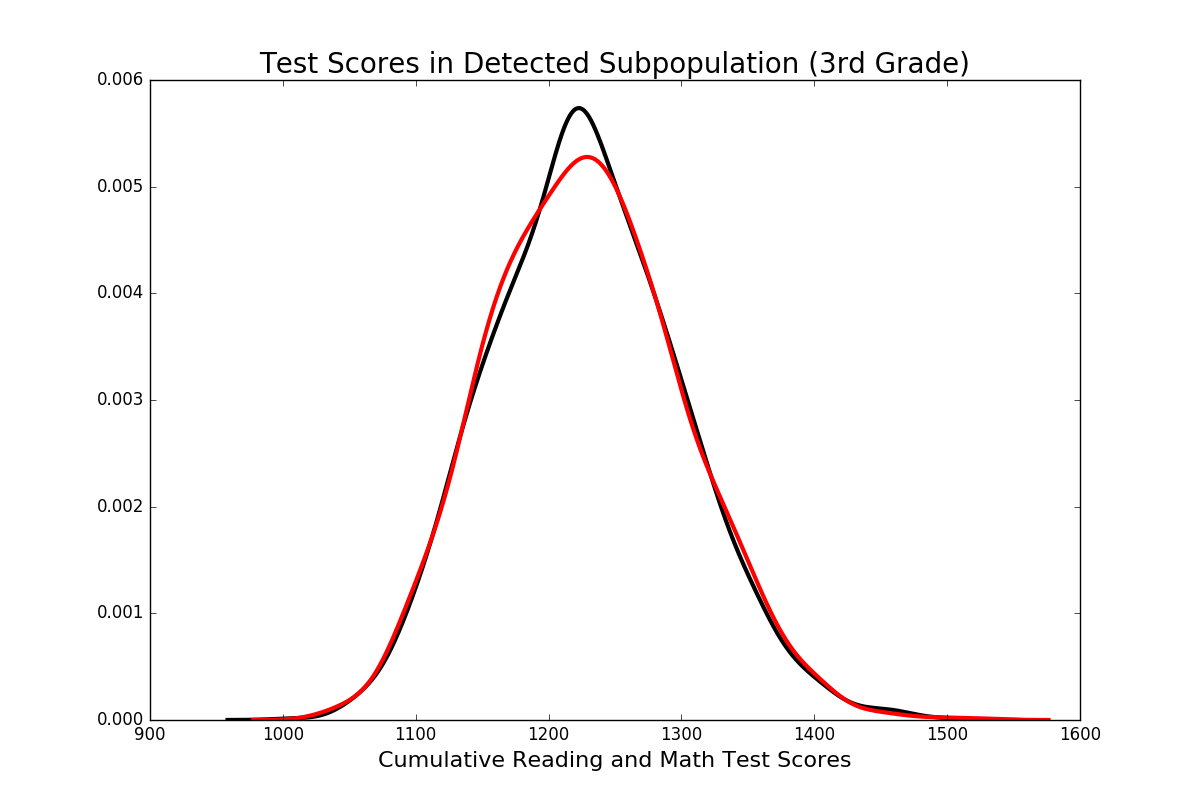}
    \caption{All students in 3rd grade}
    \label{fig:density-3rd-all}
  \end{subfigure}%
  \begin{subfigure}{.5\textwidth}
    \centering
    \includegraphics[width=.75\linewidth]{./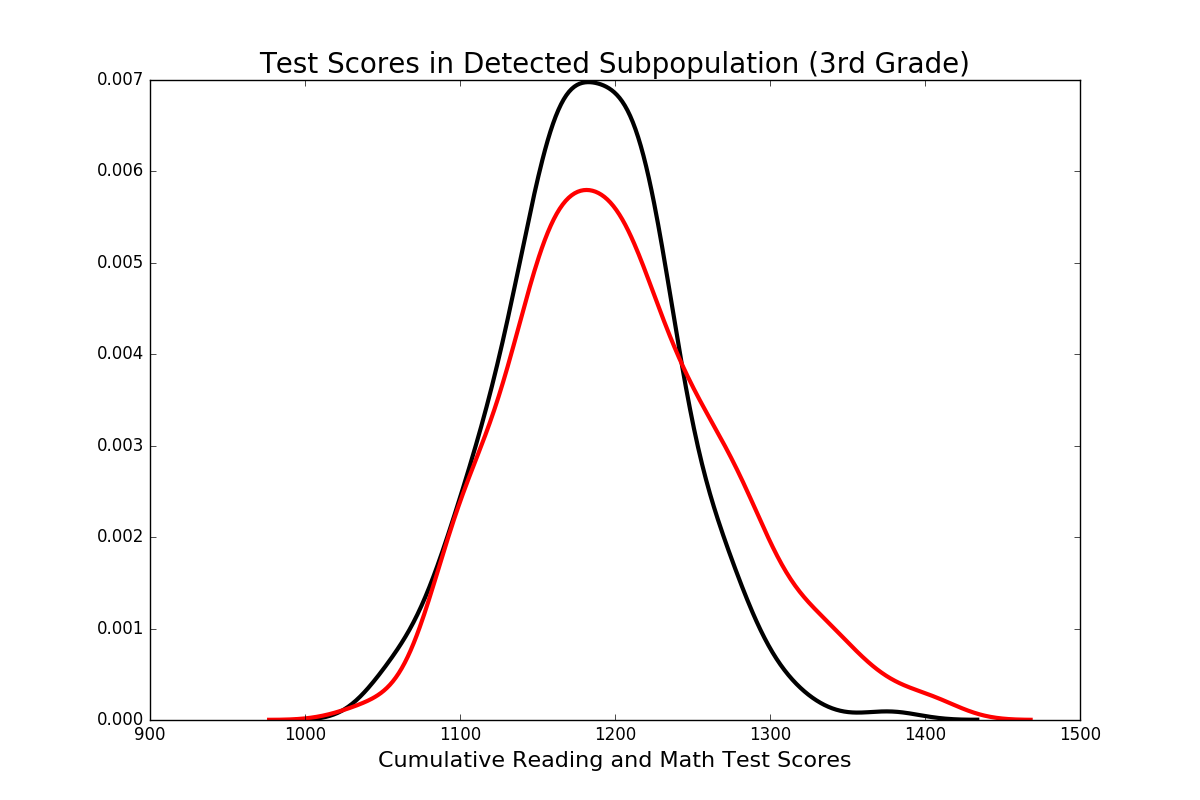}
    \caption{Detected Subpopulation in 3rd grade}
    \label{fig:density-3rd-subpop}
  \end{subfigure}
  \caption{Kernel density plots of 3rd grade test scores for treatment students
  (red) who were in a regular classroom with a teacher's aide and control
  students (black) who did not have a teacher's aide.}
  \label{fig:density-3rd}
\end{figure}
We would also like to obtain an unbiased estimate of the average treatment effect
in the subpopulation identified by TESS. Therefore, we follow a cross-validation
paradigm, where the entire dataset is partitioned into ten folds, and
iteratively each fold is held out as a validation set (to obtain an estimate of the
treatment effect) while the remaining nine folds are provided to TESS (for
detection). We further partition the data into records
corresponding to students observed in a regular classroom with an aide and a
regular classroom without an aide, which serve as treatment and control groups respectively. In three of the ten folds, TESS identified exactly the same subset, which we will call the ``detected subpopulation''.
Essentially, this detected subpopulation is composed of students in second or
third grade, who attended an inner-city or urban school, receiving instruction
from a teacher with 10 or more years of experience\footnote{The detected
subpopulation excluded teacher experience between 25 and 30
years. Including this range yields qualitatively the same results and
conclusions.}. Therefore, it appears that the presence of an aide raised the
test-scores of students exhibiting the selected covariate values described above
for grade, school type, and teacher experience, in addition to any values for
gender, free-lunch status, teacher ethnicity, and teacher degree. The
subpopulations that were returned in each of the ten folds exhibited a large amount
of agreement with the detected subpopulation: the fold subpopulations exhibited
88\% agreement (on average) with the detected subpopulation on the detection
status of a record. The estimated average treatment effect for this detected
subpopulation, averaged across all validation folds, is approximately a 34.19 point
increase in total test score (36.45 and 22.28 for second and third grades respectively).

Given this consistency across folds, we use the full data to better understand
the effect in the detected subpopulation generally. Table \ref{table:reg}
shows the evaluation of the treatment effect for all second-grade students
(column 1), second-grade students in the detected subpopulation (column 2), and second-grade students in the
complement of the detected subpopulation (column 3). Additionally, Figure
\ref{fig:density-2nd} shows the kernel density plots of the cumulative scores for
all second-grade students and students in the detected subpopulation
respectively. Figure \ref{fig:density-2nd-all} depicts a strong similarity in
the distribution of all second graders' scores with and without a full-time aide; there is a slight
difference around the center of the distribution, but its magnitude is not
sufficiently large to be significant, as seen by column 1 of Table \ref{table:reg}. Conversely,
Figure \ref{fig:density-2nd-subpop} depicts a difference in test scores for the detected subpopulation
of second graders: there
appears to be a clear effect of the treatment (dominated by a large mean shift),
supported by column 2 of Table \ref{table:reg}. We conduct a similar
analysis with third graders, and observe similar results in Figure
\ref{fig:density-3rd} and Table \ref{table:reg}. However, the effect of the
treatment in third grade appears to result in less of a mean shift, and is
better characterized by a change in the skew (third moment) and
therefore, the overall form of the distribution (Figure
\ref{fig:density-3rd-subpop}). We note that because TESS is able to identify effects
that change the distribution (and therefore higher order moments) of test
scores, even if the difference in mean score between treatment and control
students in third grade was smaller, TESS could potentially still identify the
existence of a treatment effect.
\begin{table}
  \captionsetup{font=scriptsize,skip=0pt}
  \centering
    \resizebox{\columnwidth}{!}{
    \begin{tabular}{|r|c|c|c|c|c|c|}
\cline{2-7}    \multicolumn{1}{r|}{} & All ($2^{nd}$) & Detected ($2^{nd}$) & Undetected ($2^{nd}$) & All ($3^{rd}$) & Detected ($3^{rd}$) & Undetected ($3^{rd}$) \bigstrut\\
    \hline
    \multicolumn{1}{|l|}{Treatment} & 3.479 & 36.066*** & 1.309 & -0.291 & 18.703*** & 0.1 \bigstrut\\
    
    \multicolumn{1}{|l|}{(std. dev.)}  & (2.547) & (6.055) & (2.772) & (2.277) & (5.18) & (2.478) \bigstrut\\
    \hline
    \multicolumn{1}{|l|}{P-value} & 0.172 & \textless0.001 & 0.637 & 0.898 & \textless0.001 & 0.968 \bigstrut\\
    \hline
    \multicolumn{1}{|l|}{Observations} & 4263 & 620 & 3643 & 4063 & 706 & 3357 \bigstrut\\
    \hline
    \end{tabular}%
    }
    \caption{Table of estimated treatment effects on student test scores in 2nd and 3rd grade. *** indicates $p<0.001$.}
  \label{table:reg}%
\end{table}%

There appears to be another consensus in the literature that small classrooms
have a consistent, positive, and significant effect
\citep{word-star-1990,krueger-star-1999,folger-star-1989};
therefore, we also compare small classrooms to regular classrooms, and
determine whether there appears to be a subpopulation which is the main driver of
this effect. We conduct an analysis as above but with STAR data records
corresponding to students observed in a small classroom (treatment group) and a regular classroom (control group).
For this analysis, TESS identified the entire population, which is congruent with the
previous literature's analysis of the consistent and significant average
treatment effect in each grade. This result from TESS appears to indicate that
the effect of small classroom size was not limited to a specific subpopulation.
For both TESS analyses,
we also conducted permutation testing to compensate for multiple hypothesis
testing. Based on these results, we conclude that there is a less than 0.01\% chance we would
obtain a subpopulation with a score as extreme under the null hypothesis.

The detected subpopulation in the classrooms with aides is not only statistically
significant, but may also
provide domain insight into the efficacy of full-time aides. A possible explanation for
the effect we observe in the detected subpopulation is the fact that 13 schools
were chosen at random to have teachers participate in an in-service training session, which
the literature has also deemed ineffective \citep{word-star-1990}. More
specifically, 57 teachers were selected each summer from these schools
to participate in a three-day in-service to help them teach more effectively in
whatever class type they were assigned to; part of the instruction focused on
how to work with an aide and also had the aides present. We note that the in-service only 
occurred during the summers prior to 2nd and 3rd grade, which are the grades identified by TESS.
Therefore, it is possible that when
provided proper training, the combination of an aide and an experienced teacher
can provide a significantly enhanced education environment even in the
challenging teaching environments that exist in inner-city and urban schools.
An additional explanation is
that the educational benefits may be cumulative--i.e., in each additional year a
student in this subpopulation has access to the combination of an aide and
experienced teacher, the treatment effect compounds--similar to what has been
demonstrated in small classrooms for the overall population
\citep{krueger-star-1999}. However, unlike in small classrooms, for this
subpopulation in regular classrooms with an aide, the effects were not large
enough to be distinguishable from zero (given the much smaller sample size of the affected subpopulation and smaller treatment effect) until
after two years. While a more detailed follow-up analysis of these hypotheses might reveal other causal mechanisms at work, we believe that these
results do present evidence that a treatment previously
believed to be ineffective may actually have been effective for a particularly
vulnerable subpopulation. Therefore, this analysis provides a sense of how TESS
can be used as a tool for data-driven hypothesis generation in real-world
policy analysis.

\section{Conclusions}
\label{sec:conclusion}
This paper has presented several contributions to the literature on statistical
machine learning approaches for heterogeneous quantile treatment effects. Specifically, we detect the existence of a subpopulation for which the conditional quantile treatment effect (CQTE) is non-zero. This allows
detection of treatment effects that manifest as arbitrary effects
on the potential outcome distributions (or specific quantiles), rather than being limited to detection of
mean shifts. Furthermore, we consider the challenge of identifying whether any subpopulation has been
affected by treatment, and precisely characterizing the affected subpopulation, as opposed to
the more typical problem setting of estimating individual-level treatment effects.
We formalize the identification of subpopulations with significant treatment effects as an anomalous pattern
detection problem, and present the Treatment Effect Subset Scan (TESS) algorithm, which
serves as a computationally efficient test statistic for the maximization of CQTE over all subpopulations. We demonstrate that the estimator used by TESS
satisfies the linear-time subset scanning property, allowing it to be efficiently
and exactly optimized over subsets of a covariate's values, while evaluating only
a linear rather than exponential number of subsets. This efficient conditional optimization step is incorporated into
an iterative procedure which jointly maximizes over subsets of values for each
covariate in the data: the result is a subpopulation, described as a subset of
values for each covariate, which demonstrates the most evidence for a
statistically significant treatment effect. In addition to its computational
efficiency, we derive desirable statistical properties for the TESS estimator:
bounded asymptotic probability of Type I and Type II errors under the sharp null hypothesis of no treatment effect, as well as providing sufficient
conditions under the alternative hypothesis that will result in TESS exactly
identifying the affected subpopulation. These properties apply more generally to the class of nonparametric scan statistics
upon which TESS is built; therefore, this theory also provides additional contributions
to the anomalous pattern detection, scan statistics, and goodness-of-fit literatures.

In addition to proposing a novel algorithm with desirable properties, we provide
an extensive comparison between TESS and other recently proposed statistical
machine learning methods for heterogeneous treatment effects (Causal Tree,
Interaction Tree, and Causal Forest) through semi-synthetic simulations. Our
results indicate that TESS consistently outperforms the other methods in its
ability to identify and precisely characterize subpopulations which exhibit treatment effects.
TESS significantly outperforms competing methods in the challenging scenarios
where the treatment effect signal is weak (i.e., the signal magnitude is low or the affected subpopulation is
small) because the subset scanning approach allows it to combine
subtle signals across various dimensions of data in order to
identify effects of interest. Moreover, TESS's detection performance is consistent even
when the treatment outcome distribution in the affected subpopulation has the
same mean as the control outcome distribution, while the competing
methods demonstrate essentially no ability to identify the affected
subpopulation in the absence of a mean shift.

After demonstrating TESS's performance through simulation, we explore the well-known Tennessee STAR experiment, searching for previously unidentified subpopulations
with significant treatment effects. As a result of this analysis, TESS uncovered an intuitive subpopulation
that seems to have experienced extremely significant improved test scores as a
result of having a teacher's aide in the classroom, a treatment that has
consistently been considered ineffective (as measured by the average treatment effect) by the literature on the
Tennessee STAR. This provides a sense of how TESS can be utilized as a tool for
generating hypotheses to be further explored and tested. We do however caution
researchers to view algorithms like TESS not as a replacement, but rather an
assistive tool, for developing scientific and behavioral theory. Results discovered by these methods should be investigated further and
evaluated to develop a deeper theoretical understanding of the phenomena they
uncover. When used to this end, these tools fill a critical void: in many
contexts it is rare to know \emph{a priori} which hypotheses are relevant and supported
by data, and the use of traditional methods (e.g., regression) puts the onus on
the researcher to know which hypothesis to test. This process
necessitates that theory comes first, and subsequent investigation is a form
of confirmatory analysis. However, such a process can become an impediment to
data-driven discovery: there is an increasing need for
scalable methods to use (big) data to generate new hypotheses, rather than just confirming pre-existing beliefs.

In the late 1970s, John W. Tukey began to outline his vision for the
future of statistics, which included a symbiotic relationship between
exploratory and confirmatory data analysis. He argues these two
forms of data analysis ``can--and should--proceed side by
side''~\citep{tukey-eda-1977} because he believed ideas ``come from previous
exploration more often than from lightning strokes''~\citep{tukey-eda_cda-1980}. To
this end Tukey advocates for using data to suggest hypotheses to test, or what
we now call data-driven hypothesis generation.
We see our work as the natural evolution of Tukey's vision of data analysis: we
develop an approach--rigorously conducted and theoretically grounded--to conduct
exploratory analysis in randomized experiments, with the hope of catalyzing
``lightning strokes'' of discovery and the advancement of science.

\bibliographystyle{abbrv}
\bibliography{tess}
\clearpage
\begin{appendices}

\section{Score Functions}
\label{sec:scoring_functions}
To begin we revisit the general form of the score function--or equivalently
the quantile treatment effect test statistic--that we refer to as the nonparametric scan statistic. Additionally, we establish equivalences, as different forms
will lend themselves to various proof strategies we implement later.

\begin{equation}
    \label{eqn:F(S)}
    \begin{alignedat}{2}
      \max_{S} F(S) &= \max_{S,\alpha} F_{\alpha}(S) &
        &=\max_{S,\alpha,\beta}F_{\alpha,\beta}(S)\\
      &= \max_{S,\alpha} \Delta \left( \alpha, N_{\alpha}(S), N(S) \right) &
        &=\max_{S,\alpha,\beta}\sum_{x \in U_X(S)}{\omega\left( \alpha, \beta,
        N_{\alpha}(x), N(x) \right)}.
    \end{alignedat}
\end{equation}

To motivate the use of our score function to evaluate the quantile treatment effect in subpopulation $S$, we will first demonstrate that simply maximizing the original conditional quantile treatment effect, $$\max_S \max_\alpha \mathbb{F}_{Y(1)|X\in S}^{-1}(\alpha) - \mathbb{F}_{Y(0)|X\in S}^{-1}(\alpha),$$ is unproductive as it reduces to the trivial solution of a singular covariate profile.
\begin{prop}
  \label{prop:trivial-max}
If $\left(S^{\ast}, \alpha^{\ast} \right)= \arg\max_{S \subseteq D,\alpha} \mathbb{F}_{Y(1)|X\in S}^{-1}(\alpha) - \mathbb{F}_{Y(0)|X\in S}^{-1}(\alpha)$ and\\ $x^{\ast} = \arg \max_{x \in D} \mathbb{F}_{Y(1)|X = x}^{-1}(\alpha^{\ast}) - \mathbb{F}_{Y(0)|X = x}^{-1}(\alpha^{\ast})$, then $S^{\ast} = x^{\ast}$.
\end{prop}
\begin{proof}

\noindent First let $S \subseteq D$ be any fixed subpopulation in our experimental data $D$, and $\alpha \in (0,1)$ be a fixed quantile. Next we re-write a (potential outcomes) distribution as follows:
\begin{equation*}
	\begin{split}
		\mathbb{F}_{Y|X \in S}(y) &= \frac{\mathbb{F}_{Y,X \in S}(y)}{F(X \in S)} \\
        &= \frac{P( Y \leq y, X \in S)}{P(X \in S)} \\
		&= \frac{\sum_{x \in S} P( \{Y \leq y, X = x\})}{P(X \in S)} \\
        &= \sum_{x \in S} P\left( Y \le y | X = x \right) P\left(X = x | x \in S\right).
	\end{split}
\end{equation*}
This allows us to also rewrite the (potential outcomes) quantile function as follows:
\begin{align*}
		\mathbb{F}_{Y|X \in S}^{-1}(\alpha) &= \sum_{x \in S} \inf_y \{y : P\left( Y \le y | X = x \right) \ge \alpha \}  P\left(X = x | x \in S\right)\\
		&= \sum_{x \in S} \mathbb{F}_{Y|X=x}^{-1}(\alpha) P\left(X = x | x \in S\right).
\end{align*}
Therefore, if we define
\begin{equation*}
    Q_x(\alpha) = \mathbb{F}_{Y(1)|X=x}^{-1}(\alpha) - \mathbb{F}_{Y(0)|X=x}^{-1}(\alpha)
\end{equation*}
then
\begin{equation*}
	\begin{split}
\mathbb{F}_{Y(1)|X\in S}^{-1}(\alpha) - \mathbb{F}_{Y(0)|X\in S}^{-1}(\alpha) &= \sum_{x \in S}  Q_x(\alpha) P\left(X = x | x \in S\right) \\
    &\le \max_{x \in S} Q_x(\alpha),
	\end{split}
\end{equation*}
where the inequality follows from the fact that $\sum_{x \in S} P\left(X = x | x \in S\right) = 1$. Moreover, because this inequality holds $\forall S,\alpha$, then it also must hold for $S^{\ast}, \alpha^{\ast}$. Therefore, we finally have
\begin{align}
    \label{eq:S<=x}
    \max_{S \subseteq D} \mathbb{F}_{Y(1)|X\in S}^{-1}(\alpha^{\ast}) - \mathbb{F}_{Y(0)|X\in S}^{-1}(\alpha^{\ast}) &\le \max_{x \in S \subseteq D} Q_x(\alpha)  \\
    \label{eq:S=x}
    &= \max_{x \in S \subseteq D} Q_x(\alpha) \\
    &= \max_{x \in S} \mathbb{F}_{Y(1)|X=x}^{-1}(\alpha) - \mathbb{F}_{Y(0)|X=x}^{-1}(\alpha) \nonumber
\end{align}
where \eqref{eq:S=x} follows from the fact that the inequality in \eqref{eq:S<=x} must be a strict equality because $\{ \max_{x \in S \subseteq D}\} \subseteq \{\max_{S \subseteq D}\}$.
\end{proof}

Now that we demonstrated that the usefulness of maximizing the original conditional quantile effect measure is compromised by its reduction to a singular covariate profile, let us consider the treatment effect evaluation measure based on our nonparametric scan statistic, by first revisiting the score function. In the main text we introduced the Berk-Jones score function which assumes that the data generating process for each $N_{\alpha}(x)$ follows a binomial distribution and therefore computes the log-likelihood ratio statistic $F(S) = \log \left( \frac { P \left( \text{Data} | H_1(S) \right)}{P \left(
    \text{Data} | H_0 \right) } \right)$, which can be written as the product of the total number of $p$-values $N(S)$ in subset $S$ and a divergence
    $Div\left(\frac{N_\alpha(S)}{N(S)},\alpha\right)$ between the observed and expected proportions of $p$-values that are significant at level $\alpha$. More specifically, we have:
\begin{flalign*}
  H_0&: N_{\alpha}(x) \sim \text{Binomial} \left( N(x), \alpha \right) \quad
  \forall x &&\\
  H_1(S)&: N_{\alpha}(x) \sim \text{Binomial} \left( N(x), \beta \right) \quad
  \forall x \in S \quad \beta \not = \alpha, &&
\end{flalign*}
with the following Berk-Jones (BJ) log-likelihood ratio statistic~\citep{berk-bj-1979}:
\begin{align*}
  F_{\alpha}^{BJ}(S) &= \log \left [ \frac { P \left( \text{Data} | H_1(S) \right)}{P \left(
    \text{Data} | H_0 \right) } \right ] \\ &=
    N_{\alpha}(S) \log  \left(\frac {\beta} {\alpha} \right) + \left( N(S)-N_{\alpha}(S) \right) \log \left( \frac{1-\beta}{1-\alpha} \right) \\ &=
    N(S)Div_{KL} \left(\frac{N_{\alpha}(S)}{N(S)}, \alpha \right),
\end{align*}
where we have used the maximum likelihood estimate $\beta =
\beta_{\text{mle}}(S) = \frac{N_{\alpha}(S)}{N(S)}$, and
$Div_{KL}(\cdot,\cdot)$ is the Kullback-Leibler divergence, $Div_{KL}(x,y) = x
\log \frac{x}{y} + (1-x) \log \frac{1-x}{1-y}$

We now also introduce the Normal-Approximation score function, which is based on the normal approximation to the binomial data generating process assumed in Berk-Jones:
\begin{flalign*}
  H_0&: N_{\alpha}(x) \sim \text{Gaussian} \left(N(x)\alpha, \alpha(1-\alpha)N(x) \right) \quad
  \forall x &&\\
  H_1(S)&: N_{\alpha}(x) \sim \text{Gaussian} \left(N(x)\beta , \alpha(1-\alpha)N(x) \right) \quad
  \forall x \in S \quad \beta \not = \alpha, &&
\end{flalign*}
with the following normal approximation (NA) log-likelihood ratio statistic:
\begin{align*}
  F_{\alpha}^{NA}(S) &= \log \left [ \frac { P \left( \text{Data} | H_1(S) \right)}{P \left(
    \text{Data} | H_0 \right) } \right ] \\
  &= \frac{N_{\alpha}(S)\left(\beta-\alpha\right)}{\alpha(1-\alpha)} +
    \frac{N(S)\left(\alpha^2-\beta^2\right)}{2\alpha(1-\alpha)} \\
  &= \frac{\left(N_{\alpha}(S)-N(S)\alpha\right)^2}{2N(S)\alpha(1-\alpha)} \\
  &= N(S)Div_{\frac{1}{2}\chi^2} \left(\frac{N_{\alpha}(S)}{N(S)}, \alpha \right).
\end{align*}
where we have again used the maximum likelihood estimate of $\beta =
\frac{N_{\alpha}(S)}{N(S)}$, and $Div_{\frac{1}{2}\chi^2}(\cdot,\cdot)$ is a
scaled $\chi^2$ divergence, $Div_{\frac{1}{2}\chi^2}(x,y) =
\frac{(x-y)^2}{2y(1-y)}$.

The first result we show is that in the limit $F^{BJ}$ is well approximated
by $F^{NA}$, which will then allow us to focus the remainder of our theoretical
results on $F^{NA}$ specifically.

\begin{restatable}{prop}{bjtona}
  \label{prop:BJ_to_NA}
  $F^{BJ}(S) \asymp F^{NA}(S)$ as $N(S)\longrightarrow \infty$.
\end{restatable}
\begin{proof}
  Recall that $K(x,y) = Div_{KL}(x,y) = x \log \frac{x}{y} + (1-x) \log \frac{1-x}{1-y}$. By expanding $K(x,y)$ through a Taylor series, we have
  \begin{align*}
    K(x,y) &= K(y,y) + \frac{\partial K(x,y)}{\partial x}\Biggr\rvert_{x=y}
      \left(x-y\right) + \frac{\partial^2 K(x,y)}{\partial^2 x}\Biggr\rvert_{x=y'}
      \frac{\left(x-y\right)^2}{2}\\
    &= 0 + 0 + \frac{\left(x-y\right)^2}{2y'(1-y')}\\
  \end{align*}
  for some $y'$ such that $|y'-x|\le |y-x|$. Therefore,
  \begin{align*}
    F^{BJ}(S) &= \max_{\alpha} N\left(S\right)K
    \left(\frac{N_{\alpha}\left(S\right)}{N\left(S\right)},\alpha\right)\\
    &=\max_{\alpha}
      N\left(S\right)\frac{\left(\frac{N_{\alpha}\left(S\right)}{N\left(S\right)}-\alpha\right)^2}{2\alpha'(1-\alpha')}
      \quad\left(\text{where
      }\biggr\rvert\alpha'-\frac{N_{\alpha}\left(S\right)}{N\left(S\right)}\biggr\rvert\le
      \biggr\rvert\alpha-\frac{N_{\alpha}\left(S\right)}{N\left(S\right)}\biggr\rvert\right)\\
    &\le \max_{\alpha}
      N\left(S\right)\left[\frac{\left(\frac{N_{\alpha}\left(S\right)}{N\left(S\right)}-\alpha\right)^2}{2\alpha(1-\alpha)}
      \bigvee
      \frac{\left(\frac{N_{\alpha}\left(S\right)}
        {N\left(S\right)}-\alpha\right)^2} {2\frac{N_{\alpha}\left(S\right)}
        {N\left(S\right)} \left(1-\frac{N_{\alpha}\left(S\right)}{N\left(S\right)}
        \right)} \right]\\
    \text{and}&{}\\
    &\ge \max_{\alpha} N\left(S\right) \left[\frac{\left(
      \frac{N_{\alpha}\left(S\right)}
      {N\left(S\right)}-\alpha\right)^2}{2\alpha(1-\alpha)} \bigwedge
      \frac{\left(\frac{N_{\alpha}\left(S\right)}{N\left(S\right)}
      -\alpha\right)^2}{2\frac{N_{\alpha}\left(S\right)} {N\left(S\right)}
      \left(1-\frac{N_{\alpha}\left(S\right)}{N\left(S\right)}\right)} \right].
  \end{align*}
  Furthermore, under $H_0$, $\frac{N_{\alpha}\left(S\right)}{N\left(S\right)}
  \asconv \alpha \implies \alpha' \asconv \alpha$, which by the continuous
  mapping theorem results in
  \begin{equation*}
    F^{BJ}(S) \asconv \max_{\alpha} N\left(S\right)\frac{\left(\frac{N_{\alpha}\left(S\right)}{N\left(S\right)}-\alpha\right)^2}{2\alpha(1-\alpha)} =F^{NA}(S).
  \end{equation*}
  However, under $H_1\left(S^T\right)$,
  $\frac{N_{\alpha}\left(S\right)}{N\left(S\right)} \asconv \beta(\alpha)$,
  therefore asymptotically for $F^{BJ}(S)$ we have,
  \begin{align*}
    \max_{\alpha} N\left(S\right)\frac{\left(\frac{N_{\alpha}\left(S\right)}{N\left(S\right)}-\alpha\right)^2}{2\alpha(1-\alpha)} \left(1 \bigwedge \frac{\alpha(1-\alpha)}{\beta(\alpha)\left(1-\beta(\alpha)\right)}\right)
    \le & F^{BJ}(S) \\
    \le & \max_{\alpha} N\left(S\right)\frac{\left(\frac{N_{\alpha}\left(S\right)}{N\left(S\right)}-\alpha\right)^2}{2\alpha(1-\alpha)} \left(1 \bigvee \frac{\alpha(1-\alpha)}{\beta(\alpha)\left(1-\beta(\alpha)\right)}\right).
  \end{align*}
  We can see that $F^{BJ}(S)$ is bounded above and below by either $F^{NA}(S)$ or a constant times $F^{NA}(S)$.
\end{proof}

We also note that there are a collection of well-known
supremum goodness-of-fit statistics used in the literature, all of which are described in \cite{jager-gof-2007}, that can be written as a
transformation of $F_{\alpha}^{NA}(S)$:\\
\\the Kolmogorov-Smirnov statistic\ignore{\citep{kolmogorov-gof-1933}}
\begin{align*}
F^{KS}(S) &=  \max_{\alpha} F_{\alpha}^{KS}(S) \\
&= \max_{\alpha} \frac{\left(N_{\alpha}(S)-N(S)\alpha\right)}{\sqrt{N(S)}} \\
&=  \max_{\alpha}  \sqrt{2\alpha(1-\alpha)F_{\alpha}^{NA}(S)},
\end{align*}
the Cramer-von Mises statistic\ignore{\citep{cramer-gof-1928, jager-gof-2007}}
\begin{align*}
F^{CV}(S) &=  \max_{\alpha} F_{\alpha}^{CV}(S) \\
&= \max_{\alpha} \frac{\left(N_{\alpha}(S)-N(S)\alpha\right)^2}{N(S)} \\
&=  \max_{\alpha}  2\alpha(1-\alpha)F_{\alpha}^{NA}(S),
\end{align*}
the Higher-Criticism statistic\ignore{\citep{donoho-gof-2004}}
\begin{align*}
F^{HC}(S) &=  \max_{\alpha} F_{\alpha}^{HC}(S) \\
&= \max_{\alpha} \frac{\left(N_{\alpha}(S)-N(S)\alpha\right)}{\sqrt{N(S)\alpha(1-\alpha)}} \\
&=  \max_{\alpha} \sqrt{2F_{\alpha}^{NA}(S)},
\end{align*}
and the Anderson-Darling statistic\ignore{\citep{anderson-gof-1952,jager-gof-2007}}
\begin{align*}
F^{AD}(S) &=  \max_{\alpha} F_{\alpha}^{AD}(S) \\
&= \max_{\alpha} \frac{\left(N_{\alpha}(S)-N(S)\alpha\right)^2}{N(S)\alpha(1-\alpha)} \\
&=  \max_{\alpha}  2F_{\alpha}^{NA}(S).
\end{align*}

\noindent As a result of this connection between $F^{NA}$ and these other statistics, we have 
the following:
\begin{prop}
  \label{prop:na_transform_max}
  If $S$ maximizes $F_{\alpha}^{NA}(S)$, then it maximizes
  $F_{\alpha}^{\text{KS}}(S), F_{\alpha}^{\text{CV}}(S),
  F_{\alpha}^{\text{HC}}(S)$ and $F_{\alpha}^{\text{AD}}(S)$.
\end{prop}
\begin{proof}
  First, we note that $T(F_{\alpha}^{NA})$, where $T(x)= (bx)^a$, for $b
  \in \{1, 2, 2\alpha(1-\alpha)\}$ and $a \in \{1,\frac{1}{2}\}$ is a
  monotonically increasing transformation. Therefore, $\arg\max_S
  F_{\alpha}^{NA}(S) = \arg\max_S T \left( F_{\alpha}^{NA}(S)
  \right)$, because $\arg\max$ is invariant to monotone transformations.
\end{proof}

We now show that assumptions (A1), (A2), and (A3) stated in \S\ref{sec:eff_scan} are satisfied by scoring functions $F_{\alpha}^{BJ}(S)$ and $F_{\alpha}^{NA}(S)$, as well as all of the other functions discussed above, given that they are monotone transformations of $F_{\alpha}^{NA}(S)$. 

\begin{restatable}{prop}{bj_assump}
  \label{prop:bj_assump}
  For a fixed value of $\alpha$, $F_{\alpha}^{BJ}(S)$ is (1) monotonically increasing with respect to $N_\alpha(S)$, (2) monotonically decreasing with respect to $N(S)$, and (3) convex with respect to $N_\alpha(S)$ and $N(S)$.
\end{restatable}
\begin{proof}
   $F_\alpha^{BJ}(S)$ can be written as $N_\alpha(S)\log \left(\frac{N_\alpha(S)}{N(S)\alpha} \right) + (N(S) - N_\alpha(S))\log\left(\frac{N(S) - N_\alpha(S)}{N(S) - N(S)\alpha}\right)$. $F_\alpha^{BJ}(S)$ is monotonically increasing w.r.t. $N_\alpha(S)$ because  
   \begin{align*}
       \frac{\partial F_\alpha^{BJ}(S)}{\partial N_\alpha(S)} &= 1 + \log\left( \frac{N_\alpha(S)}{N(S)\alpha}\right) - 1  - \log \frac{N(S) - N_\alpha(S)}{N(S) - N(S)\alpha} \\
       &= \log\left( \frac{N_\alpha(S)}{N(S)\alpha} \right) + \log \left( \frac{N(S) - N(S)\alpha}{N(S) - N_\alpha(S)} \right)\\
       &= \log\left( \frac{\frac{N_\alpha(S)}{N(S)}}{1- \frac{N_\alpha(S)}{N(S)}}\cdot \frac{1-\alpha}{\alpha}\right)\\
       &\ge 0.
   \end{align*}
\noindent The last inequality is strict when $\frac{N_\alpha(S)}{N(S)} > \alpha$, and we define $F_\alpha^{BJ}(S) = 0$ otherwise. 

Similarly, $F_\alpha^{BJ}(S)$ is monotonically decreasing w.r.t. $N(S)$ because
    \begin{align*}
       \frac{\partial F_\alpha^{BJ}(S)}{\partial N(S)} &= -\frac{N_\alpha(S)}{N(S)} +\log \left(\frac{N(S) - N_\alpha(S)}{N(S) - N(S)\alpha} \right) + \frac{N_\alpha(S)}{N(S)} \\
       &= \log \left(\frac{N(S) - N_\alpha(S)}{N(S) - N(S)\alpha} \right)\\
       &\le 0.
   \end{align*}
\noindent Again, the last inequality is strict when $\frac{N_\alpha(S)}{N(S)} > \alpha$, and we define $F_\alpha^{BJ}(S) = 0$ otherwise. 

Finally, $F_\alpha^{BJ}(S)$ is convex in $N_\alpha(S)$ and $N(S)$ because it can be written as $N(S) f\left(\frac{N_\alpha(S)}{N(S)} \right)$, where $f(x) = x \log \frac{x}{\alpha}+(1-x)\log \frac{1-x}{1-\alpha}$.  $f(x)$ is a convex function, since $\frac{d^2 f}{d x^2} = \frac{1}{x}+\frac{1}{1-x} > 0$ for $x \in (0,1)$.  Then $F_\alpha^{BJ}(S)$ is convex since it is the perspective of a convex function.
\end{proof}

\begin{restatable}{prop}{na_assump}
  \label{prop:na_assump}
  For a fixed value of $\alpha$, $F_{\alpha}^{NA}(S)$ is (1) monotonically increasing with respect to $N_\alpha(S)$, (2) monotonically decreasing with respect to $N(S)$, and (3) convex with respect to $N_\alpha(S)$ and $N(S)$.
\end{restatable}
\begin{proof} 
$F_{\alpha}^{NA}(S)$ can be written as $\frac{\left(N_\alpha(S) - N(S)\alpha \right)^2}{2 N(S) \alpha (1-\alpha)}$. $F_{\alpha}^{NA}(S)$ is monotonically increasing w.r.t. $N_\alpha(S)$ as
\begin{align*}
    \frac{\partial F_{\alpha}^{NA}(S)}{\partial N_\alpha(S)} 
    &= \frac{\frac{N_\alpha(S)}{N(S)} - \alpha}{\alpha (1-\alpha)}\\
    &\ge 0. 
\end{align*}
\noindent The last inequality is strict when $\frac{N_\alpha(S)}{N(S)} > \alpha$, and we define $F_\alpha^{NA}(S) = 0$ otherwise. 

 Similarly, $F_{\alpha}^{NA}(S)$ is monotonically decreasing w.r.t. $N(S)$ as 
\begin{align*}
    \frac{\partial F_{\alpha}^{NA}(S)}{\partial N(S)} 
    &= \frac{2N(S)^2 \alpha^3 (1-\alpha) - 2 N_\alpha(S)^2 \alpha (1-\alpha)}{4N(S)^2\alpha^2 (1-\alpha)^2}\\
    &= \frac{\alpha^2 - \left(\frac{N_\alpha(S)}{N(S)}\right)^2}{2\alpha (1-\alpha)}\\
    &\le 0,
\end{align*}
given $\alpha \in (0,1)$. Again, the last inequality is strict when $\frac{N_\alpha(S)}{N(S)} > \alpha$, and we define $F_\alpha^{NA}(S) = 0$ otherwise.

Finally, $F_\alpha^{NA}(S)$ is convex in $N_\alpha(S)$ and $N(S)$ because it can be written as $N(S) f\left(\frac{N_\alpha(S)}{N(S)} \right)$, where $f(x) = \frac{(x-\alpha)^2}{2\alpha(1-\alpha)}$.  $f(x)$ is a convex function, since $\frac{d^2 f}{d x^2} = \frac{1}{\alpha(1-\alpha)} > 0$ for $\alpha \in (0,1)$.  Then $F_\alpha^{NA}(S)$ is convex since it is the perspective of a convex function.

\end{proof}

\ignore{
\begin{restatable}{prop}{bjassums}
  \label{prop:BJ_A1-A3}
  $F^{BJ}(S) \asymp F^{NA}(S)$ as $N(S)\longrightarrow \infty$.
\end{restatable}
}

\section{Supplementary Materials: Proofs of Lemmas and Theorems}
\label{sec:proofs}
In this section, we provide detailed proofs of the Lemmas and Theorems stated in the main text. Before presenting the proofs, we (re-)introduce notation that will be used throughout the proofs.

\subsection{Notation}
$S^T$: the truly affected (rectangular) subset. \\
$S^{\ast}$: the highest scoring (rectangular) subset, $\arg \max_{S \in Rect} F(S)$, where
$Rect$ is the set of all rectangular subsets in $D$. \\
$\alpha^{\ast}$: the $\alpha$ at which $S^{\ast}$ is highest scoring, i.e., $\arg \max_\alpha F_{\alpha}(S^{\ast})$.    \\
$S^{\ast}_{u}$: the highest scoring unconstrained subset, $\arg \max_{S \subseteq D} F(S)$. \\
$\alpha^{\ast}_{u}$: the $\alpha$ at which $S^{\ast}_{u}$ is highest scoring, i.e,
$\arg \max_\alpha F_{\alpha}(S^{\ast}_{u})$. \\    
$U_{X}$: a function which returns the unique covariate profiles (non-empty tensor cells) in a set. \\
$M$: $|U_{X}(D)|$, the number of unique covariate profiles in our treatment data, or equivalently the number of cells with treatment observations in our data tensor. \\
$k$: $\frac{|U_{X}(S^T)|}{|U_{X}(D)|}$, the proportion of non-empty cells that are affected under $H_1\left(S^T\right)$.\\
$\beta(\alpha)$: $P\left(\hat{p}(y; x) \le \alpha \:\rvert\:
H_1\left(S^T\right)\right)$, for all the $p$-values of covariate profiles $x \in U_{X}(S^T)$.\\
$h(\delta)$: the critical value for the test statistic, $\max_{S\in Rect}F(S)$, at a given Type-I error rate $\delta >0$.\\
$\phi$:  Probability density function of standard normal distribution.\\
$\Phi$:  Cumulative distribution function of standard normal distribution.\\

\noindent Additionally, we will assume that each of the $M$ non-empty tensor cells (or unique treatment profiles) contain exactly $n$ $p$-values, for mathematical convenience. We are interested in the distribution of the score $\max_S F(S)$ under the null hypothesis $H_0$, assuming that all $p$-values are uniformly distributed on $[0,1]$, and under the alternative hypothesis $H_1(S^T)$, assuming that there exist some constants $\alpha$ and $\beta$ such that $\mbox{Pr}(\hat{p} < \alpha) = \beta$ in subset $S^T$, for $\beta > \alpha$.  We assume that a constant fraction of cells $k$, $0 < k \le 1$, are affected under $H_1$. \\

\subsection{Statistical Properties}
\label{sec:stats_theory}
We now demonstrate desirable statistical properties of $\max_{S \in Rect} F(S)$. Our derivations are based on $F^{NA}(S)$, but we show that these properties also extend to $F^{BJ}(S)$, our statistic of choice in the main text, and many other statistics because of their close relationship with $F^{NA}(S)$, as described in Appendix~\ref{sec:scoring_functions}. More specifically, we 
demonstrate that, using $\max_{S \in Rect} F(S)$ as a test statistic of the data, we can appropriately (fail to) reject $H_0$ with high probability. For mathematical convenience, the results derived in this section assume that $N(x) = n$
for all $x \in U_{X}(D)$, i.e., each unique covariate profile in the data has exactly $n$ data points (and therefore $n$ $p$-values). We also assume under the alternative hypothesis $H_1(S^T)$ that a constant fraction of cells $k$, $0 < k \le 1$, are affected. Finally, we consider the asymptotic regime where $n \longrightarrow \infty$. Ultimately, we will show that for any Type I error rate $\delta > 0$, we can compute a critical value $h(\delta)$ such that we have the following:
\begin{align*}
  \lim_{n \rightarrow \infty}P_{H_0}\left(\max_{S \in Rect} F(S) > h(\delta)\right) &\le \delta,\\
  \lim_{n \rightarrow \infty}P_{H_1}\left(\max_{S \in Rect} F(S) > h(\delta)\right) &= 1.
\end{align*}

We begin by first recognizing that $F(S^{\ast}) \le F(S^{\ast}_{u})$, i.e., the score of the optimal rectangular subset is upper bounded by the score of optimal unconstrained subset, because the space of rectangular subsets is contained within the space of all subsets. Consequently, the distribution $F(S^{\ast}_{u})$ under $H_0$ provides a upper bound on the distribution of $F(S^{\ast})$ under $H_0$. Therefore, we will begin by establishing a distributional upper bound on $\sqrt{F(S^{\ast}_{u})}$ under $H_0$, which by transitivity will also provide an upper bound on the distribution of $\sqrt{F(S^{\ast})}$.
\nullconverg
\begin{proof}
Recall that our experiment is made up of i.i.d.~units $\{R_{1}, \ldots, R_{N}\}$, where $R_{i} = (Y^{\text{obs}}_i, X_i, W_i)$ is a 3-tuple. Also recall from Section \ref{sec:p-values} that for every treatment unit $R_i$, we have a $p$-value $\hat{p}_i$, where $\hat{p}_i \sim U(0,1)$ under the null hypothesis $H_0$ of no treatment effect. Therefore, for each of the unique covariate profiles in our treatment data $x\in U_{X}(D)$, we can compute the number of significant $p$-values, $N_{\alpha}\left(x\right) \sim \text{Binomial}\left(N\left(x\right), \alpha\right)$, for any given value of $\alpha$. Recall that we assume $N\left(x\right) = n~\forall x\in U_{X}(D)$, that $|U_{X}(D)| = M$, and that  $\phi$ and $\Phi$ are the Gaussian pdf and cdf respectively. In order to arrive at the intended result on the distribution of $\sqrt{F(S^{\ast}_{u})}$, we will need to make a set of interrelated observations.\\
  
\underline{Observation 1:} Let us define $S^{\ast}_{\alpha, u} = \arg\max_{S_u \subseteq U_{X}(D)} F_\alpha(S_u)$, the highest scoring unconstrained subset of covariate profiles for a given $\alpha$. From Theorem \ref{thm:LTSS} we know that if the profiles are sorted $\{x_{(1)}, \ldots, x_{(M)}\}$ according to priority function $\frac{ N_{\alpha}\left(x\right)}{n}$, where
  $x_{(t)}$ has the $t^{th}$ highest priority, then
\begin{align*}
S^{\ast}_{\alpha, u} &\in \{\{x_{(1)}, \ldots, x_{(t)}\}\}_{t\in\{1,\ldots, M\}}  \\
             &= \left\{x \:|\:  N_{\alpha}\left(x\right) > t(\alpha)\right\}.
\end{align*}
Essentially, $S^{\ast}_{\alpha, u}$ will consist of all and only those profiles $x$ with $N_{\alpha}\left(x\right)$ above some threshold $t(\alpha)$. Because $N_{\alpha}\left(x\right) \sim \text{Binomial}(n, \alpha)$ under $H_0$, we can write $t(\alpha) = n\alpha + Z\sqrt{n\alpha(1-\alpha)}$ for some constant $Z$, allowing the threshold to represent $Z$-standard deviations above the expected number of significant $p$-values. Therefore, for given values of $\alpha$ and $Z$, the event that a given profile $x$ will be included in $S^{\ast}_{\alpha, u}$ can be defined as
\ignore{\stackrel{d}{\longrightarrow}}
\begin{equation}
\label{eq:N(x)}
\begin{split}
    \mathbbm{1}_{\{x \in U_{X}(S^{\ast}_{\alpha, u})\}} &~~\sim ~\mbox{Bernoulli}\left( \mbox{P}\left[N_{\alpha}\left(x\right) \ge n\alpha + Z\sqrt{n\alpha(1-\alpha)}\right] \right)\\
    &\stackrel{d}{\longrightarrow} \mbox{Bernoulli}\left( 1-\Phi\left(Z\right)\right),
\end{split}    
\end{equation}
as $n \longrightarrow \infty$; and consequently we also have
\begin{equation}
    \label{eq:N(Su)}
    |U_{X}(S^{\ast}_{\alpha, u})| \stackrel{d}{\longrightarrow} \mbox{Binomial}(M,1-\Phi\left(Z\right)).
\end{equation}
To close, from this observation, for fixed $\alpha$ and $Z$, we have the asymptotic distribution governing the event that an individual covariate profile will be included in the detected subset $S^{\ast}_{\alpha, u}$ and consequently the asymptotic distribution over the number of profiles to be included.\\

\underline{Observation 2:} From~\eqref{eq:N(x)} we also have $N_{\alpha}(x) \:|\: x \in U_{X}(S^{\ast}_{\alpha, u})  \sim \text{TruncatedBinomial}(n, \alpha,Z)$; for the covariate profiles included in $S^{\ast}_{\alpha, u}$, the observed number of significant $p$-values follow a truncated binomial distribution. Moreover, for given values of $\alpha$ and $Z$, as $n \longrightarrow \infty$,
\begin{equation}
\label{eq:N_a(Su)}
\sqrt{n}\left(\frac{N_{\alpha}\left(x\right)}{n} - \alpha\right) \bigg{\rvert} ~ x \in U_{X}(S^{\ast}_{\alpha, u}) \stackrel{d}{\longrightarrow} \mbox{TruncatedGaussian}(0,\alpha(1-\alpha),Z), \end{equation}
whose expected value is $\frac{\phi(Z)}{1-\Phi(Z)}\sqrt{\alpha(1-\alpha)}$ and variance is $\alpha(1-\alpha)V(Z)$, where $V(Z) < 1$ is the variance reduction from a truncated Gaussian, $V(Z) = 1 + \frac{Z \phi(Z)}{1-\Phi(Z)} - \left(\frac{\phi(Z)}{1-\Phi(Z)}\right)^2$. To close, from this observation, for fixed $\alpha$ and $Z$, we obtain the asymptotic distribution governing the number of significant $p$-values for covariate profiles in the detected subset $S^{\ast}_{\alpha, u}$.\\

\underline{Observation 3:} We can write
\begin{equation*}
    \frac{N_\alpha(S^{\ast}_{\alpha, u})}{N(S^{\ast}_{\alpha, u})} = \frac{\sum_{x \in U_{X}(S^{\ast}_{\alpha, u})} N_{\alpha}\left(x\right)}{n \: |U_{X}(S^{\ast}_{\alpha, u})|},
\end{equation*}
which when combined with the asymptotic distributions that govern the behaviors of profiles included in $S^{\ast}_{\alpha, u}$ (from Observations 1 and 2) we can conclude
\begin{equation}
    \label{eq:F_a(Su)}
    \begin{split}
        \frac{\sqrt{N(S^{\ast}_{\alpha, u})}\left(\frac{N_\alpha(S^{\ast}_{\alpha, u})}{N(S^{\ast}_{\alpha, u})} - \alpha\right)}{\sqrt{2\alpha(1-\alpha)}} - \sqrt{\frac{M\phi(Z)^2}{2(1-\Phi(Z))}} &= \sqrt{F_{\alpha}^{NA}(S^{\ast}_{\alpha, u})} - \sqrt{\frac{M\phi(Z)^2}{2(1-\Phi(Z))}} \\
        &\stackrel{d}{\longrightarrow} \mbox{Gaussian}\left(0, \frac{V(Z)}{2}\right),
    \end{split}
\end{equation}
by the Central Limit Theorem. To close, from this observation, for fixed $\alpha$ and $Z$, we have the asymptotic distribution that governs the score function for the detected subset $S^{\ast}_{\alpha, u}$.\\

\underline{Observation 4:} While Observation 3 provides the distribution of the score function for the detected subset optimized over $Z$, the distribution is still defined for a fixed $\alpha$. This last observation will address the supremum over $\alpha$. To begin, for a given subset $S$, we can collect all its (treatment unit) $p$-values: $P_S = \{\hat{p}_{i}~|~x_i \in S\}$ where $|P_S| = n_s$ and $\hat{p}_i \sim U(0,1)$ under $H_0$. Moreover, if we let $P_{n_s}(\alpha) = \frac{1}{n_s}\sum_{\hat{p}_i\in P_S}{\mathbbm{1}_{\{\hat{p}_i \le \alpha\}}}$, then $\mathbb{U}_{n_s}(\alpha) = \frac{\sqrt{n_s}(P_{n_s}(\alpha)-\alpha)}{\sqrt{\alpha(1-\alpha)}}$ is a normalized uniform empirical process, indexed by $\alpha \in (0,1)$. Next let $\mathbb{W}_{n_s}(\alpha_{\min},\alpha_{\max}) = \sup_{\alpha \in [\alpha_{\min},\alpha_{\max}]}|\mathbb{U}_{n_s}(\alpha)|$, be the supremum over the absolute value of the normalized uniform empirical process, restricted to $[\alpha_{\min},\alpha_{\max}]$ in the interior of $(0,1)$. Recognize that $\sqrt{F^{NA}_{\alpha}(S)}$ is a scaled version of $|\mathbb{U}_{n_s}(\alpha)|$, and therefore 
\begin{align}
 \sqrt{F^{NA}(S)} &= \sup_{\alpha \in [\alpha_{\min},\alpha_{\max}]} \sqrt{F^{NA}_{\alpha}(S)} \nonumber\\
    &\propto \mathbb{W}_{n_s}(\alpha_{\min},\alpha_{\max})  \nonumber\\
    &\xrightarrow{d} \mathbb{W}(\alpha_{\min},\alpha_{\max})  \nonumber\\
    \label{eq:B(a)_norm}
    &= \sup_{\alpha \in [\alpha_{\min},\alpha_{\max}]} \frac{|B(\alpha)|}{\sqrt{\alpha(1-\alpha)}},
\end{align}
where $B(\alpha)$ is the Brownian bridge on $[0,1]$. We consider the supremum over $[\alpha_{\min},\alpha_{\max}]$ in the interior of $[0,1]$ because if left unrestricted, $\mathbb{W}_{n_s}(0,1)$ increases with $n_s$, and $\mathbb{W}(0,1)$ becomes
arbitrarily large~\citep{miller-max_chi_square-1982}. So to close, from this observation, for fixed $Z$ but supremum over quantiles $\alpha$, we know that the score function for any subset $S$ follows a normalized uniform empirical process.\\

If we take all four observations together with the definition of $F^{NA}(S^{\ast}_{u})$, we obtain 
\begin{align*}
     \max_{S_u \subseteq U_{X}(D)}\sqrt{F^{NA}(S_u)} &= \max_{S_u \subseteq U_{X}(D), \alpha\in [\alpha_{\min},\alpha_{\max}]}\sqrt{F_{\alpha}^{NA}(S_u)}  \nonumber \\
     &= \max_{Z, \alpha\in [\alpha_{\min},\alpha_{\max}]} \frac{\sqrt{N(S^{\ast}_{u})}\left(\frac{N_\alpha(S^{\ast}_{u})}{N(S^{\ast}_{u})} - \alpha\right)}{\sqrt{2\alpha(1-\alpha)}} \\
     &\xrightarrow{d} \max_{Z,\alpha\in [\alpha_{\min},\alpha_{\max}]} ~\mbox{Gaussian}\left(\sqrt{\frac{M\phi(Z)^2}{2(1-\Phi(Z))}}, \frac{V(Z)}{2}\right) \\
     &\xrightarrow{d} \max_{Z} \left( \sqrt{\frac{M\phi(Z)^2}{2(1-\Phi(Z))}} + \mathbb{W}(\alpha_{\min},\alpha_{\max}) \sqrt{\frac{V(Z)}{2}} \right)\\
     &< 0.45 \sqrt{M} + \frac{ \mathbb{W}(\alpha_{\min},\alpha_{\max})}{\sqrt{2}},
\end{align*}
where the last inequality follows from $\max_Z \sqrt{\frac{\phi(Z)^2}{2(1-\Phi(Z))}} < 0.45$ and $\max_Z V(Z) < 1$.

Now that we have this asymptotic behavior of $F^{NA}(S^{\ast}_{u})$, we recall two results. First, under $H_0$, for our score function of choice in the main text, $F^{BJ}(S) \xrightarrow{a.s.} F^{NA}(S)~\forall S$ (Proposition~\ref{prop:BJ_to_NA}). Second, by Proposition~\ref{prop:na_transform_max}, all other score functions we reference in Appendix~\ref{sec:scoring_functions} are maximizations over continuous and monotonic transformations of $F^{NA}_{\alpha}(S)$. Therefore the limiting distribution of $\max_{S \subseteq U_{X}(D)}F(S)$ under $H_0$ for all our score functions can simply be derived from this specific result for $F^{NA}(S)$, mutatis mutandis.
\end{proof}

\falseposotive
\begin{proof}
For false positive rate $\delta > 0$, we first define $w(\delta)$, which returns $w$ such that $P\left(\mathbb{W}(\alpha_{\min},\alpha_{\max}) > w\right) = \delta$. 
\cite{miller-max_chi_square-1982} show that as $w\rightarrow\infty$,
\begin{equation}
    \label{eq:W(epsilon)-tail-prob}
    P\left(\mathbb{W}(\alpha_{\min},\alpha_{\max}) > w\right) = \left(w\log\left(\frac{\alpha_{\max}(1-\alpha_{\min})}{\alpha_{\min}(1-\alpha_{\max})}\right)+ O(w^{-1})\right)\phi(w),
\end{equation}
and also note that we can use the tables of \cite{keilson-statistical_processes-1975} to obtain the tail probability in~\eqref{eq:W(epsilon)-tail-prob} exactly. Next, we define $h(\delta) = \left(0.45\sqrt{M} + \frac{w(\delta)}{\sqrt{2}}\right)^2$.
As $n\rightarrow\infty$,
\begin{align}
\label{eq:F_NA-upperbound}
P_{H_0}\left(\max_{S \in Rect} F(S) > h(\delta) \right) &\le P_{H_0}\left(\max_{S \subseteq U_{X}(D)}\sqrt{F^{NA}(S)}> 0.45 \sqrt{M} + \frac{w(\delta)}{\sqrt{2}}\right) \\
\label{eq:F_NA_u-upperbound}
 &\le P\left(0.45 \sqrt{M} + \frac{ \mathbb{W}(\alpha_{\min},\alpha_{\max})}{\sqrt{2}} > 0.45 \sqrt{M} + \frac{ w(\delta)}{\sqrt{2}} \right) \\
 &= P\left(\mathbb{W}(\alpha_{\min},\alpha_{\max}) > w(\delta)\right) \nonumber \\
 &= \delta \nonumber
\end{align}
where the inequality in~\eqref{eq:F_NA-upperbound} follows from the fact that the space of rectangular subsets is contained within the space of all subsets, and~\eqref{eq:F_NA_u-upperbound} follows from Lemma~\ref{lem:null_converg}. 
\ignore{Given asymptotic behavior of $F^{NA}(S^{\ast}_{u})$, we can recall two results. First, under $H_0$, for our score function of choice in the main text, $F^{BJ}(S) \xrightarrow{a.s.} F^{NA}(S)~\forall S$ (Proposition~\ref{prop:BJ_to_NA}). Second, by Proposition~\ref{prop:na_transform_max} all other score functions we reference in Appendix~\ref{sec:scoring_functions} are maximizations over continuous and monotonic transformations of $F^{NA}_{\alpha}(S)$. Therefore, for all our score functions, the quantity $ h(\delta)$ that ensures $\lim_{n \rightarrow \infty}P_{H_0}\left(\max_{S \in Rect} F(S) > h(\delta)\right) \le \delta$ can simply be derived from this specific result for $F^{NA}(S)$, mutatis mutandis.} 
\end{proof}

Given asymptotic control over the score when the null hypothesis is true, we now turn our attention to the score when the null hypothesis is false. We begin by first recognizing that for any $\alpha \in [\alpha_{\min},\alpha_{\max}]$, $F_{\alpha}(S^{T}) \le F(S^{T}) \le F(S^{\ast})$. The latter inequality indicates that the score of the optimal rectangular subset is lower-bounded by the score of the truly affected (rectangular) subset, because by definition, no subset achieves a higher score than $S^{\ast}$. The former inequality indicates that the score of the true subset evaluated at a given $\alpha$ lower bounds the score of the true subset maximized over all $\alpha$. Consequently, under $H_1(S^{T})$, the distribution of $F_{\alpha}(S^T)$ for any $\alpha$ provides a lower bound on the distribution of $F(S^T)$, and consequently $F(S^{\ast})$. Therefore we will begin by establishing the distribution of $\sqrt{F_{\alpha^{\ast}}(S^T)}$ under $H_1(S^{T})$, for the specific $\alpha^{\ast} = \arg\max_{\alpha} \frac{(\beta(\alpha)-\alpha)^2}{2 \alpha(1-\alpha)}$.

\altconverg
\begin{proof}
  First, recognize that $N\left(S^T\right) = kMn$ and that $N_{\alpha^{\ast}}(S^T) \sim \mbox{Binomial}\left(N(S^T), \beta^{\ast}\right)$. Therefore, we have the following:
  \begin{align*} 
    F_{\alpha^{\ast}}^{NA}\left(S^T\right) &= \frac{\left(N_{\alpha^{\ast}}\left(S^T\right)-N\left(S^T\right)\alpha^{\ast}\right)^2}{2N\left(S^T\right)\alpha^{\ast}(1-\alpha^{\ast})} \\
    &= \frac{N\left(S^T\right)\left(\frac{N_{\alpha^{\ast}}\left(S^T\right)}{N\left(S^T\right)}-\alpha^{\ast}\right)^2}{2\alpha^{\ast}(1-\alpha^{\ast})} \\
    &= \frac{kMn\left(\hat{\beta}^{\ast} -
    \alpha^{\ast}\right)^2}{2\alpha^{\ast}(1-\alpha^{\ast})}.
  \end{align*}
Next, by the Central Limit Theorem we have
\begin{equation*}
    \sqrt{n}(\hat\beta^* - \beta^*) \stackrel{d}{\longrightarrow}\ \mbox{Gaussian}\left(0,\ \frac{\beta^*(1-\beta^*)}{kM}\right),
\end{equation*}
and therefore, by the delta method we also have
\begin{equation*}
    \sqrt{n}\left(g(\hat\beta^*) - g(\beta^*)\right) \stackrel{d}{\longrightarrow} \mbox{Gaussian}\left(0,\frac{\beta^*(1-\beta^*)}{kM}g'(\beta^*)^2\right).
\end{equation*}
If we allow $g(b) = \sqrt\frac{kM}{2\alpha^*(1-\alpha^*)} (b-\alpha^*)$, we then finally have
\begin{equation*}
    \sqrt{F_{\alpha^{\ast}}^{NA}(S^T)}-\sqrt{\frac{kMn(\beta^{\ast}-\alpha^{\ast})^2}{2\alpha^{\ast}(1-\alpha^{\ast})}} \stackrel{d}{\longrightarrow} \mbox{Gaussian}\left(0,\frac{\beta^{\ast}(1-\beta^{\ast})}{2\alpha^{\ast}(1-\alpha^{\ast})}\right).
\end{equation*}
 From Proposition~\ref{prop:BJ_to_NA} we know that, for our score function of choice in the main text, $F_{\alpha^{\ast}}^{BJ}(S)$ is bounded above and below by either $F_{\alpha^{\ast}}^{NA}(S)$ or a constant times $F_{\alpha^{\ast}}^{NA}(S)~\forall S$. Second, by Proposition~\ref{prop:na_transform_max} all other functions $F_{\alpha}$ we reference in Appendix~\ref{sec:scoring_functions} are continuous and monotonic transformations of $F^{NA}_{\alpha}(S)$. Therefore, the corresponding limiting distributions under $H_1(S^T)$ for all functions $F_{\alpha^{\ast}}(S)$ can simply be derived from this result for $F^{NA}_{\alpha^{\ast}}(S)$, mutatis mutandis.
\end{proof}

\power
\begin{proof}
  First, we note that under $H_1(S^T)$
  \begin{equation*}
    F_{\alpha^{\ast}}\left(S^T\right) \le F\left(S^{\ast}\right),
  \end{equation*}
  because the detected subset $S^{\ast} = \arg\max_{S \in Rect,~ \alpha \in [\alpha_{\min},\alpha_{\max}]}F_\alpha(S)$, while $S^T \in Rect$ and $\alpha^{\ast} \in [\alpha_{\min},\alpha_{\max}]$. Now that we have a lower bound on
  $F\left(S^{\ast}\right)$ under $H_1(S^T)$, we consider the
  critical value $h\left(\delta\right)$ for fixed Type-I error rate $\delta > 0$, and
  $F^{NA}$ or $F^{BJ}$ score functions.
  \begin{align}
    P_{H_1}\left(F(S^{\ast}) > h\left(\delta\right)\right) &\ge  P_{H_1}\left(F_{\alpha^{\ast}}\left(S^T\right) > h\left(\delta\right) \right) \nonumber \\
    \label{eq:w-to-inf-power}
    &=P_{H_1}\left(F_{\alpha^{\ast}}\left(S^T\right) >  \left(0.45 \sqrt{M} + \frac{w(\delta)}{\sqrt{2}}\right)^2 \right) \\
    &=P_{H_1}\left(F_{\alpha^{\ast}}\left(S^T\right) >  \left(0.45\sqrt{M} + O(1) \right)^2 \right) \nonumber \\
    \label{eq:F_a_ast-rej-power}
    &= P_{H_1}\left(\left(O\left(\sqrt{kMn}\right) + Z \sigma^2_{\alpha^{\ast}\beta^{\ast}} \right)^2 >  \left(0.45\sqrt{M} + O(1) \right)^2 \right),
  \end{align}

  \noindent where $Z \sim \mbox{Gaussian}\left(0,1\right)$,~\eqref{eq:w-to-inf-power} follows from Theorem~\ref{thm:false_posotive}, and~\eqref{eq:F_a_ast-rej-power} follows from Lemma
  \ref{lem:alt_converg}.  For $n\rightarrow \infty$ and constant $k$ and $M$, we know that the lhs of~\eqref{eq:F_a_ast-rej-power} goes to $\infty$ while the rhs does not, and thus $P_{H_1}\left(F(S^{\ast}) > h\left(\delta\right)\right) \rightarrow 1$.\ignore{ We also note that although $w(\delta)$ is derived from the approximation in~\eqref{eq:W(epsilon)-tail-prob}, with the assumption $w\longrightarrow\infty$, that from the above derivation in this proof we see that when $H_0$ is false, we expected $w$ to be  regime $\delta \in O(\phi(w)) \in O(e^{-w})$, which implies that $h(\delta) \in O(\ln{\frac{1}{w}})$  we are considering $O\left(\sqrt{kn}\right)\longrightarrow\infty$.} Finally, by Proposition~\ref{prop:na_transform_max}
  all other score functions we reference in Appendix~\ref{sec:scoring_functions} are maximizations over continuous and monotonic transformations of the continuous function
  $F_{\alpha}^{NA}(S)$; therefore, this result will hold mutatis mutandis for these transformations.
\end{proof}

\subsection{Subset Correctness}
\label{sec:subset_correct}
In this section, we are still interested in studying the properties of our framework
under $H_1(S^T)$.  However, we are now concerned about the
correctness of the detected subset $S^{\ast}$: our objective is for 
$S^{\ast}$ to exactly match $S^T$. If $x$ is a data element, i.e., one of the $M$ unique covariate profiles in the data; $U_{X}(D)$
is the collection of these data elements, i.e., $U_{X}(D) =
\{x_{1},\ldots,x_{M}\}$; and both $U_{X}(S^{\ast}),U_{X}\left(S^T\right) \subseteq U_{X}(D)$. The results
in this section are general, and are therefore applicable to an unconstrained
(or constrained) $S^T$; therefore $S^{\ast}$ and $\alpha^{\ast}$ will refer to the joint maximization of subsets and $\alpha$ values
over the unconstrained (or constrained) space in which $S^T$ is defined.  We begin building our theory by demonstrating that the score function of interest can be re-written as an additive function if we condition on the value of the null and alternative hypothesis parameters $\alpha$ and $\beta(\alpha)$. More specifically, the score of a subset $S$ can be decomposed into the sum of contributions (measured by a function $\omega$) from each individual covariate profile $x$ contained within the subset. For example, with respect to $F^{BJ}$, $\omega^{BJ} \left( \alpha, \beta, N_{\alpha}\left(x\right), N\left(x\right) \right) = C^{1}_{\alpha, \beta} ~  N_{\alpha}\left(x\right) +  C^{2}_{\alpha, \beta} ~ N\left(x\right)$, where each $C$ is only a function of $\alpha$ and $\beta$, and therefore constant with respect to $N_{\alpha}\left(x\right)$ and $N\left(x\right)$.
\begin{restatable}{lem}{addfuncs}
  \label{add_funcs}
  $F(S)$ can be written as $\max_{\alpha, \beta}\sum_{x \in U_{X}(S)}{\omega \left( \alpha, \beta, N_{\alpha}\left(x\right), N\left(x\right)
      \right)}$, for $\alpha, \beta \in (0,1)$ representing quantile values of the control and treatment potential outcomes distributions respectively.
\end{restatable}

\begin{proof}
  First we note that from the derivations of $F_{\alpha}^{BJ}(S)$
  and $F^{NA}_{\alpha}(S)$ in Appendix \ref{sec:scoring_functions}, that if we do not set $\beta =
  \beta_{\text{mle}}(S)$ but instead treat $\beta \in (0,1)$ as a given
  quantity, then
   \begin{align*}
   F^{BJ}(S) 
    &= \max_{\alpha,\beta}F_{\alpha,\beta}^{BJ}(S)\\
  &= \max_{\alpha,\beta}N_{\alpha}(S) \log  \left(\frac {\beta} {\alpha} \right) + \left( N(S)-N_{\alpha}(S) \right)
  \log \left( \frac{1-\beta}{1-\alpha} \right) \\
  &=\max_{\alpha,\beta} N_{\alpha}(S) \log  \left(\frac {\beta (1-\alpha)} {\alpha(1-\beta)} \right) + N(S) \log \left( \frac{1-\beta}{1-\alpha} \right) \\
  &= \max_{\alpha,\beta} \log  \left(\frac {\beta (1-\alpha)} {\alpha(1-\beta)} \right) \left( \sum_{x \in U_{X}(S)}{N_{\alpha}(x)} \right)+  \log \left( \frac{1-\beta}{1-\alpha} \right) \left( \sum_{x \in U_{X}(S)}{N(x)} \right) \\
  &=\max_{\alpha,\beta}  \sum_{x \in U_{X}(S)}{ \log \left(\frac {\beta (1-\alpha)} {\alpha(1-\beta)} \right) N_{\alpha}(x) +  \log \left( \frac{1-\beta}{1-\alpha} \right) N(x)}\\
  &= \max_{\alpha,\beta} \sum_{x \in U_{X}(S)}{ C^{BJ_1}_{\alpha, \beta} ~ N_{\alpha}(x) +  C^{BJ_2}_{\alpha, \beta} ~ N(x)}\\
  &=\max_{\alpha,\beta}\sum_{x \in U_{X}(S)} \omega^{BJ}\big( \alpha, \beta, N_{\alpha}(x), N(x) \big)
  \end{align*}
  \begin{align*}
  F^{NA}(S)
  &= \max_{\alpha,\beta}F_{\alpha,\beta}^{NA}(S)\\
  &= \max_{\alpha,\beta} \frac{N_{\alpha}(S)\left(\beta-\alpha\right)}{\alpha(1-\alpha)} + \frac{N(S)\left(\alpha^2-\beta^2\right)}{2\alpha(1-\alpha)} \\
  &= \max_{\alpha,\beta}\frac{\left(\beta-\alpha\right)}{\alpha(1-\alpha)} \left( \sum_{x \in U_{X}(S)}{N_{\alpha}(x)} \right) + \frac{\left(\alpha^2-\beta^2\right)}{2\alpha(1-\alpha)} \left( \sum_{x \in U_{X}(S)}{N(x)} \right) \\
  &=  \max_{\alpha,\beta}\sum_{x \in U_{X}(S)}{ \frac{\left(\beta-\alpha\right)}{\alpha(1-\alpha)} N_{\alpha}(x) +  \frac{\left(\alpha^2-\beta^2\right)}{2\alpha(1-\alpha)} N(x)}\\
  &=  \max_{\alpha,\beta}\sum_{x \in U_{X}(S)}{ C^{NA_1}_{\alpha, \beta} N_{\alpha}(x) +  C^{NA_2}_{\alpha, \beta} N(x)}\\
  &=\max_{\alpha,\beta}\sum_{x \in U_{X}(S)} \omega^{NA}\big( \alpha, \beta, N_{\alpha}(x), N(x) \big)
 \end{align*}

  where all the $C_{\alpha, \beta}$'s are constants with respect to given
  values of $\alpha,\beta$.
\end{proof}

We now have that the score of a subset $S$ can be decomposed into the sum of
contributions (measured by a function $\omega$) from each individual element
contained within the subset. Next, we seek to demonstrate some important
properties of the $\omega$ functions. More specifically, $\omega$ is a concave
function with respect to $\beta$, which has two roots and a unique maximum.

\begin{restatable}{lem}{naconcave}
  \label{lem:na-concave}
  $\omega^{NA}\left( \alpha, \beta, N_{\alpha}\left(x\right), N\left(x\right)
      \right)$ is
  concave with respect to $\beta$, maximized at $\beta_{\text{mle}}(x) =
  \frac{N_{\alpha}\left(x\right)}{N\left(x\right)}$, and has two roots $\left(\beta_{\min}(x),
  \beta_{\max}(x)\right)$.
\end{restatable}
\begin{proof}
  Firstly,
  \begin{align}
    \frac{\partial~\omega^{NA}\big( \alpha, \beta, N_{\alpha}(x), N(x)
      \big)}{\partial \beta} &= \frac{N_{\alpha}(x)-N(x)\beta}{\alpha(1-\alpha)}
      \nonumber \\
    &= -\frac{N(x)}{\alpha(1-\alpha)}\beta+\frac{N_{\alpha}(x)}{\alpha(1-\alpha)} \label{eq:na-line} \\
      {\text{ (set) }} ~ 0 &=  -\frac{N(x)}{\alpha(1-\alpha)}\beta+\frac{N_{\alpha}(x)}{\alpha(1-\alpha)} \nonumber \\
    0 &= -N(x)\beta+N_{\alpha}(x) \nonumber \\
    \beta &= \frac{N_{\alpha}(x)}{N(x)}, \label{eq:na-root}
  \end{align}
  \eqref{eq:na-line} shows that the first derivative is the equation of a line,
  with a negative slope, and \eqref{eq:na-root} shows that this line has one
  root at $\frac{N_{\alpha}(x)}{N(x)}$. This implies $\omega^{NA}$ is
  concave with respect to $\beta$, with at most two roots which we will refer to
  as $\beta_{\min}(x)$ and $\beta_{\max}(x)$, and is maximized at
  $\frac{N_{\alpha}(x)}{N(x)}$.
\end{proof}

\noindent We now show the same result for $\omega^{BJ}.$
\begin{lem}
  \label{lem:bj-concave}
  $\omega^{BJ}\big( \alpha, \beta, N_{\alpha}(x), N(x) \big)$ is
  concave with respect to $\beta$, maximized at $\beta_{\text{mle}}(x) =
  \frac{N_{\alpha}(x)}{N(x)}$, and has two roots $\left(\beta_{\min}(x),
  \beta_{\max}(x)\right)$.
\end{lem}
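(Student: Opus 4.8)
The plan is to mirror the proof of Lemma~\ref{lem:na-concave}, using the explicit form of $\omega^{BJ}$ obtained in the proof of Lemma~\ref{add_funcs}, namely
\[
  \omega^{BJ}\big(\alpha,\beta,N_{\alpha}(x),N(x)\big)
  = N_{\alpha}(x)\log\!\left(\frac{\beta(1-\alpha)}{\alpha(1-\beta)}\right)
  + N(x)\log\!\left(\frac{1-\beta}{1-\alpha}\right),
\]
which, viewed as a function of $\beta\in(0,1)$, equals $N_{\alpha}(x)\log\beta + (N(x)-N_{\alpha}(x))\log(1-\beta)$ plus a constant independent of $\beta$; that is, up to an additive constant it is the binomial log-likelihood in the success probability $\beta$.

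First I would establish concavity and locate the maximum. Differentiating gives
\[
  \frac{\partial\,\omega^{BJ}}{\partial\beta}
  = \frac{N_{\alpha}(x)}{\beta} - \frac{N(x)-N_{\alpha}(x)}{1-\beta},
\]
which is strictly decreasing on $(0,1)$ (equivalently $\partial^2\omega^{BJ}/\partial\beta^2 = -N_{\alpha}(x)/\beta^2 - (N(x)-N_{\alpha}(x))/(1-\beta)^2 < 0$), so $\omega^{BJ}$ is strictly concave in $\beta$. Setting the first derivative to zero yields $N_{\alpha}(x)(1-\beta) = (N(x)-N_{\alpha}(x))\beta$, i.e.\ $\beta = N_{\alpha}(x)/N(x) = \beta_{\text{mle}}(x)$, which by concavity is the unique global maximizer. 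This is exactly the structure of the normal-approximation argument; the only difference is that here the first derivative is a strictly decreasing function rather than an affine one, so one argues monotonicity directly instead of reading off a negative slope.

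It then remains to exhibit the two roots. Observe that $\beta=\alpha$ makes both logarithmic coefficients vanish, so $\omega^{BJ}(\alpha)=0$ always, giving $\alpha$ as one root, and by strict concavity $\omega^{BJ}(\beta_{\text{mle}}(x)) \ge \omega^{BJ}(\alpha) = 0$, with strict inequality whenever $\beta_{\text{mle}}(x)\ne\alpha$. Since $\omega^{BJ}(\beta)\to-\infty$ as $\beta\to 0^{+}$ and as $\beta\to 1^{-}$, the intermediate value theorem together with strict concavity produces exactly one root strictly on each side of $\beta_{\text{mle}}(x)$; as $\alpha$ lies on one of these sides, $\omega^{BJ}$ has precisely two roots $\beta_{\min}(x) < \beta_{\max}(x)$, one of which is $\alpha$. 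The main obstacle is simply handling the degenerate boundary cases $N_{\alpha}(x)\in\{0,N(x)\}$, where $\omega^{BJ}$ is monotone rather than having an interior maximum, and the trivial case $\beta_{\text{mle}}(x)=\alpha$, where the two roots coincide; as in Lemma~\ref{lem:na-concave}, these are excluded by restricting attention to the anomalous regime $0 < \alpha < \beta_{\text{mle}}(x) < 1$, so the remainder of the argument is routine.
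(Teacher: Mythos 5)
Your proof is correct and takes essentially the same route as the paper: differentiate $\omega^{BJ}$ with respect to $\beta$, locate the unique critical point at $\beta_{\text{mle}}(x)=N_{\alpha}(x)/N(x)$, and confirm concavity via the second derivative. If anything you are slightly more careful than the paper's own argument, which evaluates the second derivative only at the critical point and merely asserts ``at most two roots,'' whereas you establish strict concavity on all of $(0,1)$ and exhibit the two roots explicitly by noting $\omega^{BJ}(\alpha)=0$ together with the divergence of $\omega^{BJ}$ at the endpoints.
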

\begin{proof}
  \begin{align*}
  \frac{\partial~\omega^{BJ}\big( \alpha, \beta, N_{\alpha}(x), N(x) \big)}{\partial \beta} &= \frac{N_{\alpha}(x)-N(x)\beta}{\beta(1-\beta)} \\
    {\text{ (set)}} ~ 0 &= \frac{N_{\alpha}(x)-N(x)\beta}{\beta(1-\beta)} \\
    0 &= N_{\alpha}(x)-N(x)\beta \\
    \beta &= \frac{N_{\alpha}(x)}{N(x)}
  \end{align*}
  shows that $\omega^{BJ}$ is maximized (if it is concave) at
  $\frac{N_{\alpha}(x)}{N(x)}$ and has at most two roots, which we will refer to
  as $\beta_{\min}(x)$ and $\beta_{\max}(x)$. Additionally,
  \begin{align*}
    \frac{\partial^2 ~\omega^{BJ}\big( \alpha, \beta, N_{\alpha}(x), N(x)
      \big)}{\partial^2 \beta}\Biggr\rvert_{\beta = \frac{N_{\alpha}(x)}{N(x)}}
    &= -\frac{\beta^2N(x)+(1-2\beta)N_{\alpha}(x)}{(\beta-1)^2\beta^2}\Biggr\rvert_{\beta =
      \frac{N_{\alpha}(x)}{N(x)}} \\
    &< 0
  \end{align*}
  shows that $\omega^{BJ}$ is concave with respect to $\beta$.
\end{proof}

Intuitively, $(\beta_{\min}(x),\beta_{\max}(x))$ is the interval over which $\omega$ makes a positive contribution to the score of a subset, while this contribution is maximized at $\beta_{\text{mle}}(x)$; we note that in the case of $\omega^{NA}$ and $\omega^{BJ}$, $\beta_{\min}(x) = \alpha$. Given that we have demonstrated that $\omega$ is concave, we now demonstrate a
key insight about the relationship between $r_{\max} = \beta_{\max}(x) - \alpha$ and $r_{\text{mle}} = \beta_{\text{mle}}(x) - \alpha$.

\begin{restatable}{lem}{maxtomle}
  \label{lem:na-max-to-mle}
    With respect to $\omega^{NA}\left( \alpha, \beta, N_{\alpha}(x), N(x)
      \right)$, $\frac{r_{\max}(x)}{r_{\text{mle}}(x)} = 2$.
\end{restatable}.
\begin{proof}
  First, by Lemma \ref{lem:na-concave}, we know that, with respect to $\beta$,
  $\omega^{NA}$ is concave and has at most two roots $\left(\beta_{\min}(x),
  \beta_{\max}(x)\right)$. Therefore, we have the following:
  \begin{align*}
  \omega^{NA}\big( \alpha, \beta, N_{\alpha}(x), N(x) \big)&= \frac{N_{\alpha}(x)\left(\beta-\alpha\right)}{\alpha(1-\alpha)} + \frac{N(x)\left(\alpha^2-\beta^2\right)}{2\alpha(1-\alpha)} \\
  {\text{ (set)}}~0 &= \frac{N_{\alpha}(x)\left(\beta-\alpha\right)}{\alpha(1-\alpha)} + \frac{N(x)\left(\alpha^2-\beta^2\right)}{2\alpha(1-\alpha)} \\
  &=  2N_{\alpha}(x)\left(\beta-\alpha\right) + N(x)\left(\alpha^2 - \beta^2\right) \\
  &=  \left( -N(x) \right)\beta^2 + \left( 2N_{\alpha}(x) \right)\beta + \left( -2\alpha N_{\alpha}(x) + N(x)\alpha^2 \right) \\ \\
  \{\beta_{\min}(x), \beta_{\max}(x)\} &= \frac{ -2N_{\alpha}(x) \pm \sqrt{ \left( 2N_{\alpha}(x) \right)^2 -4 \left( -N(x) \right) \left( -2N_{\alpha}(x)\alpha + N(x)\alpha^2 \right)} }{-2N(x)} \\
  &= \frac{ -2N_{\alpha}(x) \pm \sqrt{ 4 \left(N_{\alpha}(x)^2 -2N_{\alpha}(x)N(x)\alpha + (N(x)\alpha)^2 \right)} }{-2N(x)} \\
  &= \frac{ -2N_{\alpha}(x) \pm \sqrt{ 4\left( N_{\alpha}(x) -N(x)\alpha \right)^2 } }{-2N(x)} \\
  &= \frac{ N_{\alpha}(x) \pm \left( N_{\alpha}(x) - N(x)\alpha \right) }{ N(x) } \\
  &= \{ \alpha, 2\beta_{\text{mle}}(x) - \alpha \}.
\end{align*}
This implies that $\beta_{\max}(x) - \alpha = 2\left(\beta_{\text{mle}}(x) -
\alpha\right)$ and thus $r_{\max}(x) = 2r_{\text{mle}}(x)$, with respect to
$\omega^{NA}.$
\end{proof}

\noindent We show a similar result for $\omega^{BJ}.$
\begin{lem}
  \label{lem:bj-max_to_mle}
  With respect to $\omega^{BJ}\big( \alpha, \beta, N_{\alpha}(x), N(x)
  \big)$,
  \begin{equation*}
    \frac{r_{\max}(x)}{r_{\text{mle}}(x)}
    \begin{cases}
     < 2 & \text{if } \beta_{\text{mle}}(x) > \frac{1}{2} \\
     = 2 & \text{if } \beta_{\text{mle}}(x) = \frac{1}{2} \\
     > 2 & \text{otherwise}.\\
    \end{cases}
  \end{equation*}
\end{lem}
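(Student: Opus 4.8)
The plan is to recast $\omega^{BJ}$ as a difference of binary Kullback-Leibler divergences and reduce the claim to comparing a level set of $\beta \mapsto \mathrm{KL}(m\,\|\,\beta)$ with its reflection through the maximizer $m := \beta_{\mathrm{mle}}(x) = N_\alpha(x)/N(x)$. Writing $\mathrm{KL}(p\,\|\,q) = p\log\frac{p}{q} + (1-p)\log\frac{1-p}{1-q}$ for the binary KL divergence, a direct rearrangement of the $\omega^{BJ}$ formula from Lemma~\ref{add_funcs} gives
\begin{equation*}
  \frac{1}{N(x)}\,\omega^{BJ}\!\big(\alpha,\beta,N_\alpha(x),N(x)\big) \;=\; \mathrm{KL}(m\,\|\,\alpha) - \mathrm{KL}(m\,\|\,\beta).
\end{equation*}
Since $\psi(\beta) := \mathrm{KL}(m\,\|\,\beta)$ is strictly convex on $(0,1)$ with $\psi(m)=0$, $\psi'(m)=0$, and $\psi(\beta)\to +\infty$ as $\beta\to 0^+$ or $\beta\to 1^-$ (working in the nondegenerate, affected regime $0<\alpha<m<1$ that makes $r_{\mathrm{mle}}=m-\alpha>0$ and the ratio meaningful), the two roots of $\omega^{BJ}$ in $\beta$ promised by Lemma~\ref{lem:bj-concave} are exactly $\beta_{\min}(x)=\alpha$ and $\beta_{\max}(x)=\beta_{\max}$, where $\beta_{\max}$ is the unique point of $(m,1)$ with $\psi(\beta_{\max})=\psi(\alpha)$. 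Then $r_{\max}(x)/r_{\mathrm{mle}}(x) = (\beta_{\max}-\alpha)/(m-\alpha)$, so the claimed trichotomy is equivalent to comparing $\beta_{\max}$ with the reflected point $2m-\alpha$.

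I would first dispose of the boundary case $2m-\alpha\ge 1$: then $\beta_{\max}<1\le 2m-\alpha$ forces the ratio below $2$, and $2m-\alpha\ge 1$ can only occur when $m>\tfrac12$, consistent with the first case of the statement. It remains to treat $2m-\alpha<1$, in which case $0<\alpha<m<2m-\alpha<1$, so both $\beta_{\max}$ and $2m-\alpha$ lie in $(m,1)$, where $\psi$ is strictly increasing; hence it suffices to compare $\psi(2m-\alpha)$ with $\psi(\beta_{\max})=\psi(\alpha)$.

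The core step is to study $\chi(\delta):=\psi(m+\delta)-\psi(m-\delta)$ on $\delta\in[0,\min(m,1-m))$. Using $\psi'(\beta)=\dfrac{\beta-m}{\beta(1-\beta)}$, a short simplification yields
\begin{equation*}
  \chi'(\delta) = \psi'(m+\delta)+\psi'(m-\delta) = \frac{2\delta^2(2m-1)}{(m^2-\delta^2)\big((1-m)^2-\delta^2\big)},
\end{equation*}
whose denominator is positive on the stated range; together with $\chi(0)=0$ this shows $\operatorname{sign}\chi(\delta)=\operatorname{sign}(2m-1)$ for $\delta\in(0,\min(m,1-m))$. Applying this at $\delta_0=m-\alpha$ (which lies in the range since $\alpha>0$ and $2m-\alpha<1$) gives $\psi(2m-\alpha)-\psi(\alpha)=\chi(\delta_0)$, so $\psi(2m-\alpha)\gtrless\psi(\beta_{\max})$ according as $m\gtrless\tfrac12$; by monotonicity of $\psi$ on $(m,1)$ this gives $\beta_{\max}\lessgtr 2m-\alpha$, i.e.\ $r_{\max}/r_{\mathrm{mle}}<2$, $=2$, or $>2$ exactly as claimed (the case $m=\tfrac12$ giving $\chi\equiv 0$, hence $\beta_{\max}=2m-\alpha$). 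The only real obstacle is the bookkeeping around degeneracies and ranges---verifying that $\beta_{\max}$ is genuinely the second root in $(0,1)$ so the statement is nonvacuous, excluding $m\in\{0,1\}$ and $m=\alpha$, and keeping the $2m-\alpha\ge 1$ branch separate---while the algebra behind the displayed $\chi'$ is routine once the terms are organized.
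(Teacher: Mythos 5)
Your proof is correct and follows essentially the same route as the paper's: both arguments reduce the trichotomy to a symmetric slope comparison about the maximizer $m=\beta_{\text{mle}}(x)$ of the concave $\omega^{BJ}$ (your $\psi(\beta)=\mathrm{KL}(m\,\|\,\beta)$ is just $-\omega^{BJ}/N(x)$ up to an additive constant), with the sign of $2m-1$ determining which side of the peak falls off faster. If anything your write-up is tighter than the paper's: the function $\chi(\delta)=\psi(m+\delta)-\psi(m-\delta)$ with $\chi(0)=0$ and $\chi'$ of one sign properly integrates the pointwise slope comparison that the paper only asserts ``for some $\epsilon>0$,'' and you also dispose of the boundary case $2m-\alpha\ge 1$ explicitly, which the paper omits.
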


\begin{proof}
First, by Lemma~\ref{lem:bj-concave}, we know that, with respect to $\beta$,
$\omega^{BJ}$ is concave and has at most two roots $\left(\beta_{\min}(x),
\beta_{\max}(x)\right)$. One of the solutions of $\omega^{BJ}$ must be
$\alpha$, so let us assume that $\beta_{\min}(x) = \alpha$; this will be true
when $\beta > \alpha$, which intuitively corresponds to our case of interest:
when the covariate profile contains more significant (extreme) $p$-values than
expected. Furthermore, we know that $\omega^{BJ}$ achieves a maximum at
$\beta_{\text{mle}} = \frac{N_\alpha(x)}{N(x)}$. With these properties we can
show the first case ($1 \le \frac{r_{\max}(x)}{r_{\text{mle}}(x)} < 2$) by first
recognizing that trivially $\beta_{\text{mle}} \leq \beta_{\max}$, and
$\beta_{\text{mle}} -\alpha \leq \beta_{\max} -\alpha $. To show the upper bound
of the first case, it suffices to show that $\omega^{BJ}\left(\alpha,
\beta_{\text{mle}} - \epsilon, N_\alpha(x), N(x)\right) \geq
\omega^{BJ}\left(\alpha, \beta_{\text{mle}} + \epsilon, N_\alpha(x), N(x)\right)$
for some $\epsilon >0$. The essential implication is that the concave function
$\omega^{BJ}$ increases at a slower rate (until it reaches its maximum)
than it decreases. This further implies that the distance between
$\beta_{\text{mle}}$ and $\alpha$ is larger than the distance between $\beta_{\text{mle}}$ and $\beta_{\max}$, and therefore the desired result.

Recall from Lemma~\ref{lem:bj-concave} that
\begin{align*}
  \frac{\partial~\omega^{BJ}\big( \alpha, \beta, N_{\alpha}(x), N(x)
    \big)}{\partial \beta} &= \frac{N_{\alpha}(x)-N(x)\beta}{\beta(1-\beta)} \\
  &= N(x) \left[ \frac{\beta_{\text{mle}}(x)-\beta}{\beta(1-\beta)} \right],
\end{align*}
which means the slope of $\omega^{BJ}$ is proportional to
$\frac{\beta_{\text{mle}}(x)-\beta}{\beta(1-\beta)}$. We now compare the slope
around the inflection point $\beta_{\text{mle}}(x)$, and recognize that
at 
$\beta = \beta_{\text{mle}}(x)+\epsilon$ the slope is negative with
  absolute value proportional to $\frac{\epsilon} { \left(
  \beta_{\text{mle}}(x) + \epsilon \right)  \left( 1-\beta_{\text{mle}}(x) -
  \epsilon \right) }$. At
$\beta = \beta_{\text{mle}}(x)-\epsilon$ the slope is positive with
  absolute value proportional to $\frac{\epsilon} { \left(
  \beta_{\text{mle}}(x) - \epsilon \right)  \left( 1-\beta_{\text{mle}}(x) +
  \epsilon \right) }$.
Therefore,
\begin{alignat*}{3}
  \beta_{\text{mle}}(x) > \frac{1}{2}  &\Longleftrightarrow & \left(
    \beta_{\text{mle}}(x) + \epsilon \right)  \left( 1-\beta_{\text{mle}}(x) -
    \epsilon \right) &< \left( \beta_{\text{mle}}(x) - \epsilon \right)
    \left( 1-\beta_{\text{mle}}(x) + \epsilon \right) &\\
  &\Longleftrightarrow &  \frac{\epsilon} { \left( \beta_{\text{mle}}(x) + \epsilon
    \right)  \left( 1-\beta_{\text{mle}}(x) - \epsilon \right) } &> \frac{\epsilon} { \left( \beta_{\text{mle}}(x) - \epsilon
    \right)  \left( 1-\beta_{\text{mle}}(x) + \epsilon \right) } &\\
  &\Longleftrightarrow & \frac{r_{\max}(x)}{r_{\text{mle}}(x)} &< 2. &
\end{alignat*}
The demonstration of the remaining two conditions follow precisely the same approach
above, mutatis mutandis.
\end{proof}

Now that we have built up the necessary properties of the $\omega$ functions, we
now will discuss the sufficient conditions for the detected subset to be exactly
correct, $S^{\ast} = S^T$. To begin we re-introduce some additional notation:
\begin{align*}
  r^{\text{aff}}_{\text{mle}-h} &= \max_{x \in U_{X}(S^T)} r_{\text{mle}}(x),\\
  r^{\text{aff}}_{\text{mle}-l} &= \min_{x \in U_{X}(S^T)} r_{\text{mle}}(x),\\
  r^{\text{unaff}}_{\text{mle}-h} &= \max_{x \not\in U_{X}(S^T)} r_{\text{mle}}(x),\\
  \eta &=  \left( \frac{\sum_{x \in U_{X}(S^T)}{N(x)} }{ \sum_{x \in U_{X}(D)}{N(x)} } \right),\\
  \nu-homogeneous &\colon \frac{r^{\text{aff}}_{\text{mle}-h}}{r^{\text{aff}}_{\text{mle}-l}} < \nu, \\
  \delta-strong &\colon \frac{r^{\text{aff}}_{\text{mle}-l}}{r^{\text{unaff}}_{\text{mle}-h}} > \delta,\\
  R &\colon (0,1) \mapsto (0,1).
\end{align*}
More specifically, $R$ is an invertible function such that  $R \colon
r_{\max}(x) \mapsto r_{\text{mle}}(x)$--i.e., if $R$ is applied to
$r_{\max}(x)$ it would produce the corresponding $r_{\text{mle}}(x)$. From
Lemma~\ref{lem:na-max-to-mle} we know that with respect to $\omega^{NA}$,
$R^{NA}(r) = \frac{r}{2}$, while from Lemma \ref{lem:bj-max_to_mle} we
know that with respect to $\omega^{BJ}$, $R^{BJ}(r) \le
\frac{r}{2}$ under certain conditions.

The first result we provide is a sufficient condition for guaranteeing that the
detected subset includes all the elements from the true subset ($S^{\ast}
\supseteq S^T$). More specifically, we show that such a condition is sufficient
homogeneity of the affected data elements: for a given value
$\nu$, and any pair of affected covariate profiles $(x_i, x_j \in U_{X}(S^T))$, the anomalous signal $r_\text{mle}(x)$ observed in $x_i$ is no more than $\nu$ times that which is observed in $x_j$.

\homo
\begin{proof}
  First, let $\{ x_{(1)}, \ldots, x_{(t)} \}$ be the data elements in $S^T$
  sorted by the priority function (Theorem \ref{thm:LTSS}) $G(x) =
  \frac{N_{\alpha}(x)}{N(x)} = \beta_{\text{mle}}(x)$. By the
  assumption of an observed signal that is at least $1$-strong, these data
  elements are the $t$ highest priority data elements. Additionally, let $\nu =
  \frac{r^{\text{aff}}_{\text{mle}-h}} {R \left( r^{\text{aff}}_{\text{mle}-h}
  \right)}.$ Therefore,
  \begin{alignat*}{3}
    \nu-homogeneous &\implies & \nu &> \frac{r^{\text{aff}}_{\text{mle}-h}}{r^{\text{aff}}_{\text{mle}-l}} &\ignore{\quad \left( \text{by definition} \right)}\\
      &\therefore & \frac{r^{\text{aff}}_{\text{mle}-h}}{R \left( r^{\text{aff}}_{\text{mle}-h} \right)} &> \frac{r^{\text{aff}}_{\text{mle}-h}}{r^{\text{aff}}_{\text{mle}-l}} &\ignore{\quad \left( \text{by assumption of } \nu \right)} \\
      &\implies & r^{\text{aff}}_{\text{mle}-l} &> R\left( r^{\text{aff}}_{\text{mle}-h} \right) &\\
      &\implies & R^{-1}\left( r^{\text{aff}}_{\text{mle}-l} \right) &> r^{\text{aff}}_{\text{mle}-h} &\\
      &\implies & \beta_{\max}(x_{(t)}) - \alpha &> \beta_{\text{mle}}(x_{(1)}) - \alpha &\\
      &\implies & \beta_{\max}(x_{(t)}) &> \beta_{\text{mle}}(x_{(k)}) &\quad \left( \forall k \right) \\
      &\implies & \beta_{\max}(x_{(t)}) &> \beta_{\text{mle}}(S^{\ast}) &\\
      &\therefore & \omega\big( \alpha, \beta_{\text{mle}}(S^{\ast}), N_{\alpha}(x_{(t)}), N(x_{(t)}) \big) &> 0 & \\
      &\therefore & |S^{\ast}| &\ge t &\\
      &\therefore & S^{\ast} &\supseteq S^T. &
  \end{alignat*}

  Intuitively, $\beta_{\text{mle}}(x_{(t)})$ and $\beta_{\text{mle}}(x_{(1)})$
  are respectively the smallest and largest $\beta_{\text{mle}}$ of all the $x\in
   U_{X}(S^T)$. Furthermore, $\beta_{\text{mle}}(x_{(t)}) \le
  \beta_{\text{mle}}(x_{(k)}) \le \beta_{\max}(x_{(t)})~\forall k\in[1,t]$, which means
  $\beta_{\text{mle}}(S^{\ast}) \le \beta_{\max}(x_{(t)})$ for the optimal
  subset $S^{\ast}$. Moreover, the $S^{\ast}$ that maximizes $F_{\alpha,\beta}$ will
  include any covariate profile $x$ that would make a positive contribution to the
  score $F_{\alpha,\beta}$ at the given value of $\beta$. Such a
  positive contribution occurs when the concave $\omega$ function of $x$ is
  positive. At the optimal $\alpha$ and $\beta = \beta_{\text{mle}}(S^{\ast})$
  the $\omega$ function for each of the $\{x_{(1)}, \ldots, x_{(t)}\}$ is
  positive because $\beta_{\max}$ (the larger root of the $\omega$ functions)
  for each of these elements is greater than $\beta_{\text{mle}}(S^{\ast})$.
  \end{proof}
\begin{cor}
  From Lemma \ref{lem:na-max-to-mle} we know that with respect to
  $\omega^{NA}$, $\frac{r} {R \left( r \right)} = 2$. Additionally, from
  Lemma \ref{lem:bj-max_to_mle} we know that with respect to
  $\omega^{BJ}$, $\frac{r} {R \left( r \right)} \le 2$ under certain
  conditions. Therefore, we can conclude that at $\alpha^{\ast}$, $2$-homogeneity (and $1$-strength)
  is sufficient for $S^{\ast} \supseteq S^T$ with respect to $F^{NA}$; to
  $F^{BJ}$, under some conditions; and to the other score functions described
  above, by Proposition~\ref{prop:na_transform_max}. Essentially,
  if the observed excess proportions of $p$-values significant at $\alpha^{\ast}$ vary by no more than a factor of 2 across
  all of the affected $x \in U_{X}(S^T)$,
  then the detected subset will include all of the affected data elements.
\end{cor}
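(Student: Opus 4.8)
The plan is to read the Corollary as a pure specialization of Theorem~\ref{thm:subset_homo}: its proof constructs the homogeneity threshold explicitly as $\nu = r^{\text{aff}}_{\text{mle}-h}\big/R\!\left(r^{\text{aff}}_{\text{mle}-h}\right)$, where $R$ is the inverse of the profile map $r_{\text{mle}}(x)\mapsto r_{\max}(x)$ attached to whichever score function is in play, and where everything is evaluated at the optimal threshold $\alpha^{\ast}$. So all that remains is to compute (or bound) $\nu$ for each $\omega$ and then propagate the conclusion of Theorem~\ref{thm:subset_homo}.

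First I would treat $F^{NA}$. Lemma~\ref{lem:na-max-to-mle} gives the exact identity $r_{\max}(x) = 2\,r_{\text{mle}}(x)$ for every profile $x$, so $R^{NA}(r) = r/2$ and hence $\nu = r^{\text{aff}}_{\text{mle}-h}\big/\left(r^{\text{aff}}_{\text{mle}-h}/2\right) = 2$, independently of the data. Substituting $\nu = 2$ into Theorem~\ref{thm:subset_homo} (with the stated $1$-strength) yields $S^{\ast}\supseteq S^T$: concretely, if the observed proportions $\beta_{\text{mle}}(x) = N_{\alpha^{\ast}}(x)/N(x)$ over the $t$ affected profiles differ by less than a factor of $2$, then the detected subset contains every affected profile. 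The one point to watch is that the homogeneity ratio is the one measured at $\alpha = \alpha^{\ast}$, since $r_{\text{mle}}(x) = \beta_{\text{mle}}(x) - \alpha$ is $\alpha$-dependent; this is consistent with the proof of Theorem~\ref{thm:subset_homo}, which fixes this $\alpha^{\ast}$ throughout the additive decomposition of Lemma~\ref{add_funcs}.

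Next, $F^{BJ}$. Here Lemma~\ref{lem:bj-max_to_mle} replaces the exact factor $2$ by the bound $r_{\max}(x)/r_{\text{mle}}(x)\ge 2$, which holds exactly when $\beta_{\text{mle}}(x)\le\tfrac12$ (with equality precisely at $\tfrac12$), so $R^{BJ}(r)\le r/2$ in that regime and the required threshold satisfies $\nu\ge 2$. As in the $F^{NA}$ case, $2$-homogeneity then implies $\nu$-homogeneity and Theorem~\ref{thm:subset_homo} applies; this is exactly the ``under some conditions'' clause -- the condition being that the most anomalous affected profile (the one realizing $r^{\text{aff}}_{\text{mle}-h}$) has at most half of its $p$-values significant at $\alpha^{\ast}$, while in the complementary regime $\beta_{\text{mle}}>\tfrac12$ one only gets $\nu<2$ and a strictly stronger homogeneity requirement would be needed. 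Finally, for $F^{KS},F^{CV},F^{HC}$, and $F^{AD}$ I would invoke Lemma~\ref{lem:na_transform_max}: each is a monotone transformation of $F^{NA}_\alpha$, so for every $\alpha$ it is maximized over subsets by the same subset as $F^{NA}_\alpha$; hence the joint maximizer over $(S,\alpha)$ -- namely $S^{\ast}$ -- is unchanged and the $F^{NA}$ conclusion transfers verbatim.

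The main obstacle is the $F^{BJ}$ step. Unlike $F^{NA}$, where $R$ is the fixed linear map $r\mapsto r/2$, for $F^{BJ}$ the relation between $r_{\text{mle}}$ and $r_{\max}$ varies with both $\alpha$ and the profile, so $R^{BJ}$ is not a single clean function; one must check that the proof of Theorem~\ref{thm:subset_homo} only ever evaluates $R$ at the single argument $r^{\text{aff}}_{\text{mle}-h}$ -- which it does, comparing $R\!\left(r^{\text{aff}}_{\text{mle}-h}\right)$ only against $r^{\text{aff}}_{\text{mle}-l}$ -- and that Lemma~\ref{lem:bj-max_to_mle} is applied only in its valid regime, the anomalous direction $\beta>\alpha$ where $\beta_{\min}(x)=\alpha$ and $\beta_{\max}(x)$ is the larger root. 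Pinning the $\beta_{\text{mle}}\le\tfrac12$ condition down sharply enough that ``$2$-homogeneity suffices for $F^{BJ}$'' becomes a genuine statement rather than a heuristic is the delicate part.
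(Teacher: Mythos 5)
Your proposal is correct and follows essentially the same route as the paper, which establishes this corollary simply by substituting $\nu = 2$ (obtained from Lemmas \ref{lem:na-max-to-mle} and \ref{lem:bj-max_to_mle}) into Theorem \ref{thm:subset_homo} and invoking Lemma \ref{lem:na_transform_max} for the transformed statistics. One small imprecision in your Berk--Jones discussion: the ratio bound $r_{\max}/r_{\text{mle}} \ge 2$ is actually needed for the \emph{least} anomalous affected profile $x_{(t)}$ (it is its root $\beta_{\max}(x_{(t)})$ that must exceed $\beta_{\text{mle}}(x_{(1)})$, so the operative condition is $\beta_{\text{mle}}(x_{(t)}) \le \tfrac{1}{2}$); your condition on the profile realizing $r^{\text{aff}}_{\text{mle}-h}$ is stronger and hence still sufficient, but it is not ``exactly'' the condition.
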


The next result we provide is a sufficient condition for guaranteeing that the
detected subset will only include elements from the true subset ($S^{\ast}
\subseteq S^T$). More specifically, we show that such a condition is sufficient
strength of the affected data elements; or intuitively, for a given value
$\delta$, the anomalous signals $r_\text{mle}(x)$ observed in every affected data element are more than $\frac{\delta}{\eta}$-times that of the unaffected data elements.

\strength
\begin{proof}
  First, let $D = \{ x_{(1)}, \ldots, x_{(t)}, x_{(t+1)}, \ldots, x_{(M)}\}$ be the
  data elements sorted by the priority function (Theorem
  \ref{thm:LTSS}) $G(x) = \frac{N_{\alpha}(x)}{N(x)} =
  \beta_{\text{mle}}(x)$. By the assumption of $\delta > 1$ (an observed signal
  that is at least $1$-strong), $S^T = \{x_{(1)}, \ldots, x_{(t)}\}$.
  Additionally, if  $r^{\text{unaff}}_{{\text{mle}-h}} \le 0$ then $S^{\ast} \subseteq S^T$, trivially. Therefore, we assume $r^{\text{unaff}}_{{\text{mle}-h}} >0$. Let $\delta = \frac{R^{-1} \left(
  r^{\text{unaff}}_{\text{mle}-h} \right)}{r^{\text{unaff}}_{{\text{mle}-h}}}.$
  Therefore,
  \small
  \begin{alignat*}{3}
    \frac{\delta}{\eta}-strong &\implies & \frac{\delta}{\eta} &< \frac{r^{\text{aff}}_{\text{mle}-l}}{r^{\text{unaff}}_{\text{mle}-h}} &\ignore{\quad \left( \text{by definition} \right)}\\
      &\therefore & \frac{R^{-1} \left( r^{\text{unaff}}_{\text{mle}-h} \right)}{\eta r^{\text{unaff}}_{\text{mle}-h}} &< \frac{r^{\text{aff}}_{\text{mle}-l}}{r^{\text{unaff}}_{\text{mle}-h}} &\ignore{\quad \left( \text{by assumption of } \delta \right)} \\
      &\implies & R^{-1}\left(r^{\text{unaff}}_{\text{mle}-h} \right) &< \left( \frac{\sum_{x \in U_{X}(S^T)}{N(x)} }{ \sum_{x \in U_{X}(D)}{N(x)} } \right) r^{\text{aff}}_{\text{mle}-l} &\\
      & & &=\frac{\sum_{x \in U_{X}(S^T)}{ r^{\text{aff}}_{\text{mle}-l}N(x)} }{ \sum_{x \in U_{X}(D)}{N(x)} } &\\
      & & &\le \frac{\sum_{x \in U_{X}(S^T)}{ r_{\text{mle}}(x)N(x)} }{ \sum_{x \in U_{X}(D)}{N(x)} } ~~\left( \text{since} ~ r_{\text{mle}}(x) \ge r^{\text{aff}}_{\text{mle}-l} \ignore{~~ \forall x \in U_{X}(S^T)}  \right) &\\
      & & &\le\frac{ \sum_{x \in U_{X}(S^T)}{ r_{\text{mle}}(x)N(x)} + \sum_{x \not\in U_{X}(S^T)}{ r_{\text{mle}}(x)N(x)} }{ \sum_{x \in U_{X}(D)}{N(x)} } &\\
      & & &=\frac{ \sum_{x \in U_{X}(D)}{ r_{\text{mle}}(x)N(x)} }{ \sum_{x \in U_{X}(D)}{N(x)} } &\\
      & & &=\frac{ \sum_{x \in U_{X}(D)}{ \left( \frac{ N_{\alpha}(x) }{ N(x) } - \alpha \right) N(x) } }{ \sum_{x \in U_{X}(D)}{N(x)} } &\\
      & & &=\frac{ \sum_{x \in U_{X}(D)}{ N_{\alpha}(x) - N(x)\alpha  } }{ \sum_{x \in U_{X}(D)}{N(x)} } &\\
      & & &=\frac{ \sum_{x \in U_{X}(D)}{ N_{\alpha}(x) } - \sum_{x \in U_{X}(D)}{ N(x)\alpha } }{ \sum_{x \in U_{X}(D)}{N(x)} } &\\
      & & &=\frac{ \sum_{x \in U_{X}(D)}{ N_{\alpha}(x) } }{ \sum_{x \in U_{X}(D)}{N(x)} } - \alpha &\\
      &\therefore & \beta_{\max}(x_{(t+1)}) - \alpha &< \beta_{\text{mle}}(D) - \alpha  &\\
      &\implies & \beta_{\max}(x_{(t+1)}) &< \beta_{\text{mle}}(x_{(t)}) &\\
      &\implies & \beta_{\max}(x_{(t+1)}) &< \beta_{\text{mle}}(S^{\ast}) &\\
      &\therefore & \omega\big( \alpha, \beta_{\text{mle}}(S^{\ast}), N_{\alpha}(x_{(t+1)}), N(x_{(t+1)}) \big) &< 0 &\\
      &\implies & |S^{\ast}| &\le t &\\
      &\therefore & S^{\ast} &\subseteq S^T &\\
  \end{alignat*}
  \normalsize

  Intuitively, $\beta_{\text{mle}}(x_{(t)})$ and $\beta_{\text{mle}}(x_{(t+1)})$
  are respectively the smallest affected and largest unaffected
  $\beta_{\text{mle}}$ values. Furthermore, $\beta_{\max}(x_{(t+1)}) \le
  \beta_{\text{mle}}(x_{(t)})$, which means $\beta_{\max}(x_{(t+1)}) \le
  \beta_{\text{mle}}(S^{\ast})$ for the optimal subset $S^{\ast}$. Moreover, the
  $S^{\ast}$ that maximizes $F_{\alpha,\beta}$ will not include any data element $x$
  that has $\omega \le 0$ and thus
  makes a non-positive contribution to the score $F_{\alpha,\beta}$ at the
  given value of $\beta$. At the optimal
  $\alpha$ and $\beta = \beta_{\text{mle}}(S^{\ast})$ the $\omega$ function for
  each of the $\{x_{(t+1)}, \ldots, x_{(M)}\}$ are non-positive because
  $\beta_{\max}$ (the larger root of the $\omega$ functions) for each of these
  elements is less than $\beta_{\text{mle}}(S^{\ast})$.
\end{proof}

\begin{cor}
  From Lemma~\ref{lem:na-max-to-mle} we know that with respect to
  $\omega^{NA}$, $\frac{R^{-1} \left( r \right)} {r} = 2$. Additionally,
  from Lemma~\ref{lem:bj-max_to_mle} we know that with respect to
  $\omega^{BJ}$, $\frac{R^{-1} \left( r \right)} {r} \ge 2$ under certain
  conditions. Therefore, we can conclude that at $\alpha^{\ast}$, $\frac{2}{\eta}$-strength is sufficient for
  $S^{\ast} \subseteq S^T$ with respect to $F^{NA}$; to $F^{BJ}$, under
  some conditions; and to the other score functions described above, by Proposition~\ref{prop:na_transform_max}. Essentially,
  if the observed excess proportions of $p$-values significant at $\alpha^{\ast}$ across
  all of the $x \in  U_{X}(S^T)$ are at least $\frac{2}{\eta}$ times larger than the observed excess proportions for $x \not\in
   U_{X}(S^T)$, then the detected subset will only include affected data elements.
\end{cor}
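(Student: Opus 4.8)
The plan is to read the corollary as a direct specialization of Theorem~\ref{thm:subset_strength}, whose conclusion is $S^{\ast}\subseteq S^{T}$. That theorem only asserts the \emph{existence} of a constant $\delta>1$ for which $\frac{\delta}{\eta}$-strength forces $S^{\ast}\subseteq S^{T}$, but its proof exhibits $\delta$ in closed form as the inflation factor $\delta=\dfrac{R^{-1}\!\left(r^{\text{unaff}}_{\text{mle}-h}\right)}{r^{\text{unaff}}_{\text{mle}-h}}=\dfrac{r_{\max}(x_{(t+1)})}{r_{\text{mle}}(x_{(t+1)})}$ of the map $R\colon r_{\max}(x)\mapsto r_{\text{mle}}(x)$, evaluated at the highest-priority \emph{unaffected} profile $x_{(t+1)}$. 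So the only work left is to evaluate this factor for each score function in play and then conclude by monotonicity of the ``$c$-strong'' condition in its parameter.

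First I would dispatch $F^{NA}$. By Lemma~\ref{lem:na-max-to-mle} the ratio $r_{\max}(x)/r_{\text{mle}}(x)=2$ for every profile $x$, i.e.\ $R^{NA}(r)=\tfrac{r}{2}$ and $\left(R^{NA}\right)^{-1}(r)=2r$; hence the $\delta$ delivered by the proof of Theorem~\ref{thm:subset_strength} is exactly $2$, and $\frac{2}{\eta}$-strength is the sufficient condition giving $S^{\ast}\subseteq S^{T}$ for $F^{NA}$. For $F^{BJ}$ I would substitute the case analysis of Lemma~\ref{lem:bj-max_to_mle}, which pins $r_{\max}(x)/r_{\text{mle}}(x)$ down according to whether $\beta_{\text{mle}}(x)$ lies above or below $\tfrac12$; in the regime that caps this factor at $2$ at the critical profile $x_{(t+1)}$, the corresponding $\delta$ is $\le 2$, and since ``$c$-strong'' ($r^{\text{aff}}_{\text{mle}-l}/r^{\text{unaff}}_{\text{mle}-h}>c$) is a \emph{stronger} requirement for larger $c$, asking for $\frac{2}{\eta}$-strength then implies $\frac{\delta}{\eta}$-strength and hence $S^{\ast}\subseteq S^{T}$---these are the ``certain conditions''.

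I would then transfer the containment result from $F^{NA}$ to the remaining supremum goodness-of-fit statistics---Kolmogorov--Smirnov, Cram\'er--von Mises, Higher Criticism, Anderson--Darling---via Lemma~\ref{lem:na_transform_max}: for each fixed $\alpha$ these are strictly increasing transformations of $F^{NA}_{\alpha}$, so the joint maximizer over subsets and over $\alpha$ is the same $S^{\ast}$, and every statement about $S^{\ast}$ proven for $F^{NA}$ carries over verbatim. Finally, for the interpretive restatement I would substitute $r_{\text{mle}}(x)=\beta_{\text{mle}}(x)-\alpha^{\ast}=\tfrac{N_{\alpha^{\ast}}(x)}{N(x)}-\alpha^{\ast}$, so that $\frac{2}{\eta}$-strength at $\alpha^{\ast}$ reads $\min_{x\in U_{X}(S^{T})}\!\bigl(\tfrac{N_{\alpha^{\ast}}(x)}{N(x)}-\alpha^{\ast}\bigr)>\frac{2}{\eta}\,\max_{x\notin U_{X}(S^{T})}\!\bigl(\tfrac{N_{\alpha^{\ast}}(x)}{N(x)}-\alpha^{\ast}\bigr)$, i.e.\ the claimed comparison of observed significant-$p$-value proportions between affected and unaffected profiles.

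The main obstacle is the $F^{BJ}$ case: one must identify exactly which regime of Lemma~\ref{lem:bj-max_to_mle} (i.e.\ which side of $\beta_{\text{mle}}=\tfrac12$ the critical profile $x_{(t+1)}$ sits on) keeps the closed-form $\delta$ from Theorem~\ref{thm:subset_strength} no larger than $2$, check that the chain of inequalities in that proof still closes when $R^{BJ}$ is only sub- or super-halving rather than exactly $r\mapsto r/2$, and keep straight the direction in which $R$ versus $R^{-1}$ is applied. A secondary bookkeeping point is the quantifier on $\alpha$: everything must be read at the maximizing $\alpha=\alpha^{\ast}$, consistently with the outer $\max_{\alpha}$ defining $F(S)$, since $\beta_{\text{mle}}(x)$, $r_{\text{mle}}(x)$ and hence the homogeneity and strength parameters all depend on which $\alpha$ is fixed.
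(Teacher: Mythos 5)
Your proposal is correct and follows essentially the same route the paper intends: specialize Theorem \ref{thm:subset_strength} with the closed-form $\delta = R^{-1}\left(r^{\text{unaff}}_{\text{mle}-h}\right)/r^{\text{unaff}}_{\text{mle}-h}$ exhibited in its proof, evaluate this as exactly $2$ for $\omega^{NA}$ via Lemma \ref{lem:na-max-to-mle}, restrict $\omega^{BJ}$ to the regime of Lemma \ref{lem:bj-max_to_mle} in which the ratio is at most $2$, transfer to the remaining goodness-of-fit statistics by Lemma \ref{lem:na_transform_max}, and unpack the definition of strength at $\alpha^{\ast}$ for the interpretive restatement. In fact you pin down the ``certain conditions'' for $F^{BJ}$ more carefully than the corollary's own wording: the monotonicity of $c$-strength requires $\delta \le 2$ at the critical unaffected profile, i.e.\ $\beta_{\text{mle}}(x_{(t+1)}) \ge \tfrac{1}{2}$, whereas the corollary cites the ``$\frac{R^{-1}(r)}{r} \ge 2$'' branch of Lemma \ref{lem:bj-max_to_mle}, which is the direction under which $\frac{2}{\eta}$-strength would \emph{not} imply $\frac{\delta}{\eta}$-strength.
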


\ignore{
\begin{thm}
  \label{thm:subset_exact}
  Under $H_1(S^T)$, where $S^T$ is a $t$-element subset of covariate profiles and for each element the true potential outcome distributions are unequal, $\exists~\nu,\delta > 1$ such that if the observed effect (as measured by $\omega$) across these covariate profiles is $\nu-homogeneous$ and $\frac{\delta}{\eta}-strong$, then $S^{\ast} = S^T$.
\end{thm}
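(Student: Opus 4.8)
The plan is to derive $S^{\ast}=S^T$ by sandwiching it between the two one-sided results already established: Theorem~\ref{thm:subset_homo} (which gives $S^{\ast}\supseteq S^T$ under $\nu$-homogeneity together with at least $1$-strength) and Theorem~\ref{thm:subset_strength} (which gives $S^{\ast}\subseteq S^T$ under $\frac{\delta}{\eta}$-strength). The one preliminary fact I would record is that $\eta\le 1$: since $U_{X}(S^T)\subseteq U_{X}(D)$ and every $N(x)\ge n>0$, the numerator sum in $\eta$ ranges over a subcollection of the nonnegative terms in its denominator. Hence $\frac{\delta}{\eta}\ge\delta>1$, so a configuration that is $\frac{\delta}{\eta}$-strong is in particular at least $1$-strong; equivalently, at the optimal threshold $\alpha^{\ast}$ we have $r^{\text{aff}}_{\text{mle}-l}>r^{\text{unaff}}_{\text{mle}-h}$, which is precisely the hypothesis both earlier proofs use to guarantee that the top-$t$ covariate profiles ranked by the priority function $G(x)=\beta_{\text{mle}}(x)$ are exactly those of $U_{X}(S^T)$.

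With this in hand, I would fix the pair $(S^{\ast},\alpha^{\ast})$ attaining the joint maximum $\max_{S,\alpha}F_{\alpha}(S)$, with induced optimal $\beta^{\ast}=\beta_{\text{mle}}(S^{\ast})$ at $\alpha^{\ast}$. Feeding the $\nu$-homogeneity and the derived $1$-strength of the observed effect at $\alpha^{\ast}$ into Theorem~\ref{thm:subset_homo} yields $S^{\ast}\supseteq S^T$; feeding the $\frac{\delta}{\eta}$-strength at $\alpha^{\ast}$ into Theorem~\ref{thm:subset_strength} yields $S^{\ast}\subseteq S^T$. Intersecting the two inclusions gives $S^{\ast}=S^T$, with $\nu$ and $\delta$ taken to be the constants those theorems supply (both equal to $2$ for $F^{NA}$, and, via Lemmas~\ref{lem:bj-max_to_mle} and~\ref{lem:na_transform_max}, the analogous statements holding for $F^{BJ}$ under its stated conditions and for the remaining goodness-of-fit statistics).

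The main obstacle is bookkeeping around $\alpha^{\ast}$ rather than any new estimate: the homogeneity and strength conditions concern the data-dependent quantities $r_{\text{mle}}(x)=\frac{N_{\alpha}(x)}{N(x)}-\alpha$, which depend on $\alpha$, and Theorems~\ref{thm:subset_homo} and~\ref{thm:subset_strength} are each argued at the optimal $\alpha^{\ast}$ of the joint maximization. I would make explicit that this is the \emph{same} $\alpha^{\ast}$ in both invocations --- it is forced, since $S^{\ast}$ is the maximizer of $F(S)=\max_{\alpha}F_{\alpha}(S)$ and $\alpha^{\ast}=\arg\max_{\alpha}F_{\alpha}(S^{\ast})$ is thereby pinned down --- and that the hypotheses of Theorem~\ref{thm:subset_exact} are to be understood as holding at that $\alpha^{\ast}$ (or, more safely, uniformly over $\alpha\in[\alpha_{\min},\alpha_{\max}]$, whence they hold at $\alpha^{\ast}$ a fortiori). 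Once the threshold is aligned, no further computation is needed; all the work is carried by the two earlier theorems and the elementary bound $\eta\le 1$.
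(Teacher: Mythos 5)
Your proposal is correct and follows essentially the same route as the paper, which likewise obtains $S^{\ast}=S^T$ by intersecting the inclusion $S^{\ast}\supseteq S^T$ from Theorem~\ref{thm:subset_homo} with $S^{\ast}\subseteq S^T$ from Theorem~\ref{thm:subset_strength}. Your additional observations --- that $\eta\le 1$ makes $\frac{\delta}{\eta}$-strength imply the $1$-strength hypothesis needed by Theorem~\ref{thm:subset_homo}, and that both invocations occur at the same optimal $\alpha^{\ast}$ --- are details the paper's two-line proof leaves implicit, and they tighten rather than alter the argument.
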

\begin{proof}
  \begin{alignat*}{3}
    \because~& & \nu-homogeneous \implies S^{\ast} &\supseteq S^{T} &\quad(\text{by Theorem } \ref{thm:subset_homo})\\
    \because~& & \frac{\delta}{\eta}-strong \implies S^{\ast} &\subseteq S^{T} &\quad(\text{by Theorem } \ref{thm:subset_strength})\\
    \therefore~& &S^{\ast} &= S^T
  \end{alignat*}
\end{proof}
}

It follows from the above corollaries that 2-homogeneity and $\frac{2}{\eta}$-strength are sufficient for $S^\ast = S^T$ with respect to $F^{NA}$; to $F^{BJ}$, under some conditions; and to the other score functions described above, by Proposition~\ref{prop:na_transform_max}.

\assymptseteq
\begin{proof} From Theorems~\ref{thm:subset_homo} and~\ref{thm:subset_strength}, there exist constants $\nu > 1$ and $\delta > 1$ such that, if the observed effect on $S^T$ is $\nu-homogeneous$ and $\frac{\delta}{\eta}-strong$, then $S^\ast = S^T$.  (For example, for the $F^{NA}$ score function, we have shown above that $\nu = \delta = 2$.)  We show that, for any $\nu > 1$ and $\delta > 1$, as $n \xrightarrow[]{} \infty$, the probability that the observed effect is $\nu-homogeneous$ goes to 1, and the probability that the observed effect is $\frac{\delta}{\eta}-strong$ goes to 1.  Thus, as $n \xrightarrow[]{} \infty$, the observed effect is $\nu-homogeneous$ and $\frac{\delta}{\eta}-strong$ with high probability for the specific $\nu$ and $\delta$ from Theorems~\ref{thm:subset_homo} and~\ref{thm:subset_strength}, and thus $P(S^\ast = S^T) \rightarrow 1$. 
   
As a first step, we show that $F^\ast = F(S^\ast)$ is maximized for some value $\alpha^\ast$ such that $\beta(\alpha^\ast) > \alpha^\ast$ and thus $\beta(\alpha^\ast)-\alpha^\ast > 0$. It follows from Lemma~\ref{lem:alt_converg} that $F_{\alpha}(S^\ast) \rightarrow \infty$ as $n\rightarrow \infty$ if $\beta(\alpha) > \alpha$. From Lemma~\ref{lem:null_converg} we know that $F_{\alpha}(S^\ast)$ is upper bounded by a constant as $n\rightarrow \infty$ if $\beta(\alpha) \le \alpha$, and this remains true when maximizing over the entire range of $\alpha$ values (i.e., under the null hypothesis).  Thus as $n\rightarrow \infty$, the maximum score must occur for some $\alpha^\ast$ with $\beta(\alpha^\ast) > \alpha^\ast$, and we assume this value of $\alpha$ for the remainder of the proof.
   
   Next, given $r_{\text{mle}}(x) = \beta_{\text{mle}}(x) - \alpha$, we show that, as $n\xrightarrow[]{}\infty$, $r_{\text{mle}}(x) \xrightarrow[]{} \beta(\alpha) - \alpha > 0$ for all $x \in U_{X}(S^T)$, and $r_{\text{mle}}(x) \xrightarrow[]{} 0$ for all $x \not\in U_{X}(S^T)$. Note that $\beta_{\text{mle}}(x) = \frac{N_{\alpha}(x)}{N(x)}$, where $N_{\alpha}(x) \sim \text{Binomial} (N(x),p)$ with $p = \beta(\alpha)$ for all $x \in U_{X}(S^T)$ and $p = \alpha$ for all $x \not\in U_{X}(S^T)$, according to $H_1(S^T)$. Therefore, by the law of large numbers we can see that $\beta_{\text{mle}}(x) \xrightarrow[]{} \beta(\alpha)$ for $x \in U_{X}(S^T)$ and $\beta_{\text{mle}}(x) \xrightarrow[]{} \alpha$ for $x \not\in U_{X}(S^T)$. \ignore{Since the $\omega$ function is concave with $\alpha$ as its smaller root, we have $\beta_{\text{mle}}(x) \geq \alpha$, and therefore, $r_{\text{mle}}(x) \xrightarrow[]{} 0^+$ for all $x \not\in U_{X}(S^T)$.}
   
   Next, we show $r^{\text{aff}}_{\text{mle}-l} \xrightarrow[]{} \beta(\alpha) - \alpha$ and $r^{\text{aff}}_{\text{mle}-h} \xrightarrow[]{} \beta(\alpha) - \alpha$ as $n \xrightarrow[]{} \infty$. Let $i$ be the index of elements in the set $U_{X}(S^T)$ and therefore $1 \leq i \leq t$. Therefore, we have $P \left( |r^{i}_{\text{mle}}(x) - \left(\beta(\alpha) - \alpha \right) |  > \epsilon \right) \xrightarrow[]{} 0$. Thus we can see that
   \begin{align*}
       P\left(| r^{\text{aff}}_{\text{mle}-h} - \left(\beta(\alpha) - \alpha \right) | > \epsilon\right)
       &= P \left( | \max_{1\leq i\leq t} \left(r^{i}_{\text{mle}}(x) - \left(\beta(\alpha) - \alpha \right)\right)|  > \epsilon \right) \\
       &\leq P \left( \max_{1\leq i\leq t} |\left(r^{i}_{\text{mle}}(x) - \left(\beta(\alpha) - \alpha \right)\right)|  > \epsilon \right)\\
       &= P \left(\bigcup_{i=1}^{t} \left\{ |r^{i}_{\text{mle}}(x) - \left(\beta(\alpha) - \alpha \right)|  > \epsilon \right\} \right)\\
       &\leq \sum_{i=1}^{t} P \left( |r^{i}_{\text{mle}}(x) - \left(\beta(\alpha) - \alpha \right) |  > \epsilon \right)\\
       &\xrightarrow[]{} 0.
   \end{align*}
   
   The last convergence is due to a fixed value of $t$. Using similar reasoning, we can show  $r^{\text{aff}}_{\text{mle}-l} \xrightarrow[]{} \beta(\alpha) - \alpha$ and therefore, $\frac{r^{\text{aff}}_{\text{mle}-h}}{r^{\text{aff}}_{\text{mle}-l}} \xrightarrow[]{} \frac{\beta(\alpha)}{\beta(\alpha)} = 1$ by Slutzky's theorem. That is, for any $\nu>1$, as $n \xrightarrow[]{} \infty$, the probability that $\frac{r^{\text{aff}}_{\text{mle}-h}}{r^{\text{aff}}_{\text{mle}-l}} < \nu$, and thus that the observed effect on $S^T$ is $\nu-homogeneous$, goes to 1.
   
   Next, we focus on the case where $\exists~ x \not\in U_X(S^T)$ s.t. $r_{\text{mle}}(x) >0$, since $S^\ast \subseteq S^T$ holds trivially for the case when $r_{\text{mle}}(x) \leq 0~\forall x \not\in U_X(S^T)$. We now show that for a fixed $M$, as $n \xrightarrow[]{} \infty$, $r^{\text{unaff}}_{\text{mle}-h} \xrightarrow[]{} 0^+$. Let $j$ be the index of elements in the set $x\notin U_{X}(S^T)$, therefore, $1\leq j \leq M-t$. We have $r^{j}_{\text{mle}}(x) \xrightarrow[]{} 0$, that is, $P( r^{j}_{\text{mle}}(x) >\epsilon) \xrightarrow[]{} 0$ for all $1\leq j \leq M-t$. Thus we can see that
   \vspace{-0.2cm}
   \begin{align*}
       P\left( r^{\text{unaff}}_{\text{mle}-h} > \epsilon \right)
       &=P\left( \max_{1\leq j \leq M-t} r^{j}_{\text{mle}}(x) > \epsilon \right) \\
       &= P \left( \bigcup_{j=1}^{M-t}\left\{ r^{j}_{\text{mle}}(x)  > \epsilon \right\} \right)\\
       &\leq \sum_{j=1}^{M-t} P \left( r^{j}_{\text{mle}}(x)  > \epsilon \right) \\
       &\xrightarrow[]{} 0.
   \end{align*}
   
   Again, the last convergence above is due to $M-t$ being fixed. Given that we have shown $r^{\text{unaff}}_{\text{mle}-h} \xrightarrow[]{} 0^+$, and $r^{\text{aff}}_{\text{mle}-l} \xrightarrow[]{} \beta(\alpha) - \alpha > 0$, this implies that $\frac{r^{\text{aff}}_{\text{mle}-l}}{r^{\text{unaff}}_{\text{mle}-h}} \rightarrow \infty$ as $n\xrightarrow[]{}\infty$.  Thus for any finite value of $\delta > 1$, and for $\eta = \frac{t}{M} > 0$, the probability that $\frac{r^{\text{aff}}_{\text{mle}-l}}{r^{\text{unaff}}_{\text{mle}-h}} > \frac{\delta} {\eta}$, and thus that the observed effect is $\frac{\delta} {\eta}-strong$, goes to 1 as $n\xrightarrow[]{} \infty$.
\end{proof}

\assymptTESS
\begin{proof}
We represent $S^T = v^{1T} \times \ldots \times v^{dT}$, where $v^{jT} \subseteq V^j$, and $S_0 = v_0^1 \times \ldots \times v_0^d$, where $v_0^j \subseteq V^j$. Assume without loss of generality that TESS optimizes over modes $\{1,2,\ldots,d\}$ in order to obtain $S_1 =  v_1^1 \times \ldots \times v_1^d$, where $v_1^j \subseteq V^j$, then optimizes over modes $\{1,2,\ldots,d\}$ in order again to obtain $S_2 =  v_2^1 \times \ldots \times v_2^d$, where $v_2^j \subseteq V^j$.  We will show that the following hold w.h.p.: (1) $S_1 \subseteq S^T$, i.e., $v_1^j \subseteq v^{jT} \: \forall j$; (2) $S_2 = S^T$, i.e., $v_2^j = v^{jT} \:\forall j$; and (3) $\hat S^\ast = S_2$.  

First, we consider the optimization over the first mode for $S_1$, starting from $S_0$ and thus finding the subset $S = v_1^1 \times v_0^2 \times \ldots \times v_0^d$ which maximizes $F(S)$ for fixed $v_0^2 \ldots v_0^d$. Consider the ``slices'' $x_m = \{v_m\} \times v_0^2 \times \ldots \times v_0^d$, where $v_m \in V^1$.  For $v_m \not\in v^{1T}$, we know that $x_m \cap S^T = \emptyset$, and thus $r_\text{mle}(x_m) \rightarrow 0$ as $n\rightarrow\infty$.  For $v_m \in v^{1T}$, $x_m \cap S^T$ includes some non-zero proportion $\rho_m$ of cells for which $r_\text{mle}(x) \rightarrow \beta(\alpha^\ast)-\alpha^\ast$ as $n\rightarrow\infty$, and thus $r_\text{mle}(x_m) \rightarrow \rho_m(\beta(\alpha^\ast)-\alpha^\ast)$ as $n\rightarrow\infty$.  This implies that the observed effect on the $x_m$ is $\frac{\delta}{\eta}$-strong w.h.p. for any $\delta > 1$ and thus for the specific $\delta$ from Theorem~\ref{thm:subset_strength}.  Hence $v_1^1 \subseteq v^{1T}$ w.h.p.  Identical logic can be used to show $v_1^j \subseteq v^{jT}$ w.h.p. for each $j$ from 2 to $d$ in turn, and thus $S_1 \subseteq S^T$ w.h.p.

Now we consider the optimization over the first mode for $S_2$, starting from $S_1$ and thus finding the subset $S = v_2^1 \times v_1^2 \times \ldots \times v_1^d$ which maximizes $F(S)$ for fixed $v_1^2 \ldots v_1^d$.  Consider the ``slices'' $x_m = \{v_m\} \times v_1^2 \times \ldots \times v_1^d$, where $v_m \in V^1$.  For $v_m \not\in v^{1T}$, we know that $x_m \cap S^T = \emptyset$, and thus $r_\text{mle}(x_m) \rightarrow 0$ as $n\rightarrow\infty$.  For $v_m \in v^{1T}$, since $v_1^j \subseteq v^{jT}$ for $j \in \{2,\ldots,d\}$, we know $x_m \subseteq S^T$, and thus $r_\text{mle}(x_m) \rightarrow \beta(\alpha^\ast)-\alpha^\ast$ as $n\rightarrow\infty$.  This implies that the observed effect on the $x_m$ is $\frac{\delta}{\eta}$-strong and $\nu$-homogeneous w.h.p. for any $\delta > 1$ and $\nu > 1$, and thus for the specific $\delta$ and $\nu$ from Theorems~\ref{thm:subset_homo} and~\ref{thm:subset_strength}.  Hence $v_2^1 = v^{1T}$ w.h.p. Identical logic can be used to show $v_2^j = v^{jT}$ w.h.p. for each $j$ from 2 to $d$ in turn, and thus $S_2 = S^T$ w.h.p.  

Finally, identical logic can be used to show w.h.p. that, for any mode $j$, $S_2$ is the subset $S= v^j \times v_2^{-j}$ which maximizes $F(S)$ for fixed $v_2^{-j}$. Thus no optimization over any mode can further increase $F(S)$, and $\hat S^\ast = S_2 = S^T$ w.h.p.
\end{proof}
\end{appendices}
\end{document}